\documentclass{article}

\usepackage{arxiv}
\usepackage{amsmath,amssymb}
\usepackage{amsthm}
\usepackage{algpseudocode}
\usepackage{enumitem}
\usepackage{comment}
\usepackage{caption}
\usepackage{subfigure}
\usepackage{subcaption}
\usepackage{algorithm}\usepackage{mathtools}
\usepackage[utf8]{inputenc}
\usepackage{graphicx}
\usepackage{bbm}
\usepackage[utf8]{inputenc} 
\usepackage[T1]{fontenc}    
\usepackage{hyperref}       
\usepackage{url}            
\usepackage{booktabs}       
\usepackage{amsfonts}       
\usepackage{nicefrac}       
\usepackage{microtype}      
\usepackage{lipsum}
\usepackage{graphicx}
\graphicspath{ {./images/} }

\newcommand\myeqa{\stackrel{\mathclap{\normalfont\mbox{optimize $Q_2$}}}{\leq}}
\newcommand\myeqaa{\stackrel{\mathclap{\normalfont\mbox{optimize $Q_1$}}}{\leq}}
\newcommand\myeqb{\stackrel{\mathclap{\normalfont\mbox{Lemma 1}}}{\leq}}
\DeclareMathOperator*{\argmax}{arg\,max}
\newcommand{\NB}{\text{NB}}

\newtheorem{theorem}{Theorem}
\newtheorem{corollary}{Corollary}[theorem]

\title{Flexible Empirical Bayesian Approaches  to Pharmacovigilance for Simultaneous Signal Detection and Signal Strength Estimation in   Spontaneous Reporting Systems Data\protect}

\date{}

\author{
 Yihao Tan \\
  Department of Biostatistics\\
  School of Public Health and Health Professions\\
  State University of New York at Buffalo\\
  Buffalo, New York, USA\\
   \And
 Marianthi Markatou\thanks{Joint senior and corresponding authors} \\
  Department of Biostatistics\\
  School of Public Health and Health Professions\\
  State University of New York at Buffalo\\
  Buffalo, New York, USA\\
  \texttt{markatou@buffalo.edu} \\
  \And
 Saptarshi Chakraborty$^*$ \\
  Department of Biostatistics\\
  School of Public Health and Health Professions\\
  State University of New York at Buffalo\\
  Buffalo, New York, USA\\
  \texttt{chakrab2@buffalo.edu} \\
}

\shortauthors{Tan, Markatou, and Chakraborty}

\begin{document}
\maketitle
\begin{abstract}
Inferring adverse events (AEs) of medical products from Spontaneous Reporting Systems (SRS) databases is a core challenge in contemporary pharmacovigilance. Bayesian methods for pharmacovigilance are attractive for their rigorous ability to simultaneously detect potential AE signals and estimate their strengths/degrees of relevance. However, existing Bayesian and empirical Bayesian methods impose restrictive parametric assumptions and/or demand substantial computational resources, limiting their practical utility. This paper introduces a suite of novel, scalable empirical Bayes methods for pharmacovigilance that utilize flexible non-parametric priors and custom, efficient data-driven estimation techniques to enhance signal detection and signal strength estimation at a low computational cost. Our highly flexible methods accommodate a broader range of data and achieve signal detection performance comparable to or better than existing Bayesian and empirical Bayesian approaches. More importantly, they provide coherent and high-fidelity estimation and uncertainty quantification for potential AE signal strengths, offering deeper insights into the comparative importance and relevance of AEs. Extensive simulation experiments across diverse data-generating scenarios demonstrate the superiority of our methods in terms of accurate signal strength estimation, as measured by replication root mean squared errors. Additionally, our methods maintain or exceed the signal detection performance of state-of-the-art techniques, as evaluated by frequentist false discovery rates and sensitivity metrics. Applications on FDA FAERS data for the statin group of drugs reveal interesting insights through Bayesian posterior probabilities.
\end{abstract}

\keywords{pharmacovigilance, scalable inference, empirical Bayes, medical product safety, spontaneous reporting systems data, FAERS}

\section{INTRODUCTION}\label{sec:intro}

Postmarket medical product safety assessment is a core challenge of contemporary pharmacovigilance. It focuses on the detection and analysis of adverse events (AEs) of medical products once they enter the market \cite{amery1999there}. This surveillance plays a major role in ensuring the ongoing safety and efficacy of medical products in real-world settings, as it can uncover AEs that may not have been evident during clinical trials, which are often limited by size, duration, and scope. Large Spontaneous Reporting Systems (SRS) databases established worldwide---such as the US Food and Drug Administration (FDA)'s Adverse Event Reporting System (FAERS) database---constitute key resources in this endeavor, curating data from both mandatory reports by pharmaceutical companies and voluntary submissions by healthcare professionals and patients.

However, SRS data are observational and present challenges such as underreporting of AEs, lack of proper controls, inaccuracies in measuring drug use, and the presence of selection bias and confounding \cite{markatou2014pattern}. These challenges hinder the ability to infer causal relationships between drugs and AEs from SRS data. Instead, pharmacovigilance commonly focuses on detecting AE signals using SRS data mining methods that analyze AE-drug associations. A common theme underlying many of these methods is to compare the observed frequencies ($\{O\}$) of specific AE-drug pairs against their \textit{null baseline expected} frequencies ($\{E\}$), where $E$ represents the theoretical expected count if there had been no association between the AE and the drug.  The analysis is performed either across all drugs and AEs in the database or within specific subsets, such as all reports from a particular year. AE-drug combinations with substantially higher than expected reporting rates (i.e., $O/E \gg 1$) are identified as potential AE safety signals in these methods.

Various SRS data mining methods for AE identification have been proposed over the past decades, including proportional reporting ratios (PRR) \cite{evans2001use}, reporting odds ratio (ROR) \cite{rothman2004reporting}, likelihood ratio test (LRT) based methods \cite{ding2020evaluation, huang2011likelihood, chakraborty2022use, zhao2018extended, huang2017zero}, and Bayesian methods \cite{huang2013likelihood, bate1998bayesian, dumouchel1999bayesian, hu2015signal}. Some of these approaches are heuristic, employing ad hoc thresholds directly for the $\{O/E\}$ values to identify AE-drug combinations as signals. More formal approaches parametrize the $\{O/E\}$ values, or some functions thereof, using probabilistic models for the observed report counts and suggest more principled signal determination based on hypothesis tests with controlled frequentist type I errors and false discovery rates, or on prespecified posterior probabilities of being a signal from an appropriately articulated Bayesian model for the data.

A central focus of these SRS data mining methods is the \textit{detection of signals}, which categorizes AE-drug pairs as either ``signals'' or ``non-signals.'' This includes the class of all LRT-based approaches that utilize frequentist uncertainty to rigorously identify AE signals. While identifying AE signals is critical, contemporary science and biomedicine increasingly recognize the limitations of a simple signal/non-signal dichotomy for statistical inference, which can obscure data nuances \cite{amrheinScientistsRiseStatistical2019, gelmanStatisticalCrisisScience2016, rothmanDisengagingStatisticalSignificance2016, wassersteinASAStatementPValues2016}. Transcending to a more comprehensive analysis that permits signal strength estimation and uncertainty quantification, in addition to signal detection, can provide substantially deeper insights, particularly in pharmacovigilance. First, the estimated signal strengths can quantify the relevance of specific AE-drug combinations rather than merely identifying the statistically significant combinations, thereby avoiding the "statistical significance filter" problem \cite{gelman2018failure, vasishthStatisticalSignificanceFilter2018, vanzwetSignificanceFilterWinner2021}. Second, inferred signal strengths can facilitate a coherent comparison of the relevance of multiple AEs across different drugs or data subsets (e.g., from different years). Third, the quantified uncertainties, such as interval estimates for the signal strengths, allow for a more nuanced assessment of the statistical guarantees of the results.

Bayesian approaches can be particularly effective for principled estimation and uncertainty quantification of signal strengths across all AE-drug combinations in SRS data. These methods utilize a prior distribution---a probability distribution defined over the model parameters describing the O/E ratios---and combine it with the information obtained solely from the data (the likelihood) to obtain the posterior distribution for the model parameters. The resulting posterior permits rigorous probabilistic inference on model parameters while coherently accounting for modeling uncertainty by formally combining the prior and the data likelihood.  It is important to note that specifying the prior distribution is a critical step in any Bayesian analysis that reflects the underlying considerations of the analysis \cite{efronBayesTheorem21st2013} (see Section \ref{sec:general-gamma} for a discussion on our approach to prior selection for our method). For example, \textit{objective priors} aim for purely data-driven inference analogous to a conventional frequentist inference \cite{welchFormulaeConfidencePoints1963, bergerCaseObjectiveBayesian2006}; \textit{subjective priors} incorporate existing and/or historical knowledge outside the data into the analysis \cite{yiBayesianModelChoice2003a, ohaganUncertainJudgementsEliciting2006, vieleUseHistoricalControl2014, ibrahimPowerPriorTheory2015a}; and \textit{data-informed priors}---which are influenced by specific aspects of the data such as sample size, measurement scale, or predictors in regression models---aim to improve model estimation in specific data analysis settings \cite{zellner1986assessing, piironenHyperpriorChoiceGlobal2017, gelmanPriorCanOften2017}. In the context of SRS data mining for pharmacovigilance, where the $O/E$ parameters are often high-dimensional due to the large number of possible AE-drug pairs, using prior distributions can yield shrinkage in the resulting Bayesian estimates for the parameters. Specifically, the analysis may produce an estimated $O/E$ parameter for each AE-drug combination that is influenced both by the observed counts for that pair and by a summarized aggregate of counts for all other AE-drug pairs. Since most AE-drug pairs are ``non-signals'' with O/E parameter values close to 1, the final estimates of the O/E parameters are typically shrunken towards 1 for pairs with relatively higher observed O/E values.

Extensive theoretical research over the past decades has provided strong formal justifications for shrinkage estimation in high-dimensional data-driven problems from both frequentist and Bayesian perspectives \cite{jamesEstimationQuadraticLoss1992, tibshiraniRegressionShrinkageSelection1996, parkBayesianLasso2008, carvalhoHorseshoeEstimatorSparse2010a,bhadraLassoMeetsHorseshoe2019a}. We provide a conceptual overview of shrinkage estimation in pharmacovigilance from a frequentist viewpoint, where shrinkage arises as a consequence of regularized estimation to enhance efficiency. For this, we leverage the frequentist replicability framework:  we consider the existence of unobserved ``true'' signal strength parameters in the population that can generate an infinite ensemble of random SRS datasets, with the observed dataset being one single sample from this infinite ensemble. We further assume that the ``true'' O/E ratios for a handful of AE-drug pairs (the "signals") exceed 1, while for the vast majority of pairs (the "non-signals"), they equal 1. However, due to frequentist sampling variability, some non-signal pairs may exhibit noisy observed counts in any individual SRS dataset. Indeed, as the total number (dimension) of AE-drug pairs increases, the probability of at least one pair exhibiting a large observed count---leading to a large (>1) O/E ratio despite an underlying ``true'' O/E value of 1---also increases. Additionally, some AE-drug pairs (unknown in advance) may be biologically or physically impossible to be reported together; these combinations, termed "structural zeros," are often modeled using zero-inflated count models, resulting in zero "true values" for the corresponding signal strength parameters.

To illustrate, Figure 1 shows a histogram of the observed O/E values from 46 relevant AEs across 6 statin drugs (Atorvastatin, Pravastatin, Simvastatin, Rosuvastatin, Lovastatin) obtained from the FDA FAERS database for the quarters of 2014Q3 - 2020Q4. The expected (E) values are calculated assuming independence between AEs and drugs. These empirical O/E values demonstrate a noisy pattern that can be stabilized through shrinkage estimation, but the overall distribution shows clear multi-modality for the O/E values. Specifically, we see (a) a clear peak around zero arising from the zero observed counts, (b) a distinct peak around 1 from the ``non-signal'' pairs, and (c) several other smaller peaks around values larger than one.

\begin{figure}[htp]
    \centering
    \includegraphics[width=14cm]{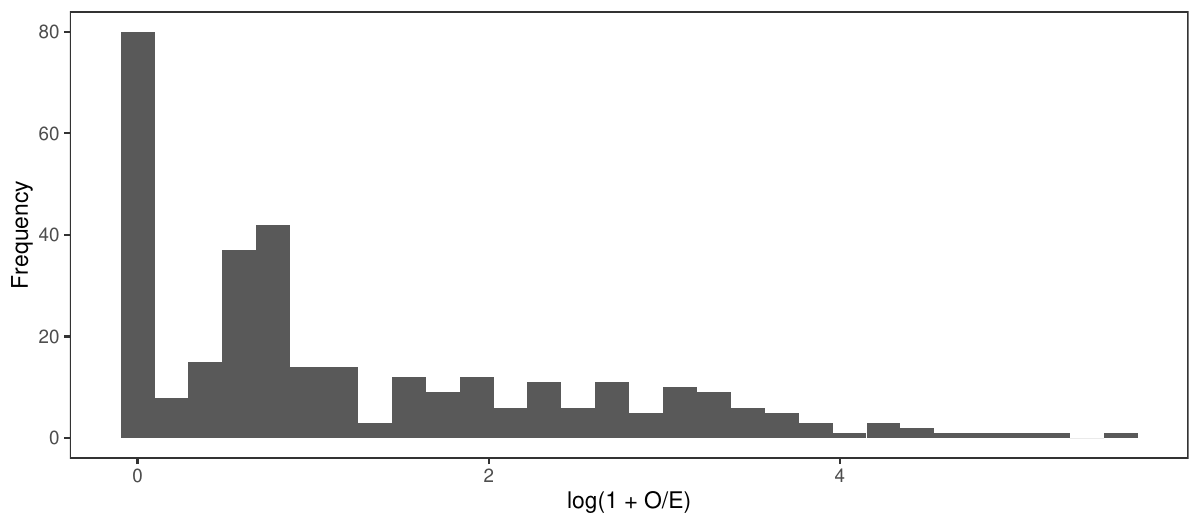}
    \caption{Histogram of ``observed'' O/E ratios for AEs associated with six statin drugs curated from the FDA FAERs database.}
\end{figure}

The discussion underscores the critical importance of selecting an appropriate prior distribution for O/E values in Bayesian SRS data mining for informed yet objective inference. Specifically, the prior distribution must meet several key criteria, as discussed below (see Section \ref{sec:np-empirical-bayes} for a detailed discussion). First, it should be sufficiently ``vague'' to reliably reflect the data-driven (likelihood) information in the posterior while providing enough structure to facilitate shrinkage for principled inference. Second, it must be flexible enough to accommodate multiple modes in the distribution of O/E values arising from the signal, non-signal, and possibly zero-inflation inducing ``structural zero'' AE-drug pairs. Finally, for formal inference on signal strengths, the prior distribution should jointly consider all AE-drug pairs for coherent uncertainty quantification for formal Bayesian inference. 

Although several Bayesian SRS data mining approaches currently exist, few meet all these criteria. Simpler approaches often lack the flexibility to capture multiple modes in the O/E value distribution (e.g., the Poisson-single-gamma model \cite{huang2013likelihood}; see Section \ref{sec:single-gamma} for a review) or do not account for the joint variability among all O/E values (e.g., the Bayesian confidence propagation neural networks (BCPNN) model \cite{bate1998bayesian}; Section \ref{sec:bcpnn}). More advanced methods like the gamma Poisson shrinker (GPS) \cite{dumouchel1999bayesian} and the multi-item gamma Poisson shrinker (MGPS) \cite{dumouchel2001empirical} use a more flexible two-component mixture of gamma distributions---one component for ``signals'' and another for ``non-signals''---as the prior. However, even this flexibility may be inadequate, as shown in our simulation experiments (Section \ref{sec:simulation}). Finally, fully non-parametric Bayesian approaches, such as Dirichlet process mixture models \cite{hu2015signal}, offer rigorous flexible inference but often rely heavily on model hyperparameters that are not straightforward to determine and require significant computational resources for implementation, limiting their scalability for large datasets. These challenges hinder the ability of these approaches to permit accurate inference on AE signal strengths, even if they aid accurate signal detection.

The remainder of the article is organized as follows. Section \ref{sec:our-contributions} summarizes our contributions, highlighting key novelties. Section \ref{sec:statin-data-description} describes our motivating dataset of AEs from statin drugs, used for both simulations and real data analysis. Section \ref{sec:review} reviews existing Bayesian and empirical Bayesian SRS data mining approaches. Section \ref{sec:np-empirical-bayes} introduces a general framework of non-parametric empirical Bayesian approaches for signal detection and signal strength estimation in SRS data mining, suggesting three prominent methods within this framework. Two of these are adapted from recent methodologies in fields outside pharmacovigilance, while the third is our original contribution, with a highly efficient expectation-conditional-maximization-based implementation strategy (Algorithm~\ref{alg:algorithm1}) with guaranteed convergence (Theorem~\ref{thm:thm1}). This section also proposes a novel estimator for the null baseline expected counts $\{E\}$, with proven superiority over existing estimators in certain settings (Theorem~\ref{thm2:Eestimator}).  Section \ref{sec:simulation} presents extensive simulation results comparing the proposed and existing Bayesian and empirical Bayesian SRS data mining methods across a range of settings based on the motivating dataset. Section \ref{sec:real-data-analysis} demonstrates the application of our approaches in real-world SRS data mining, comparing the results with existing methods. Finally, Section \ref{sec:discussion} concludes with a brief discussion of our contributions and future directions. Technical details, including derivations, theorem proofs, additional computational notes, and supplementary simulation results, are provided in the Appendix.

\subsection{Our Contributions} \label{sec:our-contributions}

In this paper, we introduce a suite of flexible empirical Bayesian approaches that meet all the criteria for prior distributions discussed above, while remaining computationally tractable and scalable for large datasets. Our primary approach, called the general-gamma mixture model, begins with a Poisson model layer and employs a novel and highly flexible non-parametric prior based on sparse finite mixtures of gamma distributions \cite{fruhwirth2019here} and a novel and more principled (compared to the state-of-the-art) estimate of the expected count, E, for the O/E parameter. Additionally, we adapt two other non-parametric empirical Bayes methods introduced in the last decade and propose their use in the current context for SRS data mining in pharmacovigilance: the Koenker-Keifer-Wolfowitz discrete mixture prior-based method \cite{koenker2014convex}, referred to as the KM model, and the Efron smooth $g$-mixture prior-based method \cite{efron2016empirical}, referred to as the Efron model.

Our approaches offer several desirable properties for practical use. First, as with most Bayesian pharmacovigilance approaches, they allow for intuitive and interpretable posterior-based probabilistic statements on signal positions and strengths. Second, the non-parametric nature of these methods, owing to the general mixture priors, ensures they can adaptively capture multiple modes for O/E values, allowing accommodation for non-signal, signal, and structural zero AE-drug combinations in the data. Consequently, there is no need for a separate zero-inflation component for structural zero report counts in the model; a mixture component in the prior distribution of O/E values around zero will coherently capture such zero-inflations. This adaptability allows our methods to handle a richer set of SRS data while still enabling regularized estimation of the model parameters for statistically stable inference. 

Third, the empirical Bayes aspect of our approach involves scalable, consistent point estimation of the mixture prior for efficient implementation, contrasting with a full Bayesian MCMC-based approach that requires substantially more computational power. Specifically, we develop a novel expectation-maximization (EM) algorithm to estimate the underlying sparse general-gamma mixture prior within an empirical Bayes framework. Simultaneously, we leverage existing non-parametric maximum likelihood estimation (NPMLE) and penalized likelihood maximization frameworks for the KM and Efron models, respectively. All three implementations are computationally efficient, in contrast to the existing non-parametric full Bayesian Dirichlet process mixture-based approach \cite{hu2015signal}, which is substantially more computationally demanding, as observed in our experiments. Our resulting empirical Bayes posterior maintains high accuracy for both signal detection—matching or surpassing competing approaches—and signal strength estimation, often outperforming all existing methods, including the full Bayesian Dirichlet process mixture approach. Fourth, the multi-modality in the estimated prior can propagate into the marginal posterior distribution of the signal strength of an individual AE-drug pair. This implies that, with high \textit{posterior} probability, the corresponding signal strength can take values in two or more distinct regions, such as those associated with structural zero and possibly signal or non-signal combinations. Consequently, our approach offers nuanced estimations of signal strengths for AE-drug pairs, enabling deeper insights.

In pharmacovigilance, there is a general wariness of Bayesian methods due to the potential bias introduced by informative priors in AE signal detection and strength estimation. To address this, we adopt a frequentist approach to evaluate our methods and compare them with existing Bayesian approaches. We theoretically demonstrate the superiority of our proposed estimator for the expected count from a frequentist mean squared error perspective. Additionally, we conduct extensive simulation experiments with replicated data generated from a wide range of ``true'' signal parameters. We then compare our methods to existing ones using frequentist metrics such as type I error, power, false discovery rate, sensitivity (for signal detection), and replication mean squared error (for signal strength estimation). To account for nuances like multimodality in the empirical Bayes posterior for O/E parameters, we use Wasserstein-2 distances (posterior mean squared errors) as a formal comparison metric between the computed posterior and the ``true'' signal strength parameters in each replicated dataset. The frequentist evaluation is then performed through replication-based averages of these Wasserstein-2 distances.

We note that while our approach has a distinct Bayesian feel, it also has a valid frequentist interpretation. Specifically, our method can be cast into the framework of $g$-modeling \cite{efron2014two}, which assumes a general mixture model for the observed AE-drug occurrence counts. In our approach, the observed counts are modeled using a mixture of Poisson kernels---analogous to marginal likelihood in Bayesian analysis---with flexible, non-parametric mixing densities (priors). The non-parametric mixing density is estimated using marginal maximum likelihood, and the resulting Poisson mixture model is then used for inference on the signal strength parameters. We believe this frequentist non-parametric model interpretation is particularly helpful in arguing the ``objectivity'' of our framework, which is also evidenced by our theoretical and extensive simulation results.

In addition to introducing the non-parametric empirical Bayes modeling framework for pharmacovigilance, we propose several improvements to existing approaches. Our contributions in this paper are:

\begin{enumerate}
  \item We introduce a suite of non-parametric empirical Bayesian approaches for pharmacovigilance. These methods are highly adaptable, enabling fully data-driven and computationally efficient implementation while producing interpretable inferences. The methods we discuss include KMs non-parametric discrete mixture \cite{koenker2014convex}, Efron's non-parametric $g$-mixture \cite{efron2016empirical}, and our proposed new general sparse gamma mixture models with an EM-like algorithm for fitting.
  
  \item We extend the existing MGPS method \cite{dumouchel1999bayesian} to a mixture of three gamma priors, with one component capturing structural zeros. We propose three different approaches to select initial values for the iterative optimization of MGPS and the extended MGPS model. We also provide a method to select appropriate grid values necessary to employ the KM model on pharmacovigilance SRS data.
  
  \item We suggest informative metrics to perform frequentist evaluation of Bayesian pharmacovigilance methods that acknowledge nuances in the Bayesian posterior distributions for the signal O/E parameters. Our approach employs the Wasserstein-2 distance to assess `error' in the Bayesian posterior relative to `true' values, and then uses frequentist sampling/replication averages of these distances as an informative metric of estimation error. The Wasserstein distance for our sparse general-gamma mixture method attains closed-form expressions, which we derive.
  
  \item We perform extensive simulations to compare our approaches against several other existing Bayesian and empirical Bayesian approaches. Our results suggest impressive signal detection performance---comparable to or better than all existing approaches---and signal estimation performance---notably superior to virtually all existing methods.
\end{enumerate}

\section{Motivating Dataset: statin drugs} \label{sec:statin-data-description}

Our motivating dataset used to exemplify the approaches discussed in this paper is based on SRS data for 6  statin drugs and $6,039$ AEs associated with these drugs.  Statins are a class of prescription drugs used in conjunction with diet and exercise to reduce levels of low-density lipoprotein (LDL) or ``bad'' cholesterol in the blood. The dataset focuses on the following six statin drugs (across columns): Atorvastatin, Fluvastatin, Lovastatin, Pravastatin, Rosuvastatin, and Simvastatin. In addition to these six drugs that are included in the dataset, a seventh statin drug, Cerivastatin, was previously available but was withdrawn from the United States market in 2001, and is thus not considered in our analyses. We focus on the statin dataset for several reasons:  1) statin drugs are widely prescribed; 2) statins are associated with a well-studied set of AEs linked with muscle disorders with varying severity\cite{LAW2006S52, wolfe2004dangers}; and  3) the dataset has been previously analyzed on multiple occasions to demonstrate the use of statistical pharmacovigilance and SRS data mining methods\cite{chakraborty2022use, ding2020evaluation, hu2015signal, huang2011likelihood}.  

In addition to the six statin drugs, the dataset also contains a reference category (column) of `Other drugs' obtained by collapsing drugs not belonging to the statin group as comparator. The AE-drug occurrences cataloged in the data are curated from  the FDA FAERS database for the quarters 2014Q3-2020Q4 with  $N_{\bullet \bullet} = 63,976,610$ total report counts. The dataset is available in the R package \texttt{pvLRT} \cite{pvlrt}.  In this paper, we focus on three subsets containing 42, 46, and 1,491 AEs (rows) of specific interest from all the 6,039 AEs stored in the full dataset; in each subset, we collapse the remaining AEs into a reference AE category (row) called `Other AEs.'  The resulting ``statin-42'' and ``statin-46'' datasets are provided in Appendix Section S6 in the form of contingency tables, while the ``statin1491'' dataset can be accessed from the package \texttt{pvLRT}\cite{pvlrt}.

\section{Notation and a Review of Existing Bayesian/empirical Bayesian Approaches to Pharmacovigilance}\label{sec:review}

This section introduces our notation and briefly reviews existing Bayesian and empirical Bayesian approaches to pharmacovigilance. We consider an SRS dataset cataloging reports on $I$ AEs across $J$ drugs. Let $N_{ij}$ denote the number of reported cases for the $i$-th AE and the $j$-th drug, where $i = 1, \dots, I$ and $j = 1, \dots, J$. We assume that the $J$-th drug serves as a reference drug/category---commonly referred to as ``Other drugs''---against which the preponderance of AEs in the remaining drugs ($j = 1, \dots, J-1$) is measured. Similarly, the $I$-th AE is assumed to correspond to a reference category termed ``Other AEs.'' These reference drug and AE categories often appear as natural comparators in SRS databases; if absent, they can be constructed by collapsing or grouping some of the existing AEs or drugs. We summarize these AE-drug pairwise occurrences in an $I \times J$ contingency table, where the $(i, j)$-th cell catalogs the observed count $N_{ij}$ arising from the $i$-th AE and the $j$-th drug. The row, column (marginal), and grand totals in the table are denoted as $N_{i\bullet} = \sum_{j=1}^J N_{ij}$ for $i=1,\dots,I$; $N_{\bullet j} = \sum_{i=1}^I N_{ij}$ for $j=1,\dots,J$; and $N_{\bullet \bullet} = \sum_{i=1}^I \sum_{j=1}^J N_{ij}$. Further, we denote by $E_{ij}$ the \textit{null baseline expected count}---the population-level occurrences of reports for the $(i, j)$ AE-drug pair---had there been no association or dependence between the pair $(i, j)$. In practice, $E_{ij}$ is commonly estimated/approximated by its natural estimator $N_{i \bullet} N_{\bullet j} / N_{\bullet \bullet}$, obtained through the marginal row and column proportions $N_{i \bullet} / N_{\bullet \bullet}$ and $N_{\bullet j} / N_{\bullet \bullet}$ along with the grand total $N_{\bullet \bullet}$.

Our base model assumption for the observed count $N_{ij}$ conditional on $E_{ij}$ is that
\begin{equation} \label{eqn:model-poisson-likelihood}
N_{ij} \mid E_{ij} \sim \operatorname{Poisson}(E_{ij}\lambda_{ij}),  
\end{equation}
where the parameter $\lambda_{ij} \geq 0$ represents the relative reporting ratio---\textit{the signal strength}---for the $(i, j)$-th pair, measuring the ratio of the actual expected count arising due to dependence to the null baseline expected count. Thus, $\{\lambda_{ij}\}$ are our key parameters of interest, with a large $\lambda_{ij}$ indicating a strong association between a drug and an AE. Formally, a AE-drug combination with a corresponding  $\lambda > 1$ is defined as a potential signal, while the combination is understood as a non-signal if $\lambda \leq 1$. Additionally, if $\lambda = 0$, the AE-drug combination is considered impossible to co-occur in the population and is deemed a \textit{structural zero} or a \textit{true zero}. This is different from an observed  $N_{ij} = 0$, which may occur with positive probability the under the above Poisson law even when $\lambda_{ij} > 0$.  Furthermore, the assumption regarding the baseline AE ($I$) and baseline drug level ($J$) implies that AE-drug combinations in the last row or last column of the contingency table are necessarily non-signals, with $\lambda_{iJ} = 1$ for $i = 1, \dots, I$ and $\lambda_{Ij} = 1$ for $j = 1, \dots, J$. 

In a Bayesian or empirical Bayesian framework, a prior distribution for the parameters $\{\lambda_{ij}\}$ is considered, and inference on $\{\lambda_{ij}\}$ for signal detection is made through the resulting posterior distribution of $\lambda_{ij}$ given the observed SRS data.  In this paper, we utilize the gamma distribution $\operatorname{Gamma}(\text{shape } = \ \alpha, \text{ rate =} \ \beta)$ with probability density function:
\[ 
f_{\text{gamma}}(x\mid \alpha,\beta) = \frac{\beta^{\alpha}}{\Gamma(\alpha)}x^{\alpha-1}\exp(-\beta x) \quad \text{for } x>0\ \ \alpha, \beta >0, 
\]
and mixtures of the gamma distributions to construct priors for $\lambda_{ij}$. The prior hyperparameters $\{\alpha\}$ and $\{\beta\}$ can either be pre-specified/elicited or can be inferred from the SRS data itself using point estimation (in an empirical Bayesian framework) or posterior approximation/drawing (in a full Bayesian framework) before evaluating the posterior distribution of $\lambda_{ij}$ given the observed SRS data. 

Signal detection subsequently relies on the evaluated posterior of $\{\lambda_{ij}\}$. Specifically, a pair $(i, j)$ is deemed as a signal if the corresponding posterior probability $\Pr(\lambda_{ij} \geq 1 + \varepsilon \mid \text{data})$ is obtained to be sufficiently high, e.g., being larger than $0.95$ for some small tolerance $\varepsilon > 0$ (we used $\varepsilon = 0.001$ in our numerical experiments).  Under an empirical Bayesian setup, a data-based \textit{point estimate} of the prior distribution (or equivalently, of the prior distribution hyperparameters) for  $\{\lambda_{ij}\}$  is obtained and used to derive an estimated posterior. By contrast, in a full Bayesian setup, the prior distribution hyperparameters for  $\{\lambda_{ij}\}$  are either specified beforehand, or their posterior distribution is also obtained alongside $\{\lambda_{ij}\}$, often using random simulation-based methods such as Markov chain Monte Carlo (MCMC). Below, we review existing Bayesian and empirical Bayesian frameworks for pharmacovigilance that use this or an equivalent framework.

\subsection{The single-gamma prior model}\label{sec:single-gamma}

Introduced by Huang et al. \cite{huang2013likelihood} as a simple benchmark for more complex Bayesian approaches, the single-gamma prior model assumes a common parametric gamma prior for all $\{\lambda_{ij}\}$: $\lambda_{ij} \sim \text{Gamma}(\alpha, \alpha)$, where $\alpha \in (0,1)$ is a hyper-parameter that can be either pre-specified or estimated from the data. When $\alpha$ is estimated, the joint variability of the $(N_{ij}, E_{ij})$ values across cells is moderately acknowledged, while pre-specifying $\alpha$ ignores this variability. Huang et al. \cite{huang2013likelihood} recommend setting $\alpha = 0.5$. The posterior distribution of $\lambda_{ij}$ given the data is:
$$\lambda_{ij} \mid N_{ij} \sim \text{Gamma}(\alpha + N_{ij}, \alpha + E_{ij}).$$
The gamma prior \textit{shrinks} the estimate of $\lambda_{ij}$ from its maximum likelihood estimate, $N_{ij}/E_{ij}$, reducing the risk of spurious identification of AE-drug combinations with low observed counts $N_{ij}$ and low expected counts $E_{ij}$. However, the parameter $\alpha$ plays a crucial role in determining the method's performance, particularly regarding the false discovery rate and sensitivity. A posterior probability-based false discovery adjustment method \cite{muller2006fdr} can be employed to control the false discovery rate. In our simulation in Section \ref{sec:simulation}, we incorporate this adjustment and compare the results with other methods.

\subsection{Bayesian Confidence Propagating Neural Network (BCPNN)}\label{sec:bcpnn}

Bate et al. \cite{bate1998bayesian} introduced the Bayesian Confidence Propagation Neural Network (BCPNN), utilizing the information-theoretic measure of mutual information\cite{bate1998bayesian} for AE signal detection. Unlike the relative reporting ratio-based Poisson model described above, this model is formulated using a binomial model. Let $p_{ij}$ denote the probability of the occurrence of the $(i, j)$-th cell, and let $p_{\bullet j}$ and $p_{i\bullet}$ be the marginal column and row probabilities, each with a Uniform(0, 1) prior distribution. The hierarchical model assumptions for the cell counts and marginal totals are: $N_{ij} \mid p_{ij} \sim \text{Bin}(N_{\bullet\bullet}, p_{ij})$, $N_{i\bullet} \mid p_{i\bullet} \sim \text{Bin}(N_{\bullet\bullet}, p_{i\bullet})$, $N_{\bullet j} \mid p_{\bullet j} \sim \text{Bin}(N_{\bullet\bullet}, p_{\bullet j})$; with the prior $p_{ij} \sim \text{Beta}(1,\beta_{ij})$. The prior parameters $\{\beta_{ij}\}$ are determined/estimated by setting the prior mean of $p_{ij}$ equal to the product of the posterior means of $p_{i\bullet}$ and $p_{\bullet j}$.

The strength of association between the $i$-th AE and $j$-th drug in this approach is quantified through the information component\cite{bate1998bayesian}---the term measuring the contribution of the $(i, j)$-th pair in the expression for mutual information between the AE and drug categories in the contingency table:
\[\text{IC}_{ij} = \log_2 \frac{p_{ij}}{p_{i\bullet} p_{\bullet j}}.\]
Inferences on $\{\text{IC}_{ij}\}$ are based on an asymptotic normal approximation of their posterior distributions, characterized by the estimated asymptotic mean and variance as follows\cite{bate1998bayesian}:
$$\hat{\text{E}}(\text{IC}_{ij}) \approx \log_2 \frac{(N_{ij}+1)(N_{\bullet\bullet}+2)^2}{(N_{\bullet\bullet}+\hat{\beta}_{ij})(N_{i\bullet}+1)(N_{\bullet j} +1)},$$
$$\hat{\text{Var}}(\text{IC}_{ij}) \approx \frac{1}{(\log 2)^2}\left(\frac{N_{\bullet\bullet}-N_{ij}+\hat{\beta}_{ij} -1}{(N_{ij}+1)(1+N_{\bullet\bullet}+\hat{\beta}_{ij})} + \frac{N_{\bullet\bullet}-N_{i\bullet}+1}{(N_{i\bullet}+1)(N_{\bullet\bullet}+3)}+\frac{N_{\bullet\bullet}-N_{\bullet j}+1}{(N_{\bullet j}+1)(N_{\bullet\bullet}+3)}\right).$$
Similar to the single-gamma model with a pre-specified $\alpha$, the posterior inference from BCPNN does not account for the joint variability among all O/E values. Moreover, the model may also fail to control the FDR unless a posterior probability-based adjustment, as suggested for the single-gamma prior model above, is employed. Our simulation experiments in Section \ref{sec:simulation} utilize this FDR adjustment.

\subsection{The two-component gamma (2-gamma) and two-component-gamma-zi (2-gamma-zi) mixture prior models}\label{sec:2-gamma}

The single-gamma model described above makes highly restrictive parametric assumptions that may lead to significant model violations in applications, resulting in poor performance. To increase flexibility, several generalizations have been proposed, the most notable being the Multi-item Gamma Poisson Shrinker (MGPS) model by DuMouchel (1999) \cite{dumouchel1999bayesian}. This model assumes a semi-parametric mixture of two gamma distribution components as a prior (called a ``2-gamma'' prior henceforth) for $\{\lambda_{ij}\}$:
\[
g(\lambda \mid \alpha_1, \beta_1, \alpha_2, \beta_2, \omega) = \omega f_{\operatorname{gamma}}(\lambda \mid \alpha_1, \beta_1) + (1-\omega)f_{\operatorname{gamma}}(\lambda \mid \alpha_2, \beta_2),
\]
where $\alpha_j > 0$, $\beta_j > 0$, for $j=1, 2$ are the mixture component-specific prior parameters, $\omega \in [0, 1]$ is the prior mixing weight, and $f_{\operatorname{gamma}}(\bullet \mid \alpha,\beta)$ is the probability density function of a gamma distribution with parameters $\alpha$ and $\beta$. As is commonly observed in many SRS databases, most AE-drug combinations $\{(i, j)\}$ are non-signals with little to no association---implying that the underlying values of the corresponding signal parameters $\lambda_{ij}$ are close to 1. In contrast, a few signal pairs exhibit substantially large associations with $\lambda_{ij} > 1$. This pattern can be well captured by the above mixture of two-component gamma distributions: one gamma component with a high mixing probability centering around the value 1 for the non-signal $\{\lambda_{ij}\}$ values,  while the other component describing all signal $\{\lambda_{ij} > 1\}$ values---provided there is not substantial heterogeneity in the $\lambda_{ij}$ values within each component. 

If there is zero inflation among the observed counts, potentially due to the presence of structural zero AE-drug pairs, they may be rigorously handled in this setup through a straightforward extension of the prior model. This entails creating an additional ``zero-inflation'' component $\lambda_{ij}$ to accommodate small $\lambda_{ij} \approx 0$ values: 
\[
g(\lambda \mid \alpha_1, \beta_1, \alpha_2, \beta_2, \omega_1, \omega_2) = \omega_1 f_{\operatorname{gamma}}(\lambda \mid \alpha_1, \beta_1) + \omega_2 f_{\operatorname{gamma}}(\lambda \mid \alpha_2, \beta_2) + 
(1 - \omega_1 - \omega_2) f_{\operatorname{gamma}, \operatorname{zi}}(\lambda \mid \alpha_{\text{zi}}, \beta_{\text{zi}}),
\]
We will call the resulting model a two-component gamma zero-inflation (''2-gamma-zi'') mixture prior model henceforth to distinguish from the 2-gamma prior model introduced above. In this model, the extra gamma component $f_{\operatorname{gamma}, \operatorname{zi}}$ is elicited with \textit{pre-specified} small shape $\alpha_{\text{zi}}$ and rate $\beta_{\text{zi}}$ parameters, such that the prior mean and variance of the component are both close to $0$, thus allowing the entire density to concentrate around zero.

An empirical Bayes approach is suggested in DuMouchel (1999) \cite{dumouchel1999bayesian} for model estimation.  This entails maximizing the marginal (i.e., $\lambda_{ij}$-integrated) likelihood of the prior parameters $\{\alpha_1, \beta_1, \alpha_2, \beta_2, \omega\}$ given observed counts $\{N_{ij}\}$ to produce their maximum marginal likelihood (point) estimates $\{\hat \alpha_1, \hat \beta_1, \hat \alpha_2, \hat \beta_2, \hat \omega\}$. Subsequently the \textit{estimated} posterior distribution of $\lambda_{ij}$ conditional on $N_{ij}$ and $\{\hat \alpha_1, \hat \beta_1, \hat \alpha_2, \hat \beta_2, \hat \omega\}$--which has an analogous gamma mixture shape--is obtained for inference (signal detection) on $\lambda_{ij}$. Under the Poisson model \eqref{eqn:model-poisson-likelihood}, the $\lambda_{ij}$-integrated marginal probability mass function of $N_{ij}$ is a mixture of two negative binomial distributions:
\[
p(N_{ij} = n_{ij} \mid \alpha_1, \beta_1, \alpha_2, \beta_2, \omega) 
= \omega f_{\text{NB}}(n_{ij} \mid \alpha_1,\beta_1, E_{ij}) + (1-\omega) f_{\text{NB}}(n_{ij} \mid \alpha_2,\beta_2, E_{ij}),
\]
where $f_{\text{NB}}(n \mid \alpha,\beta, E) = \frac{\Gamma(n+\alpha)}{\Gamma(\alpha)n!}\left( \frac{\beta}{E+\beta} \right) ^{n} \left(\frac{E}{E+\beta} \right) ^{\alpha}$; $n=1, 2, \dots$, is the negative binomial probability mass function. The parameters $(\alpha_1, \beta_1, \alpha_2, \beta_2, \omega)$ are estimated by maximizing their marginal likelihood:
\[
(\hat \alpha_1, \hat \beta_1, \hat \alpha_2, \hat \beta_2, \hat \omega) = \argmax_{\alpha_1, \beta_1, \alpha_2, \beta_2, \omega} \prod_{i}\prod_j p(N_{ij} = n_{ij} \mid \alpha_1, \beta_1, \alpha_2, \beta_2, \omega).
\]
Closed-form analytic expressions for $(\hat \alpha_1, \hat \beta_1, \hat \alpha_2, \hat \beta_2, \hat \omega)$ are not available; instead, numerical approaches are used for the maximization. The R package ``openEBGM'' \cite{openEBGM} provides an open-source implementation of the 2-gamma model. Implementation of the 2-gamma-zi model, which has a similar mixture negative binomial structure but includes a third ``zero-inflation'' component with an unknown mixing weight parameter (to be estimated) and pre-specified shape and rate parameters is operationally similar.

The 2-gamma and 2-gamma-zi models can permit flexible signal detection and signal strength estimation in applications, provided that the groups comprising the $\lambda_{ij}$-parameter values for the signal cells, non-signal cells, and structural zero cells (for the 2-gamma-zi model) are not too heterogeneous within themselves, allowing each group to be adequately explained by a single (separate) gamma distribution. Care is needed for effective numerical estimation of the prior parameters, as the objective function (marginal likelihood) described above is non-concave. Consequently, the final numerical estimates may heavily depend on their initial values; however, there is no clear method for selecting appropriate initial values for the parameter estimates, and ad hoc choices may lead to unstable/sub-optimal results, as we observed in our experimentation with the openEBGM package. To address this, we have developed two approaches to generate reasonable starting points for the above maximization based on the method of moment estimation and $k$-means clustering; see Appendix S2 for details.

\subsection{Hierarchical Dirichlet Process (HDP) model}\label{sec:HDP}


A final generalization of the two-component gamma mixture prior model described above was introduced by Hu et al.\cite{hu2015signal} in the framework of a non-parametric Bayesian model, leveraging a hierarchical Dirichlet process mixture of gamma distributions. The resulting model induces an infinite mixture of gamma component distributions for $\{\lambda_{ij}\}$ with appropriately articulated diminishing (under expectation) prior mixing weights over components. This allows the model to flexibly accommodate multiple subgroups or `subclusters' for the $\lambda_{ij}$ values---even within the signal, non-signal, and zero-inflation groups---while still keeping the total number of effective or non-empty components relatively small---in a rigorous data-driven way. The model\cite{hu2015signal} is described as follows. For each drug $j$,
\begin{gather*}
    \lambda_{ij}\mid G_j \sim G_j \equiv \operatorname{DP}(\rho_j, G_{0j}), \\
    G_{0j} \equiv \operatorname{Gamma}(\alpha_j, \alpha_j), \\
    \alpha_j \sim \operatorname{Uniform}(0,1), \\
    \rho_j\sim \operatorname{Uniform}(0.2,10).
\end{gather*}
Here $\operatorname{DP}(\rho_j, G_{0j})$ denotes a dirichlet process with precision parameter $\rho_j$ and baseline distribution $G_{0j}$ defined as a unit-mean gamma distribution of the form $\operatorname{Gamma}(\alpha_j, \alpha_j)$.  For implementation, the authors consider a stick-breaking representation of the Dirichlet process, which characterizes the prior $G_j$ as an infinite mixture of gamma distributions of the form:
\begin{gather*}
  G_j = \sum_{l=1}^{\infty} w_l \delta_{\{\theta_l\}}; \quad \theta_l\sim G_{0j};  \\
  w_1 = v_1,\ w_l = v_l \prod_{l'=1}^{l-1}(1 - v_{l'}) \ \text{with} \ v_l \sim \operatorname{Beta}(1,\rho_j) \ \text{for} \ l = 1, 2, \dots.
\end{gather*}
Here $\delta_{\{\theta\}}$ denotes the degenerate distribution with a point mass at $\theta$.  In applications, the above infinite mixture can be approximated by a finite mixture of $K$ components for some large $K$, which is the approach taken by Hu et al. For implementation of the model, the authors consider Markov chain Monte Carlo (MCMC) sampling from the posterior distribution. This enables flexible estimation of the model but requires substantially heavy computations whose convergence can be difficult to justify in applications. Our method proposed in the following section is inspired by this model; however, we leverage the empirical Bayes framework for implementation to aid substantial computation gains. Section \ref{sec:real-data-analysis} provides a comparison of computation times for various models, including the HDP model and our non-parametric empirical Bayes model.

\section{Flexible Non-Parametric Empirical Bayes Models for Pharmacovigilance} \label{sec:np-empirical-bayes}

This section introduces a non-parametric empirical Bayes framework to flexibly estimate $\{\lambda_{ij}\}$ within the Poisson model \eqref{eqn:model-poisson-likelihood}. As an extension of the parametric empirical Bayes, non-parametric empirical Bayes methods use priors that are not restricted to a specific parametric form. Instead, these priors are often represented as general mixtures of parametric distributions for model parameters and take a data-driven approach to estimating the prior from observed data. In the following, we describe a general structure for non-parametric empirical Bayes methods for the Poisson model \eqref{eqn:model-poisson-likelihood}.

Let $g$ be a prior density function for signal strength parameters for all AE-drug pairs: $\lambda_{ij} \sim g$. Then, in the context of the Poisson model \eqref{eqn:model-poisson-likelihood}, the marginal probability mass function of $N_{ij}$ is given by:
\[
p(N_{ij}) = \int_0^{\infty} g(\lambda_{ij}) \ f_{\text{pois}}(N_{ij} \mid \lambda_{ij}E_{ij}) \ d\lambda_{ij},
\]
where $f_{\text{pois}}(N \mid \lambda)$ is the probability mass function of a Poisson random variable with mean $\lambda$ evaluated at $N$. Under the empirical Bayes framework, the prior distribution is consequently estimated from the data by maximizing the log marginal likelihood:
\[
\hat g = \mathop{\arg \max}\limits_{g} \sum_{i=1}^I \sum_{j=1}^J \log p(N_{ij}).
\]
Then, the estimated posterior density of $\lambda$ given $N_{ij}$ is:
\[
\hat{\text{p}}(\lambda \mid N_{ij}) = \frac{\hat g(\lambda) f_{\text{pois}}(N_{ij} \mid \lambda E_{ij})}{\hat{\text{p}}(N_{ij})},
\]
where $\hat{\text{p}}(N_{ij}) = \int_0^{\infty} \hat g(\lambda_{ij})f_{\text{pois}}(N_{ij}\mid \lambda_{ij}E_{ij}) \ d\lambda_{ij}$.

We note that the discussion above makes no specific assumption about the structure of $g$. By letting $g$ have a flexible non-parametric structure---typically expressed as an arbitrary mixture of parametric densities---we may obtain rigorous (empirical) Bayesian inference on $\{\lambda_{ij}\}$ without the prior $g$ overshadowing the information provided by the SRS data table. The mixture structure of $g$ permits multiple modes/clusters in the underlying distribution and therefore can flexibly accommodate multiple subgroups of signal, non-signal, or zero-inflation $\{\lambda_{ij}\}$ values. Care is needed, however, in determining the parametric form for the mixture component densities and the number of mixture components. This is because, depending on the form of mixture density used, too large a number of mixture components would provide inadequate regularization needed for stable posterior estimation of $\{ \lambda_{ij} \}$ particularly in high dimensions (large $I$ and/or $J$; see Figure 1 and the associated discussion in Introduction); thus producing noisy inference. Below, we discuss four non-parametric empirical Bayesian methods with different choices of priors that appropriately balance this flexibility and regularization and are generally applicable to pharmacovigilance problems.

\subsection{The Koenker and Mizera (KM) approach} \label{sec:KM}

The Koenker and Mizera non-parametric empirical Bayes method (2014)\cite{koenker2014convex} for Poisson model \eqref{eqn:model-poisson-likelihood}

assumes the prior $g$ for $\{\lambda_{ij}\}$ to have a finite discrete support of size $K$, $\lambda \in \{ v_1,...,v_K\}$, $K<\infty$, with associated prior probability masses: $\{ g_1,...,g_K\}$ with $g_k \geq 0$; $\sum_{k=1}^K g_k = 1$. No further assumption is made on $\{g_k\}$. Under this prior assumption, the marginal probability mass for the $(i,j)$-th observation is obtained as:
\[
f_{ij} = \sum_{k=1}^K f_{\text{pois}}(N_{ij} \mid  v_k E_{ij}) \times g_k.
\]
For empirical Bayes estimation, the parameters $\{ g_1, \dots, g_K \}$ are estimated through the following constrained maximization problem:
\[
\max \left\{ \sum_{i=1}^I\sum_{j=1}^J\log(f_{ij}): \quad f_{ij} = \sum_{k=1}^K f_{\text{pois}}(N_{ij}\mid  v_k E_{ij})\times g_k,\quad g_k \geq 0, \quad \sum_{k=1}^K g_k=1\right \}.
\] 
This is a non-convex optimization problem\cite{kiefer1956consistency} that does not have a guaranteed global solution and is computationally highly inefficient. However, Koenker and Mizera\cite{koenker2014convex} identify and leverage the convexity of its dual problem, and propose an efficient estimation strategy using existing convex optimization software (Mosek\cite{mosek}).

The resulting non-parametric empirical Bayes approach for the Poisson model \eqref{eqn:model-poisson-likelihood} is referred to as the KM approach hereinafter. This method is highly flexible; however, its performance depends on the grid values $\{ v_1, \ldots, v_K \}$, which are not straightforward to determine a priori. In our applications, we employ a histogram-based approach to generate reasonable grid values based on the computed $\tilde{\lambda}_{ij} = \max \{N_{ij}/E_{ij}, \epsilon\}$, for some small, prespecified $\epsilon > 0$. Specifically, given a grid size $K$, we derive a histogram estimator based on the computed $\{\tilde{\lambda}_{ij}\}$ and generate a grid $\{ v_1, \ldots, v_K \}$ by random sampling from the histogram estimator (performed on the log-transformed $\{\tilde{\lambda}_{ij}\}$). This approach generates a reasonable grid to use as the discrete support for the prior $g$. However, the grid size $K$ also needs to be judiciously chosen. A small $K$ would concentrate a significant mass of the estimated prior $\hat{g}$ on only a handful of grid points, causing the posteriors of $\{\lambda_{ij}\}$ to also be concentrated around these grid values. While this may still allow adequate signal discovery in some applications, inference on $\{\lambda_{ij}\}$ could suffer due to large jumps in their estimated posterior distribution functions. As a remedy, a large grid size of $K$ should be used, yet there is no general way to determine the optimal size of $K$. Additionally, when $K$ is very large, the computational demands for the optimization may become extremely high, rendering the implementation practically challenging, if not infeasible.

\subsection{The Efron approach} \label{sec:efron}

Similar to KM, the Efron non-parametric empirical Bayes\cite{efron2016empirical} approach also considers a finite discrete support of size $K$: $\lambda \in \{ v_1, \ldots, v_K \}$, $K < \infty$. However, the associated prior probability masses are assumed to have an exponential form: 
\[
g = g(\alpha) = \exp\{ Q \alpha - \phi(\alpha) \},
\]
parametrized by a $p$-dimensional parameter vector $\alpha$, a known $K \times p$ structure matrix $Q$, and an appropriately determined normalizing constant $\phi > 0$ that makes $g$ a proper mass function. The default choice for $Q$ in Efron\cite{efron2016empirical} is considered to be a natural spline for $\{ v_1, \ldots, v_K \}$ with $p$ degrees of freedom. The exponential structure in $g$ helps stabilize its estimation by reducing the problem to that of estimating $\alpha$ from its ridge-penalized log marginal likelihood. Specifically,  
\[
\hat \alpha = \argmax_\alpha \left\{ \sum_{i=1}^I\sum_{j=1}^J \log P_{ij}^T g(\alpha) - c_0\left( \sum_{l=1}^p \alpha_l^2\right)^{1/2} \right\},
\]
where $P_{ij} = (f_{\text{pois}}(N_{ij} \mid v_k E_{ij}): k= 1, \dots, K)$. This optimization problem can be easily solved by a gradient-based approach; Efron suggests a Fisher-scoring type approach\cite{efron2016empirical}. Similar to KM, the Efron approach also requires appropriate specification of the grid values $\{ v_1, \ldots, v_K \}$ as the support of $g$. In our applications, we consider the same histogram-based grid generation technique as discussed in section \ref{sec:KM}.

\subsection{The $K$-component gamma mixture prior ($K$-gamma) model with a pre-specified $K$} \label{sec:K-gamma}
As an alternative formulation avoiding the discrete nature of the priors considered in the KM and the Efron approaches, we may consider a continuous prior density $g$ defined by a mixture of $K$ continuous densities for a prespecified $K \geq 3$. Retaining the analogy with the 2-gamma model (see Section \ref{sec:2-gamma}), we use component-specific gamma densities. The consequent mixture density is defined as:
\[
g(\lambda\mid R, H, \Omega) = \sum_{k=1}^K \omega_k f_{\text{gamma}} \left(\lambda\mid \alpha = r_k, \beta = \frac{1}{h_k}\right),
\]
where $\Omega = \{\omega_1, \dots, \omega_K\}$, $R = \{r_1, \dots, r_K\}$, and $H = \{h_1, \dots, h_K\}$ are component-specific parameters. The derivation of the marginal likelihood and the posterior distribution is analogous to the 2-gamma and 2-gamma-zi prior models that we discussed in Section \ref{sec:2-gamma}. The resulting approach will be called the $K$-gamma mixture prior modeling approach henceforth. 

Introducing $K \geq 3$ gamma mixture components in the prior model for $g$ enhances its flexibility to capture complex structures in the SRS data more faithfully, as compared to the 2-gamma and 2-gamma-zi prior models, provided an adequate number of components $K$ is used. Specifically, $K$ needs to match the number of distinct sub-groups/sub-clusters present in the 'true' prior distribution of the $\{\lambda_{ij}\}$ when explained as a mixture of gamma distributions. An incorrect $K$ may lead to underfitting or overfitting, making the model either too restrictive or too noisy, respectively. In applications where some knowledge is available on $K$, this approach may be used. When $K$ is entirely unknown---as often is the case in practice---the general-gamma mixture model, as proposed below, should be used.

\subsection{Proposed approach: the general-gamma mixture prior model with an improved estimator of null baseline expected counts} \label{sec:general-gamma}

We make a two-fold improvement over the $K$-gamma mixture prior model with a pre-specified $K$ described in Section~3.3 above by first adaptively determining a reasonable $K$ from the data and then proposing and utilizing an improved estimator of the null baseline counts $\{E_{ij}\}$. Specifically, we first extend the framework of the $K$-gamma mixture prior model to handle situations where no information on the number of components $K$ is available a priori and thus must be inferred from the data. To achieve this, we employ the framework of sparse finite mixture models \cite{fruhwirth2019here}\cite{malsiner2016model}\cite{malsiner2017identifying} that determines an optimal number of \textit{active} mixture components in a data-driven way. In particular, it begins with an overfitted mixture model with a large $K$---e.g., set to $100$ in our computations---and assigns an appropriately defined Dirichlet hierarchical prior distribution on the mixture weight parameters: $\Omega = \{\omega_1, \dots, \omega_K\} \sim \text{Dirichlet}(\alpha,\alpha, \dots, \alpha)$ with $\alpha < 1$. 

This specific choice of $\alpha \in (0, 1)$ for the Dirichlet prior asymptotically (in $IJ$) ensures that the consequent fitted mixture model would be sparse---with many \textit{inactive} components (with zero mixture weights) and will have the number of \textit{active} components (with positive mixture weights) equal to the number of distinct mixture components in the population. More precisely, under certain regularity conditions, the sparse overfitted mixture is known to asymptotically converge to the ``true'' population mixture (Rousseau and Mengersen (2011)\cite{rousseau2011asymptotic}) provided the hyperparameters in the Dirichlet prior $\alpha$ are smaller than $d/2$, where $d$ is the dimension of the component-specific parameter $\{R, H \}$ ($d = 2$ in our setting). Because the initial $K$ can be very large in practice, a customized estimation strategy is needed for a successful implementation of the model. We propose a bi-level expectation conditional maximization (ECM \cite{meng1993maximum}) algorithm for an efficient implementation in Section \ref{sec:general_gamma_implementation} below.    

Additionally, we propose (Section \ref{sec:estimating-Eij}) an improved estimator of the null baseline expected count parameter $E_{ij}$ and employ it in our model. To our knowledge, all existing approaches consider $E_{ij}$ to be a known value fixed at its natural estimator $N_{i \bullet} N_{\bullet j}/N_{\bullet \bullet}$ and treat it as a pre-specified offset term in the Poisson model. While this may permit reasonable signal discovery (detecting if $\lambda_{ij} > 1$), we show that by treating $E_{ij}$ as a parameter and subsequently using an improved estimate thereof, substantial gain may be achieved for estimating $\lambda_{ij}$, particularly when some of the underlying $\lambda_{ij} \gg 1$. Our estimator stems from a more controlled variance-bias trade-off consideration and exploits the assumed structure of the reference AEs (rows) and reference drugs (columns), each of which is postulated to be a non-signal, i.e., has $\lambda = 1$.

\subsubsection{Implementation via a novel bi-level Expectation Conditional Maximization (ECM) algorithm} \label{sec:general_gamma_implementation}

Under the Poisson model (\ref{eqn:model-poisson-likelihood}), and given the mixture weight parameters $\Omega = \{\omega_1,...,\omega_K\}$, the marginal probability mass function for $N_{ij}$ has the following mixture of negative binomial shape:
\[
p(N_{ij} \mid \omega_1, \dots, \omega_K) = \sum_{k = 1}^K \omega_k f_{\NB}\left(N_{ij} \mid r_k, \ \theta_{ijk} = \frac{1}{1 + E_{ij} h_k}\right).
\]
We provide a fast and stable two-level ECM algorithm to estimate the parameters $\{ \omega_k, r_k, h_k\}$ by maximizing their joint marginal likelihood. Our algorithm leverages two different types of data augmentation in its two levels to aid the ECM estimation. The first level augments latent mixture component indicators as auxiliary data: $S = \{S_{11}, S_{12}, \dots, S_{IJ}\}$, $S_{ij} = (S_{ijk}: k = 1, \dots, K)$ with $S_{ijk} = 1$ if observation $(i, j)$ belongs to component $k$ and is $0$ otherwise, to estimate $\{ \omega_k, r_k \}$. The second level utilizes a Poisson-logarithmic series representation of negative binomial random variables\cite{quenouille1949relation} and employs augmented variables $M$ and $Y$ denoting latent Poisson variables and logarithmic series elements, respectively, along with $S$ to estimate $\{r_k\}$. Together, these two data augmentation steps produce two separate expected complete data log-likelihood (``$Q$'') functions, which we use to update $\{\omega_k, r_k\}$ and $\{h_k\}$ respectively for iterative conditional maximization (``CM steps'' of the algorithm). A common expectation (``E'') step is employed to iteratively update the two complete data log-likelihood functions. Algorithm \ref{alg:algorithm1} below shows the steps involved. Detailed derivations for these steps are provided in Appendix S4.  Similar to an ordinary single-level ECM algorithm\cite{meng1993maximum}, our proposed bi-level ECM algorithm is guaranteed to converge under standard regularity conditions; we formally establish this in Theorem \ref{thm:thm1} below.

\begin{algorithm}[ht] 
\caption{The $(u+1)$-th iteration of the proposed bi-level ECM algorithm for implementation of the general-gamma model \eqref{sec:general-gamma}.} \label{alg:algorithm1}
\begin{algorithmic}
\Require Current iteration of $\phi^{(u)} = \{ \Omega^{(u)}, R^{(u)},H^{(u)}\}$. Do for all $k = 1, \dots, K$: \\
\textbf{E step}: Compute $\tau_{ijk}^{(u+1)} = E(\mathbbm{1}_{\{ S_{ijk}=1\}}\mid N,\phi^{(u)})$, $\delta_{ijk}^{(u+1)}=E(M_{ij}\mid S_{ijk}=1,\phi^{(u)})$. \\
\textbf{CM step 1}: Given $h^{(u)}_k$, update $\omega^{(u+1)}_k$ and $r^{(u+1)}_k$ as follows: \\
$$\omega_k^{(u+1)} = \max\left\{0,\frac{\alpha-1 + \sum_{i=1}^I\sum_{j=1}^J\tau_{ijk}^{(u+1)}}{I*J+K(\alpha-1)}\right\}$$
$$r_k^{(u+1)} = \frac{\sum_{i=1}^I\sum_{j=1}^J\tau_{ijk}^{(u+1)}\delta_{ijk}^{(u+1)}}{\sum_{t=1}^T\tau_{ijk}^{(u+1)}\log\theta_{ijk}^{(u)}}$$
\textbf{CM step 2}: Given $r^{(u+1)}_k$ update $h^{(u+1)}_k$ by solving the following equation:
\[
\sum_{i=1}^I\sum_{j=1}^J \tau_{ijk}^{(u+1)}\left[ \frac{N_{ij}}{h_k} - \frac{E_{ij}(N_{ij}+r_k^{(u+1)})}{1+E_{ij} h_k}\right] = 0.
\]
A simple iterative process for solving this equation is provided in Appendix S3.1. 
\end{algorithmic}
\end{algorithm}

\begin{theorem} \label{thm:thm1}
    Algorithm~\ref{alg:algorithm1} is guaranteed to converge to a (local) maxima of the marginal likelihood for $\{\omega_k, r_k, h_k: k=1, \dots, K\}$ for any $K$ and any Dirichlet prior hyper-parameter $\alpha \in (0, 1)$. 
\end{theorem}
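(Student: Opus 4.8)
The plan is to recognize Algorithm~\ref{alg:algorithm1} as an instance of the Expectation Conditional Maximization (ECM) framework of Meng and Rubin, and to verify that each step satisfies the hypotheses of the general ECM convergence theorem, adapted to account for the fact that our algorithm uses \emph{two different data augmentations} across its two CM steps (hence ``bi-level''). The key object is the observed-data objective $L(\phi) = \sum_{i,j} \log p(N_{ij} \mid \Omega) + \log \pi_{\text{Dir}}(\Omega)$, i.e., the log marginal likelihood plus the log of the $\mathrm{Dirichlet}(\alpha,\dots,\alpha)$ prior on $\Omega$ (with $\alpha \in (0,1)$, the ``$\alpha - 1 < 0$'' term explaining the $\max\{0, \cdot\}$ truncation in CM step 1). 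I want to show that $L(\phi^{(u+1)}) \geq L(\phi^{(u)})$ for every iteration, with equality only at stationary points, and then invoke the standard argument (Zangwill / Wu) that a monotone bounded-above sequence produced by a closed point-to-set map converges to the set of stationary points of $L$.

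First I would set up the two complete-data representations explicitly. Level one augments the component labels $S = \{S_{ijk}\}$ and gives a complete-data log-likelihood $L_c^{(1)}(\Omega, R \mid S)$ whose expected value under $\phi^{(u)}$ is a ``$Q_1$'' function; CM step 1 maximizes $Q_1$ over $(\Omega, R)$ with $H$ held fixed, yielding the closed-form updates stated. Level two additionally augments the Poisson--logarithmic-series variables $(M, Y)$ via the Quenouille representation of the negative binomial, producing a second expected complete-data log-likelihood ``$Q_2$''; CM step 2 maximizes $Q_2$ over $H$ with $(\Omega, R)$ held at their freshly updated values, which reduces to the root-finding equation displayed in the algorithm. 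The crucial point for monotonicity is the ``$Q$-function domination'' identity: for each augmentation scheme, the expected complete-data log-likelihood is a minorizer of $L$ up to an additive entropy term that is maximized at the current iterate, by the nonnegativity of the Kullback--Leibler divergence between the conditional distributions of the latent data at $\phi^{(u)}$ versus at the candidate $\phi$. Concretely, $L(\phi) - L(\phi^{(u)}) \geq Q_m(\phi \mid \phi^{(u)}) - Q_m(\phi^{(u)} \mid \phi^{(u)})$ for $m = 1, 2$, so that increasing either $Q$-function (even along a restricted coordinate block) cannot decrease $L$. Chaining the two CM steps — CM step 1 increasing $L$ via $Q_1$, then CM step 2 increasing $L$ further via $Q_2$ starting from the output of CM step 1 — gives the overall ascent property.

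The remaining ingredients are more routine. I would check that each CM step genuinely maximizes (not merely does not decrease) its conditional objective: CM step 1 is the maximization of a Dirichlet-penalized multinomial log-likelihood in $\Omega$ (a concave problem on the simplex, solved by the stated normalized-and-truncated formula) together with a separable maximization in each $r_k$ that is strictly concave in $\log$-scale, so the stated update is the unique maximizer; CM step 2 maximizes a one-dimensional strictly concave function of $h_k$ for each $k$ (the negative-binomial-mixture complete-data likelihood in the rate parameter), whose unique stationary point solves the displayed equation, and I would note that the iterative scheme of Appendix~S3.1 converges to it. I would also verify the ``space-filling'' condition of ECM — that the union of the directions feasible in CM step 1 and CM step 2 spans the full parameter space $\{(\Omega, R, H)\}$ — which here is immediate because CM step 1 moves $(\Omega, R)$ and CM step 2 moves $H$, a complementary partition. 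Finally, boundedness above of $L$ (the marginal likelihood is a probability, and the Dirichlet density with $\alpha \in (0,1)$ is integrable though unbounded, but the penalty $\log \pi_{\text{Dir}}$ only \emph{helps} since weights are bounded away from the problematic boundary after the $\max\{0,\cdot\}$ projection, or one argues on the compactified parameter set) together with continuity of the ECM map gives convergence of $\{L(\phi^{(u)})\}$ and, by Wu's theorem, convergence of the iterates to the set of stationary points, i.e., local maxima generically.

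The main obstacle I anticipate is the bi-level bookkeeping: the standard ECM theorem assumes a \emph{single} complete-data model underlying all CM steps, whereas here CM step 1 and CM step 2 use genuinely different augmentations with different $Q$-functions. The careful part is therefore establishing that both $Q_1$ and $Q_2$ are valid minorizers of the \emph{same} observed-data objective $L$ at the \emph{same} current iterate $\phi^{(u)}$ — which hinges on both augmentations being ``compatible'' in the sense that marginalizing out the extra latent variables recovers the identical observed-data likelihood $p(N_{ij} \mid \Omega)$ — and that the E step (computing the shared quantities $\tau_{ijk}^{(u+1)}$ and $\delta_{ijk}^{(u+1)}$) correctly furnishes the conditional expectations needed for \emph{both} $Q$-functions. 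Once this compatibility is nailed down, the monotonicity chaining and the appeal to Wu's global convergence theorem are standard. A secondary technical point worth a remark is the $\max\{0, \cdot\}$ truncation in the $\omega_k$ update: this is exactly the KKT solution of the constrained concave maximization (Dirichlet-penalized multinomial on the simplex with $\alpha < 1$), so it does not break the ``CM step attains the conditional maximum'' requirement — it \emph{is} the conditional maximum.
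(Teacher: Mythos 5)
Your proposal follows essentially the same route as the paper's proof: both establish monotone ascent of the observed-data objective by showing that each of the two data augmentations yields a valid minorizer of the same marginal log-likelihood at the current iterate (equivalently, via the identity $Q(\phi'\mid\phi)=L(\phi')+H(\phi'\mid\phi)$ together with Jensen's inequality for the $H$-functions), chaining the two CM steps, and concluding convergence from monotonicity and boundedness. One minor bookkeeping difference: in the paper the $r_k$ update in CM step 1 is derived from the Poisson--logarithmic-series augmentation ($Q_2$, via the $\delta_{ijk}$ terms) while the $h_k$ update in CM step 2 comes from the plain label augmentation ($Q_1$) --- the reverse of your assignment --- but this does not alter the structure of the argument.
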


\begin{proof}
We only provide a brief outline of the proof here; detailed arguments are provided in Appendix S3.2 We show that the expected conditional log-likelihoods $Q_1$ and $Q_2$ for the two levels both monotonically increase as the iteration progresses for any $K$ and $\alpha \in (0, 1)$. From this, it follows (Appendix Lemmas 1 and 2) that the target log marginal likelihood for the parameters $\{\omega_k, r_k, h_k: k=1, \dots, K\}$ given the observed data also increases as iteration progresses, which completes the proof. 
\end{proof}

\paragraph{Initialization of model parameters and selection of hyperparameters} 

Care is needed for the selection of the initial values of the model parameters to ensure reasonable convergence. We initialize our algorithm with an over-fitted model with a large $K$; e.g., $K = 100$ in our examples with $IJ ~ 400$. Owing the Dirichlet prior structure for $\omega_k$ with concentration parameter $0 < \alpha < 1$, certain mixture components start to vanish (i.e., get mixture weights equal to zero) as iteration progresses and, upon convergence, produce a sparse fitted mixture model. Initialization for parameters $\{ r_k, h_k: k=1, \dots, K\}$ utilizes the grid value generating process that we proposed for the discrete non-parametric empirical Bayes methods (KM and Efron). Specifically, given an initial large $K$, we generate the grid values $\{v_1, \dots, v_{K}\}$. We then choose $\{r_k\}$ and $\{h_k\}$ such that the consequent gamma component distributions individually concentrate around each of these grid values $\{v_k\}$. This is achieved by setting $r_k h_k = v_k$ and $r_k h_k^2 = \epsilon$ for some small $\epsilon > 0$ such as $\epsilon = 10^{-6}$. For $\{ \omega_k\}$ we consider a uniform initialization, i.e., $\{ \omega_k = 1/K: k=1,\dots, K\}$. 

We note that the model can be sensitive to the choice of the Dirichlet hyperparameter $\alpha$: specifically, the smaller $\alpha$ is, the fewer the number of non-empty components in the fitted mixture. An appropriate sensitivity analysis is therefore needed to select a reasonable $\alpha$. We suggest using the approximate leave-one-out cross-validation information criterion (LOOIC \cite{vehtari2017practical}) to determine an optimal $\alpha$ for estimation. This entails fitting the model with several different choices of $\alpha$, then evaluating LOOIC for each fit, and finally selecting the model with an optimal LOOIC; see Appendix S3.3 for more details.

\subsection{Improved estimation of the expected null baseline count $E$} \label{sec:estimating-Eij}

We note that the null baseline expected count $E_{ij}$ under the independence of AE-$i$ and drug-$j$ is a model parameter whose natural estimator $\hat E_{ij} = N_{\bullet \bullet} \frac{N_{i\bullet}}{N_{\bullet \bullet}}\frac{N_{\bullet}j}{N_{\bullet \bullet}}= \frac{N_{i\bullet} N_{\bullet j}}{N_{\bullet \bullet}}$ is used as a plug-in value in virtually every SRS data mining method\cite{huang2013likelihood, chakraborty2022use, huang2017zero, hu2015signal, huang2011likelihood, dumouchel1999bayesian, dumouchel2001empirical}. This natural estimator can be derived as the maximum likelihood estimator of $E_{ij}$ under a variety of model assumptions on $N_{ij}$ under the assumption of independence of the AEs and the drugs, including the Poisson model \eqref{eqn:model-poisson-likelihood} when all $\lambda_{ij} = 1$. While this simple estimator can provide reasonable signal detection, as we show in this section, it does not always yield good estimation of $\lambda_{ij}$, particularly when the underlying $\lambda_{ij}$ are ``large'' (much larger than 1; see Theorem \ref{thm2:Eestimator}). In such cases, $\hat E_{ij}$ becomes highly biased for estimating $E_{ij}$, leading to high mean squared error. Instead, we propose an improved estimator of $E_{ij}$: 
\[
\tilde E_{ij} = \frac{N_{iJ}N_{Ij}}{N_{IJ}},
\]
which aids a better control over the bias-variance tradeoff, leading to a smaller mean squared error than that of $\hat E_{ij}$. 

Utilizing independence, we write $E_{ij} = \tilde N p_{i\bullet}^* p_{\bullet j}^*$, where $p_{i\bullet}^*$ and $p_{\bullet j}^*$ are the marginal probabilities of AE-$i$ drug-$j$, respectively, and $N_{\bullet \bullet}$ is the total number of report counts (grand total) in the observed SRS dataset. With this form for $E_{ij}$ and conditional on $N_{\bullet \bullet}$, the Poisson model \eqref{eqn:model-poisson-likelihood} produces the following conditional multinomial model for $N_{ij}$:
\begin{equation} \label{eqn:multinomial_model}
    N_{ij} \mid N_{\bullet \bullet} \sim \text{Multinomial}\left(N_{\bullet \bullet}, p_{ij} = \frac{\lambda_{ij}p_{i\bullet}^*p^*_{\bullet j}}{\sum_{k=1}^I\sum_{l=1}^J \lambda_{kl}p_{k\bullet}^*p_{\bullet l}^*}\right).
\end{equation}

Using the asymptotic normality of the above multinomial distribution, straightforward analysis shows that the asymptotic mean squared error (AMSE) for $\hat E_{ij}/E_{ij}$ obtained by the delta method for large $N_{\bullet \bullet}$ is given by:
\begin{equation}  \label{eqn:AMSE_hat}
\text{AMSE}\left( \frac{\hat E_{ij}}{E_{ij}} \right)\\ 
= \frac{\bar \lambda_{i\bullet} \bar \lambda_{\bullet j}}{\bar \lambda_{\bullet \bullet}^4N_{\bullet \bullet}}\left[\bar \lambda_{\bullet \bullet}\left( 2\lambda_{ij}+ \frac{\bar \lambda_{i \bullet}}{p_{i\bullet}^*} + \frac{\bar \lambda_{\bullet j}}{p_{\bullet j}^*}\right) - 4\bar \lambda_{i \bullet}\bar \lambda_{\bullet j} \right] + \left[ \frac{\bar \lambda_{i \bullet}\bar \lambda_{\bullet j}}{\bar \lambda_{\bullet \bullet}^2} -1\right]^2,\\
\end{equation}
where $\bar \lambda_{i \bullet} = \sum_{l=1}^J\lambda_{il}p_{\bullet l}^*$, $\bar \lambda_{\bullet j} = \sum_{k=1}^I\lambda_{kj}p_{k\bullet}^*$ and $\bar \lambda_{\bullet \bullet} = \sum_{k=1}^I\sum_{l=1}^J\lambda_{kl}p_{k\bullet}^*p_{\bullet l}^*$. By the definition of $\bar \lambda_{\bullet \bullet}$, we have: $\bar \lambda_{\bullet \bullet} < \max\{\lambda_{ij}: i = 1,\dots, I-1; j = 1, \dots, J-1\} = \lambda_{\max} < \infty$. Similarly, the AMSE for the $\tilde E_{ij}/E_{ij}$ for large $N_{\bullet \bullet}$ is:
\begin{equation} \label{eqn:AMSE_tilde}
\text{AMSE} \left(\frac{\tilde E_{ij}}{E_{ij}} \right) =\frac{1}{N_{\bullet \bullet}}\left[\frac{1}{p_{I\bullet}^*p_{\bullet J}^*\bar \lambda_{\bullet \bullet}}+\frac{1}{p_{i\bullet}^*p_{\bullet J}^*\bar \lambda_{\bullet \bullet}}+\frac{1}{p_{I\bullet}^*p_{\bullet j}^*\bar \lambda_{\bullet \bullet}}-\frac{1}{\bar \lambda_{\bullet \bullet}^2} \right] +\left(\frac{1}{\bar \lambda_{\bullet \bullet}}-1\right)^2. \\
\end{equation}

The details of the steps in deriving these AMSEs are provided in Appendix S1. Theorem \ref{thm2:Eestimator} and the Corollaries below guarantee that when the total report counts $N_{\bullet \bullet}$ is large enough, ensuring the asymptotic properties hold reliably, the estimation error for $\tilde E_{ij}$ is smaller than that of $\hat E_{ij}$ provided at least one of the underlying $\lambda_{ij}$ values is sufficiently larger than one, which is precisely the situation that practical pharmacovigilance encounters.  When all $\lambda_{ij} = 1$ for all $i, j$, i.e., there is no signal, and all drugs and AEs are independent then $\text{AMSE}\left( \frac{\hat E_{ij}}{E_{ij}} \right)<\text{AMSE}\left( \frac{\tilde E_{ij}}{E_{ij}} \right)$.  However, this null baseline situation is rarely, if ever, expected to be observed in practice, and virtually all relevant large-scale SRS datasets are expected to harbor at least one signal, indicating potential superiority of the $\tilde E_{ij}$ estimator over $E_{ij}$ for such datasets. 

\begin{theorem} \label{thm2:Eestimator}
Suppose the following regularity conditions hold: 
\begin{enumerate}[label=(\Alph*)]
    \item The underlying data-generating model \eqref{eqn:model-poisson-likelihood} for the AE-drug report counts $N_{ij}$ stored in an $I \times J$ contingency table holds for some $1 \leq \lambda_{ij} < \infty$ for all $i = 1, \dots, I$, and $j = 1, \dots, J$. 
    
    \item There exist a comparator reference AE ($I$-th row) and a comparator reference drug ($J$-th column) such that $\lambda_{iJ} = 1$ for all $i = 1, \dots, I$ and $\lambda_{Ij} = 1$ for all $j = 1, \dots, J$.
    
    \item The total number of AEs $I$ and the total number of drugs $J$ are fixed and do not change with the grand total count $N_{\bullet \bullet}$.
    
    \item The smallest expected counts associated with the reference row and the reference column $E_{\min}$ is greater or equal to 4: $E(N_{iJ}) = E_{iJ} \geq 4$ for all $i= 1, \dots, I$ and $E(N_{Ij}) = E_{Ij} \geq 4$ for all $j = 1, \dots, J$. 
\end{enumerate}
Define
\[
A_{i^*j^*} =\left\{1 -  \sqrt{\left(1-\frac{1}{\bar \lambda_{\bullet \bullet}} \right)^2 + \left( \frac{1}{E_{i^* J}}+\frac{1}{E_{I j^*}}+\frac{1}{E_{IJ}}\right) \frac{1}{\bar \lambda_{\bullet \bullet}}}\right\} \bar \lambda_{\bullet \bullet}^2
\]
and 
\[
B_{i^*j^*} =\left\{1 +  \sqrt{\left(1-\frac{1}{\bar \lambda_{\bullet \bullet}} \right)^2 + \left( \frac{1}{E_{i^* J}}+\frac{1}{E_{I j^*}}+\frac{1}{E_{IJ}}\right) \frac{1}{\bar \lambda_{\bullet \bullet}}}\right\} \bar \lambda_{\bullet \bullet}^2
\] 
Then, for a \textit{non-reference} cell $(i^*,j^*)$, the estimation performance of $\tilde E_{i^*j^*}$ is better than $\hat E_{i^*j^*}$, i.e., 
\[
\text{AMSE}(\hat E_{i^* j^*}/E_{i^* j^*}) > \text{AMSE}(\tilde E_{i^* j^*}/E_{i^* j^*}),
\]
if
\begin{equation}\label{eqn:sufficient_condition_thm2}
    \begin{cases}
   \bar \lambda_{i^* \bullet} \bar \lambda_{\bullet j^*} \in (B_{i^*j^*}, +\infty) & \text{ when } A_{i^*j^*}\leq 1 \\
   \bar \lambda_{i^* \bullet} \bar \lambda_{\bullet j^*} \in (1, A_{i^*j^*})\cup (B_{i^*j^*}, +\infty) & \text{ when } A_{i^*j^*} > 1
\end{cases}.
\end{equation}

\end{theorem}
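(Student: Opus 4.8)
The plan is to collapse the comparison of the two AMSEs into a single scalar inequality by bounding each side crudely but in the correct direction: bound $\text{AMSE}(\hat E_{i^*j^*}/E_{i^*j^*})$ from below by its squared-bias term and $\text{AMSE}(\tilde E_{i^*j^*}/E_{i^*j^*})$ from above, and then observe that the resulting upper bound is exactly the quantity sitting under the radical in the definitions of $A_{i^*j^*}$ and $B_{i^*j^*}$. Write $L=\bar\lambda_{\bullet\bullet}$ and $x=\bar\lambda_{i^*\bullet}\bar\lambda_{\bullet j^*}$. Condition~(A) forces every $\lambda_{ij}\ge1$, so $\bar\lambda_{i^*\bullet}\ge1$, $\bar\lambda_{\bullet j^*}\ge1$, $L\ge1$, and in particular $x\ge1$; conditions~(B) and (C), together with the asymptotic normality of the multinomial model \eqref{eqn:multinomial_model}, are what Appendix~S1 uses to obtain the explicit expressions \eqref{eqn:AMSE_hat} and \eqref{eqn:AMSE_tilde}, which I take as the starting point.

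First I would simplify \eqref{eqn:AMSE_tilde}. Independence gives $E_{IJ}=N_{\bullet\bullet}p_{I\bullet}^*p_{\bullet J}^*$, $E_{i^*J}=N_{\bullet\bullet}p_{i^*\bullet}^*p_{\bullet J}^*$, and $E_{Ij^*}=N_{\bullet\bullet}p_{I\bullet}^*p_{\bullet j^*}^*$, so each of the first three summands of \eqref{eqn:AMSE_tilde} simplifies, e.g.\ $\frac{1}{N_{\bullet\bullet}}\cdot\frac{1}{p_{I\bullet}^*p_{\bullet J}^*L}=\frac{1}{E_{IJ}\,L}$. This produces the identity
\[
\text{AMSE}\left(\frac{\tilde E_{i^*j^*}}{E_{i^*j^*}}\right)=D-\frac{1}{N_{\bullet\bullet}L^{2}},\qquad D:=\left(1-\frac{1}{L}\right)^{2}+\left(\frac{1}{E_{i^*J}}+\frac{1}{E_{Ij^*}}+\frac{1}{E_{IJ}}\right)\frac{1}{L},
\]
and $D$ is exactly the expression inside the square roots defining $A_{i^*j^*}$ and $B_{i^*j^*}$. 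Since the subtracted term is strictly positive, $\text{AMSE}(\tilde E_{i^*j^*}/E_{i^*j^*})<D$. Note also that $D>0$ always, and that condition~(D) (each reference expected count is at least $4$, so that the three reciprocals sum to at most $1$) forces $D<1$, hence $0<A_{i^*j^*}<L^{2}<B_{i^*j^*}$; condition~(D) is also what keeps the delta-method expansions of Appendix~S1 reliable.

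Next I would bound \eqref{eqn:AMSE_hat} from below. Its first, order-$N_{\bullet\bullet}^{-1}$, summand is the delta-method asymptotic variance of $\hat E_{i^*j^*}/E_{i^*j^*}$ and is therefore nonnegative, so discarding it leaves
\[
\text{AMSE}\left(\frac{\hat E_{i^*j^*}}{E_{i^*j^*}}\right)\ \ge\ \left(\frac{\bar\lambda_{i^*\bullet}\bar\lambda_{\bullet j^*}}{\bar\lambda_{\bullet\bullet}^{2}}-1\right)^{2}=\left(\frac{x}{L^{2}}-1\right)^{2}.
\]
Combining the last two displays, $\text{AMSE}(\hat E_{i^*j^*}/E_{i^*j^*})>\text{AMSE}(\tilde E_{i^*j^*}/E_{i^*j^*})$ holds whenever $\left(x/L^{2}-1\right)^{2}>D$ (because $D>\text{AMSE}(\tilde E_{i^*j^*}/E_{i^*j^*})$), i.e.\ $|x/L^{2}-1|>\sqrt{D}$, i.e.\ $x>L^{2}(1+\sqrt{D})=B_{i^*j^*}$ or $x<L^{2}(1-\sqrt{D})=A_{i^*j^*}$.

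Finally I would intersect this with the feasibility constraint $x\ge1$ and split on the sign of $A_{i^*j^*}-1$ (both $A_{i^*j^*}$ and $B_{i^*j^*}$ are positive by condition~(D)). If $A_{i^*j^*}\le1$, the branch $x<A_{i^*j^*}$ cannot hold for $x\ge1$, so only $x\in(B_{i^*j^*},\infty)$ survives; if $A_{i^*j^*}>1$, both branches survive and yield $x\in(1,A_{i^*j^*})\cup(B_{i^*j^*},\infty)$, which is exactly \eqref{eqn:sufficient_condition_thm2}. There is no real obstacle once \eqref{eqn:AMSE_hat} and \eqref{eqn:AMSE_tilde} are in hand; the only point one must get right is the \emph{direction} of the two (deliberately lossy) bounds and the recognition that the upper bound on $\text{AMSE}(\tilde E_{i^*j^*}/E_{i^*j^*})$ is literally the radicand of $A_{i^*j^*}$ and $B_{i^*j^*}$ — discarding the entire variance of $\hat E_{i^*j^*}$ and the negative $-1/(N_{\bullet\bullet}L^{2})$ term is precisely what strips $A_{i^*j^*}$ and $B_{i^*j^*}$ down to a dependence only on $L$ and the three reference-cell expected counts, and is also why \eqref{eqn:sufficient_condition_thm2} is sufficient but, in general, far from necessary.
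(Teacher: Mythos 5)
Your proposal is correct and is essentially the paper's own argument in a different wrapper: discarding the asymptotic variance of $\hat E_{i^*j^*}/E_{i^*j^*}$ and the negative $-1/(N_{\bullet\bullet}\bar\lambda_{\bullet\bullet}^{2})$ term of $\text{AMSE}(\tilde E_{i^*j^*}/E_{i^*j^*})$ is exactly the paper's decomposition of the scaled difference into the always-positive part $s_{i^*j^*1}$ and the quadratic $s_{i^*j^*2}$ in $x=\bar\lambda_{i^*\bullet}\bar\lambda_{\bullet j^*}$, and your inequality $(x/\bar\lambda_{\bullet\bullet}^{2}-1)^{2}>D$ is algebraically identical to the paper's condition $s_{i^*j^*2}(x)>0$ with roots $A_{i^*j^*}$ and $B_{i^*j^*}$. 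The only cosmetic difference is that you phrase it as one-sided bounds on each AMSE rather than a sign analysis of their difference.
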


\begin{proof}
We only provide a brief outline here; detailed steps are given in Appendix section S1.1. We consider the scaled difference $S_{i^* j^*} \equiv S_{i^* j^*}(\lambda_{i^* j^*}, \bar \lambda_{i\bullet}, \bar \lambda_{\bullet j^*}, \bar \lambda_{\bullet \bullet})$  between the asymptotic MSE of $\hat E_{i^* j^*}/E_{i^* j^*}$ and $\tilde E_{i^* j^*}/E_{i^* j^*}$. Then, we decompose $S_{i^* j^*}$ into two parts and show that the first part is always positive, and through involved analysis, establish that the second part is also positive when the sufficient condition \eqref{eqn:sufficient_condition_thm2} holds.  
\end{proof}

Theorem \ref{thm2:Eestimator} guarantees that under mild regularity conditions (A)-(D) the proposed estimator $\tilde{E}_{i^* j^*}$ for the null baseline counts outperforms the natural estimator $\hat{E}_{i^* j^*}$ for non-reference AE-drug pairs $(i^*, j^*)$ with AEs $\{i^*\}$ (and/or drugs $\{j^*\}$) that, on average, have ``large'' signal strengths $\{\bar{\lambda}_{i^* \bullet}\}$ (resp., $\{\bar{\lambda}_{\bullet j^*}\}$) across all drugs (resp., AEs), such that the products $\{\bar{\lambda}_{i^* \bullet} \bar{\lambda}_{\bullet j^*}\}$ are also ``large'' (as determined by the sufficient condition \eqref{eqn:sufficient_condition_thm2}). Clearly, this condition is satisfied for at least one $(i^*, j^*)$ in situations where the overall average signal strength $\bar{\lambda}_{\bullet \bullet}$ for the entire table is large, implying the superiority of $\tilde{E}$ over $\hat{E}$ in such scenarios. In Appendix S1.2, we provide a few corollaries based on Theorem \ref{thm2:Eestimator} and the sufficient condition \eqref{eqn:sufficient_condition_thm2} to further elucidate situations with guaranteed superiority of $\tilde E$ over $\hat E$.

We emphasize the sufficient nature of the condition \eqref{eqn:sufficient_condition_thm2}: $\tilde{E}_{i^* j^*}$ is guaranteed to outperform $\hat{E}_{i^* j^*}$ if \eqref{eqn:sufficient_condition_thm2} holds; however, $\tilde{E}_{i^* j^*}$ may still perform as well as, or even better than, $\hat{E}_{i^* j^*}$ in applications where \eqref{eqn:sufficient_condition_thm2} does not hold. Indeed, under simulations emulating real SRS data (see Section \ref{sec:computed_AMSE}), we demonstrate that there is virtually no efficiency loss---and even some potential improvement---in using $\tilde{E}_{i^* j^*}$ instead of $\hat{E}_{i^* j^*}$ in situations that violate \eqref{eqn:sufficient_condition_thm2}, while there can be clear and significant efficiency gains when \eqref{eqn:sufficient_condition_thm2} does hold. In practice, formal verification of \eqref{eqn:sufficient_condition_thm2} is not feasible, as it involves the underlying unknown parameters $\{\lambda_{ij}\}$, $\{p_{i\bullet}^*\}$, and $\{p_{\bullet j}^*\}$. However, the mild nature of the regularity conditions (A)-(D), coupled with our simulation-based results (Section~\ref{sec:computed_AMSE}), provides substantial confidence in using $\tilde{E}$ as a general-purpose replacement for $\hat{E}_{i^* j^*}$ across a wide range of applications.

\section{Simulation Results} \label{sec:simulation}

This section compares the performances of the proposed non-parametric Bayesian approaches against the existing Bayesian/empirical Bayesian approaches reviewed in Section \ref{sec:review} in both signal discovery and signal strength estimation under extensive, controlled simulations, to provide strong frequentist validation for our methods.  We considered our motivating statin dataset described in Section~\ref{sec:statin-data-description} with $I = 43$ AEs---including a reference AE ``other AEs''---and $J=7$ drugs---including a reference drug ``other drugs''. Random datasets similar to this motivating statin dataset in terms of the row (AE) and column (drug) marginal total counts were generated. However, the  \textit{true} relative reporting ratio parameter values $\{\lambda_{ij}\}$ for each generated dataset were fully specified and controlled.  In particular, we used a multinomial-based random data generating process (see Algorithm 2 below) given a  ``true'' signal strength matrix $((\lambda_{ij}^{\text{true}}))$ with the convention that $\lambda_{ij}^{\text{true}} > 1$ for all \textit{true signal} cells $\{(i, j)\}$, $\lambda_{ij}^{\text{true}} = 1$ for all \textit{true non-signal} cells, and $\lambda_{ij}^{\text{true}} = 0$ for all \textit{true structural zero cells} (if present).  We considered three separate settings with different levels of heterogeneity among the signal cell $\lambda_{ij}^{\text{true}}$ values: (I) homogeneous signals: all signal cell $\{\lambda_{ij}^{\text{true}}\}$ are the same, (II) moderately heterogeneous signals: $50\%$ of all signal cell $\{\lambda_{ij}^{\text{true}}\}$ are fixed at a specific value and the remaining $50\%$ are fixed at another specific value (different from the first $50\%$), and (III) highly heterogeneous signals: all signal cell $\{\lambda_{ij}^{\text{true}}\}$ are fixed at different values. For setting I, we varied the number of true signals to be 1, 3, and 6; for settings II and III, the number of true signal cells was fixed at 6 and 12, respectively. Positions for the signal cells excluded the reference ($I$-th) row and the reference ($J$-th) column; the specific signal positions employed in the three simulation settings are detailed in Appendix Tables 6. 

A range of specific values for the signal cell $\lambda_{ij}^{\text{true}}$ was considered in each setting/number of signal cell combinations. In setting I, each individual signal $\lambda_{ij}$ was varied in $\{1.2, 1.4, 1.6, 2.0, 2.5, 3.0, 4.0\}$. In setting II, three signal cell $\lambda_{ij}^{\text{true}}$ were fixed at 2, and the remaining 3 signal cell $\lambda_{ij}$ were varied in $\{1.2, 1.4, 1.6, 2.0, 2.5, 3.0, 4.0\}$. Finally, in setting 3, all 12 signal cells $\lambda_{ij}$ were set at separate, distinct values; see Appendix Table 6 for details. 

For each choice of $((\lambda_{ij}^{\text{true}}))$ matrix derived from the process described above under settings I and II, three levels of structural zero inflation were considered: (a) no zero-inflation: the $\{\lambda_{ij}^{\text{true}}\}$ values obtained through the process above were kept unaltered, (b) moderate zero-inflation: $25\%$ of all $\{\lambda_{ij}^{\text{true}}\}$ were replaced by $0$ to be the true structural zeros, and (c) high zero-inflation: $50\%$ of all $\{\lambda_{ij}^{\text{true}}\}$ were replaced to be structural zeros. For (b) and (c), the structural zero positions were distributed randomly among the non-signal cells (with $\{\lambda_{ij}^{\text{true}} = 1\}$) \textit{outside} of the reference row $I$ and the reference column $J$.  Afterward, the structural zero positions were kept fixed prior to generating $M=1000$ replicated datasets for the setup using the multinomial-based random data generation described in Algorithm \ref{alg:multinomial_data_generation}.

\begin{algorithm}
\caption{Multinomial AE-drug report count data $\{N_{ij}\}$ generation process}  \label{alg:multinomial_data_generation}
\begin{algorithmic}
\Require A signal strength matrix $((\lambda_{ij}: i = 1, \dots, I; j = 1, \dots, J))$,  an exemplar dataset $((N_{ij}: i = 1, \dots, I; j = 1, \dots, J))$, and a structural zero probability $\omega$.

\Statex

\State 1. Compute the grand total $\tilde N_{\bullet \bullet} = \sum_{i=1}^I\sum_{j=1}^J \tilde N_{ij}$, row totals $\{\tilde N_{i\bullet} = \sum_{j=1}^J \tilde N_{ij}: i = 1, \dots, I\}$, and column totals $\{\tilde N_{\bullet j} = \sum_{i=1}^I \tilde N_{ij}: j = 1, \dots, J\}$. Also compute the corresponding row and column marginal proportions $p_{i \bullet}$ and $p_{\bullet j}$ where $p_{i \bullet}$ $(p_{1\bullet}^*, p_{2\bullet}^*, \dots, p_{I \bullet}^*)$, where $p_{i \bullet}^* = \tilde N_{i \bullet}/\tilde N_{\bullet \bullet}$ and $p_{\bullet j}^* = \tilde N_{\bullet j}/ \tilde N_{\bullet \bullet}$ .

\State 2. Generate structural zero position indicators $z_{ij} \sim \text{Bernoulli}(\omega)$ such that $z_{ij} = 1$ implies cell $(i, j)$ is a structural zero.

\State 3. Compute the cell probabilities $P = (p_{11}, p_{12}, \dots, p_{IJ})$ such that:
\[
p_{ij} = \frac{(1-z_{ij})\lambda_{ij}p_{i\bullet}^*p_{\bullet j}^*}{\sum_{i=1}^I\sum_{j=1}^J (1-z_{ij})\lambda_{ij}p_{i\bullet}^*p_{\bullet j}^*}.
\]

\State 4. Generate a random table $((N_{ij}))$ with $(N_{11}, N_{12}, \dots, N_{IJ}) \sim \text{Multinomial}(\tilde N_{\bullet \bullet}, P)$.
\end{algorithmic}
\end{algorithm}

For each of the above $85$ total simulation scenarios---$63$ from setting I, $21$ from setting II, and $1$ from setting III (see Table \ref{table:simulation-setting-1} and \ref{table:simulation-setting-2})---each characterized by a specific $((\lambda_{ij}^{\text{true}}))$ matrix, we employed two separate approaches for random data generation using Algorithm \ref{alg:multinomial_data_generation}: the ``fixed truth'' approach and the ``randomly perturbed truth'' approach. The fixed truth approach drew directly upon the framework of replication-based frequentist uncertainty in the simulations, assuming an underlying data-generating model with fixed ``true'' parameters. By contrast, the ``randomly perturbed truth'' approach added small noises to the constructed $((\lambda_{ij}^{\text{true}}))$ matrices to allow some deviations from the underlying frequentist assumption of fixed true parameters in the simulations. More specifically, the fixed truth approach used the final $((\lambda_{ij}^{\text{true}}))$ matrix as obtained from the process above to generate $M = 1000$ replicated datasets for each simulation scenario.  In the randomly perturbed truth approach, we added small random (varying with replicates) noises---independently generated from a truncated normal distribution with mean $0$, standard deviation $0.05$, truncated within $(-0.4, 0)$ for non-signal cells and truncated within $(-0.2, 0.2)$ for signal cells ---to $((\lambda_{ij}^{\text{true}}))$ before generating $M = 1000$ replicated random data. 

In each randomly generated dataset, we fitted the three proposed non-parametric empirical Bayes models, viz., the general-gamma model (Section \ref{sec:general-gamma}), the Efron model (Section \ref{sec:efron}), and the KM model (Section \ref{sec:KM}) with the proposed null baseline expected count estimates $\{\tilde E_{ij}\}$ (Section \ref{sec:estimating-Eij}). For the single-gamma model, the hyperparameter $\alpha$ was set to $0.99$ to encourage flexibility  while still ensuring sparsity of the final mixtures. To facilitate comparison, we also fitted all the existing empirical Bayes and hierarchical Bayes models/approaches reviewed in Section \ref{sec:review}, viz., BCPNN, single-gamma, 2-gamma, 2-gamma-zi, and HDP. These competing approaches used the natural estimator $\hat E_{ij}$ of the null baseline counts, as suggested in the literature. From each model fit (except BCPNN), a signal discovery analysis was first performed through its computed (estimated) posterior probabilities of being a signal: $\{\Pr(\lambda_{ij} \geq 1.001 \mid \text{data})\}$: a cell $(i, j)$ with $\Pr(\lambda_{ij} \geq 1.001 \mid \text{data}) > 0.95$ was deemed as a \textit{discovered signal}. For BCPNN, analogous posterior probabilities of being a signal were obtained through the information content parameter $2^{IC_{ij}}$ and the resulting estimated posterior probabilities were corrected for false discovery rates (FDR) \cite{muller2006fdr}. A similar FDR adjustment was also made to posterior probabilities obtained from the single-gamma prior model. Following the signal discovery analysis, performances of each model/method were assessed through the simulation/replication-based FDRs and sensitivities defined as follows:
\begin{equation} \label{eqn:FDR}
    \text{FDR} = \frac{1}{M} \sum_{m=1}^M \frac{\# \left\{(i, j): d_{ij}^{(m)} =  1, \lambda_{ij}^{\text{true}} \leq  1 \right\}}{\# \left\{(i, j): d_{ij}^{(m)} =  1\right\} }
\end{equation}

and
\begin{equation} \label{eqn:sensitivity}
    \text{Sensititvity} = \frac{1}{M} \sum_{m=1}^M \frac {\# \left\{ (i, j): d_{ij}^{(m)} =  1, \lambda_{ij}^{\text{true}} >  1 \right\}}{\# \left\{(i, j): \lambda_{ij}^{\text{true}} >  1\right\}}
\end{equation}

where $d_{ij}^{(m)} =  1$ indicates that the cell $(i, j)$ is deemed as a discovered signal in $m$-th replicated dataset, and $\#A$ denotes the cardinality of a set $A$. 

Next, for each model/method other than BCPNN (which does not use a $\lambda_{ij}$ parametrization) and single-gamma (which showed very poor $\lambda_{ij}$ estimation performance), we focused on the entire computed posterior distributions for $\{\lambda_{ij}\}$ obtained from that method for signal strength estimation. However, evaluating these estimates is difficult because existing Bayesian/empirical Bayesian approaches to pharmacovigilance primarily focus on signal detection--which is evaluated via FDR and sensitivity--rather than signal strength estimation. To our knowledge, no unified approach for evaluating the performance of Bayesian signal strength estimators exists in the literature. This is particularly important, as the computed posterior distributions, when used as Bayesian/empirical Bayesian signal strength estimates, may exhibit considerable asymmetry, multimodality, and uncertainty (see, e.g., Figure \ref{fig:real_data_analysis} under real data analysis). These nuances are not well captured by a single point estimate of $\lambda_{ij}$, and hence any point estimator performance metric--such as mean squared errors (MSEs) or mean absolute errors (MAEs) assessing frequentist variability in point estimates  of  $\lambda_{ij}$ across simulations/replications--is likely to be uninformative. To address this, we propose and consider the two Wasserstein distance-based performance assessment metrics in Section~\ref{sec:Evaluation_metrics} below that reflect variability in computed posterior distributions (i.e., Bayesian uncertainty) while also accounting for variability across simulation replicates (i.e., frequentist uncertainty).  

In Sections \ref{sec:simu_setting1}, \ref{sec:simu_setting2} below, we report signal discovery and signal strength estimation performance results evaluated through the replicated datasets simulated under the ``fixed truth'' setup. Analogous results obtained under the ``randomly perturbed truth'' setup are mostly similar; detailed descriptions are provided in Appendix S5.

\subsection{Evaluation metrics for Bayesian signal strength estimators}\label{sec:Evaluation_metrics}

Let $f_{ij}$ be the posterior density of the signal strength parameter $\lambda_{ij}$ given an SRS dataset. To derive evaluation metrics for $f_{ij}$ to be used as a Bayesian estimator of the signal strength parameter $\lambda_{ij}$, we consider the general scaled $p$-th Wasserstein distance between the posterior density function $f_{ij}$ and the degenerate distribution characterizing $\lambda_{ij}^{\text{true}}$, the true signal strength of $(i,j)$-th cell, as:
\[
\text{Scaled-Wasserstein}_p(f_{ij}, \lambda_{ij}^{\text{true}}) = \frac{1}{\lambda_{ij}^{\text{true}}}\left[\int_{0}^{1} \left|F_{ij}^{-1}(q)-F_{\lambda_{ij}^{\text{true}}}^{-1}(q)\right|^p dq\right]^{1/p}.
\]
Here $F_{ij}$ and $F_{\lambda_{ij}^{\text{true}}}$ denote the cumulative distribution functions associated with $f_{ij}$ and the degenerate distribution characterized by a point-mass at $\lambda_{ij}^{\text{true}}$, respectively, and $F_{ij}^{-1}$ and $F_{\lambda_{ij}^{\text{true}}}^{-1}$ are the corresponding quantile functions. For computation of these Wasserstein distances, we leverage the degeneracy of $F_{\lambda_{ij}^{\text{true}}}$ and utilize the consequent equivalent formulation defined in terms of expectation with respect to the posterior density $f_{ij}$:
\[
\text{Scaled-Wasserstein}_p(f_{ij}, \lambda_{ij}^{\text{true}}) =
\frac{1}{\lambda_{ij}^{\text{true}}}\left[\int_{0}^{\infty} \left|\lambda-\lambda_{ij}^{\text{true}}\right|^p f_{ij}(\lambda)d\lambda\right]^{1/p}.
\]
See Appendix S4 for a formal proof of this equivalence. In this paper, we focus on the cases $p=1$ and $p=2$. Analytic expressions for the resulting distances for these two specific choices of $p$ can be derived using the gamma-mixture form (for general-gamma, single-gamma, 2-gamma, and 2-gamma-zi) or the discrete mixture form (for Efron and KM) based on the estimated posterior distributions from all $\lambda_{ij}$-based empirical Bayesian approaches considered in this paper (see Appendix S4.1). For the hierarchical Bayesian approach (HDP), an analytic form for the posterior and thus for these distances is not available; however, these distances can still be efficiently computed given posterior MCMC draws for $\{\lambda_{ij}\}$ from the HDP model.

The Wasserstein distances for $p=1$ and $2$ are henceforth called the scaled posterior mean absolute error (Scaled posterior MAE) and the scaled posterior root mean squared error (Scaled posterior RMSE), respectively. Based on these, we define the following two evaluation metrics that combine \textit{all signal cells} in the entire table:
\begin{equation*}
    \begin{split}
        & \text{Average-Scaled-Wasserstein}_p(\{f_{ij}\}, \{\lambda_{ij}^{\text{true}}\}) = \frac{1}{\# C_{\text{sig}}} \sum_{(i,j)\in C_{\text{sig}}} \text{Scaled-Wasserstein}_p(f_{ij}, \lambda_{ij}^{\text{true}}), \\
        & \text{Max-Scaled-Wasserstein}_p(\{f_{ij}\}, \{\lambda_{ij}^{\text{true}}\}) = \max_{(i,j)\in C_{\text{sig}}} \left[ \text{Scaled-Wasserstein}_p(f_{ij}, \lambda_{ij}^{\text{true}}) \right],
    \end{split}
\end{equation*}
where $C_{\text{sig}} = \{(i, j): \lambda_{ij}^{\text{true}} > 1 \}$ is the set of all \textit{true} signal cells. We use these metrics to summarize the signal strength estimation performance of a method on a given simulated dataset. As the final performance assessment metrics summarizing performances across all replicated datasets, we consider the replication-based (i.e., frequentist) averages of these distances. Specifically,  given (estimated) posterior density functions $f_{ij}^{(m)}$ from the $m$-th replicate, $m = 1, \dots, M=1000$, we obtain the summary metrics:
\begin{equation} \label{eqn:Average-Scaled-RMSE&Max-Scaled-RMSE}
    \begin{split}
        &\text{Average-Scaled-RMSE} = \frac{1}{M}\sum_{m=1}^M\text{Average-Scaled-Wasserstein}_2(\{f_{ij}^{(m)}\}, \{\lambda_{ij}^{\text{true}}\}), \\
        &\text{Max-Scaled-RMSE} = \frac{1}{M}\sum_{m=1}^M\text{Max-Scaled-Wasserstein}_2(\{f_{ij}^{(m)}\}, \{\lambda_{ij}^{\text{true}}\}).
    \end{split}
\end{equation}
In the ``randomly perturbed truth'' scenarios where different, randomly perturbed true signal strength matrices considered in the replicated data generation, $\{\lambda_{ij}^{\text{true}}\}$ in Equation \eqref{eqn:Average-Scaled-RMSE&Max-Scaled-RMSE} is replaced by $\{\lambda_{ij}^{\text{true}(m)}\}$---the true signal strength matrix in the $m$-th replicated table. Similar Average-Scaled-MAE and Max-Scaled-MAE metrics are constructed by setting $p=1$. These metrics are used to monitor and assess the performance of the different methods discussed. 

\subsection{Results from simulation setting I (homogeneous signal strengths)} \label{sec:simu_setting1}

This simulation setting assesses the performances of the models when signal strengths are homogeneous. Table \ref{table:simulation-setting-1} shows the individual setups/levels for the different factors considered and varied in this setting, namely, the number of signals, signal strength values for the true signals, level of zero-inflation, and whether or not random perturbation in the true signal values is considered.  For each data configuration, $M=1000$ replicated tables are generated using Algorithm \ref{alg:multinomial_data_generation}. The Max-Scaled-RMSE results, as defined in Equation \eqref{eqn:Average-Scaled-RMSE&Max-Scaled-RMSE} with no random perturbation in $\{\lambda_{ij}^{true}\}$ are provided below. Similar results with random perturbations in $\{\lambda_{ij}^{true}\}$ yield comparable conclusions and can be found in the Appendix S5.1. Appendix S5.2 displays results in terms of scaled-MAE for both fixed and randomly perturbed $\{\lambda_{ij}^{true}\}$, and these results are largely concordant with their Max-Scaled-RMSE counterparts.

\begin{table}[ht]
\centering
\begin{tabular}{lll}
\hline
Factors        & Level & Number of levels \\ \hline
Number of signal cells & 1 (Case 1), 3 (Case 2), 6 (Case 3)           & 3 \\
Signal strength & 1.2, 1.4, 1.6, 2.0, 2.5, 3.0, 4.0 & 7 \\
Level of zero-inflation  & None, Low, High           & 3 \\
Random perturbation in $\{ \lambda_{ij}^{\text{true}}\}$   & Yes, No    & 2 \\\hline
\end{tabular}
\caption{Random table configurations in simulation setting I (homogeneous signal strengths)}
\label{table:simulation-setting-1}
\end{table}

\begin{figure}[ht]
    \centering
    \includegraphics[width=\textwidth]{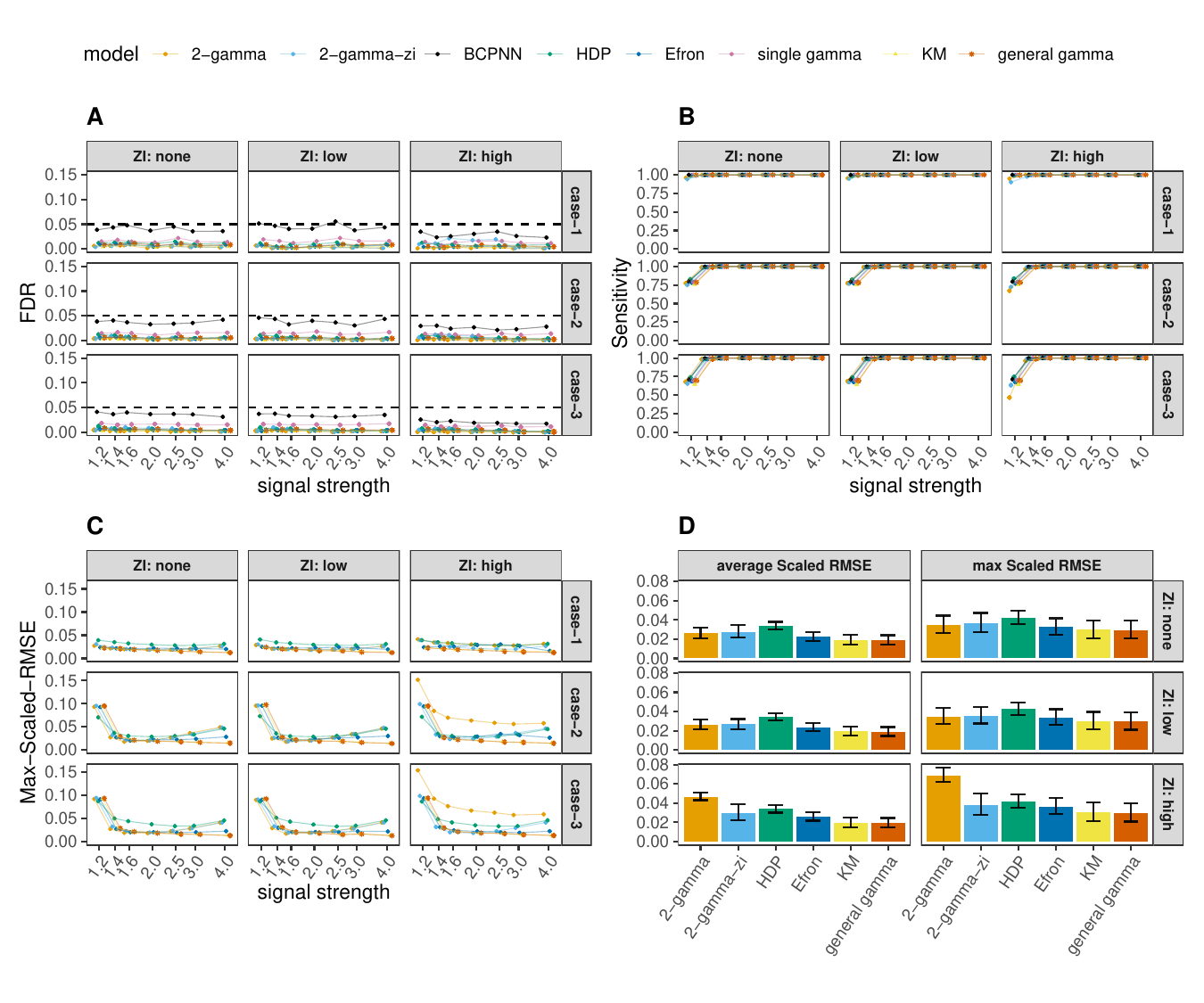}
    \caption{Results from simulation setting I (homogeneous signal strength). Panel A, B, C: line and dot plot matrices display replication-based FDR (Equation \eqref{eqn:FDR}), sensitivity (Equation \eqref{eqn:sensitivity}), and scaled RMSE (Equation \eqref{eqn:Average-Scaled-RMSE&Max-Scaled-RMSE}), respectively, plotted along the vertical axes against grid values of signal strength $\lambda_{ij}$ for the signal cells, across different zero inflation levels (along plot columns) and number of signals (cases; along the plot rows). The lines and dots are color coded by the different methods/methods;  dots for KM and general-gamma are highlighted via triangle and asterisks. Panel D: The overall means of  average (left column) and maximum (right column) scaled RMSEs obtained across all $\lambda_{ij}$ grid values and all choices of the number of signal cells are plotted as vertical bars for different levels of zero inflation (along rows). Each bar represents the replication-based mean of the average (left column) or the maximum (right column) scaled RMSE of a specific method computed over the signal cells of an entire table; the associated error whiskers represent 5th and 95th percentile points across  replicates. } 
    \label{fig:comb_set1}
\end{figure}

Figure \ref{fig:comb_set1} visualizes the signal discovery (panels A, B) and signal strength estimation (panels B, C) results for this setting. Panels A and B are plot matrices that document FDRs and sensitivities, respectively, along the vertical axes against different $\lambda_{ij}$ grid values for the signal cells for the different models (color-coded lines/dots) across varying numbers of signal cells (cases; along rows) and different zero inflation (ZI) levels (across columns). The results indicate that all models/methods strongly control the FDR within the nominal 0.05 level and exhibit similar FDR profiles across all signal strengths, numbers of signal cells, and ZI levels, except BCPNN, which incurs somewhat higher FDRs---mostly controlled within $0.05$ but with some situations with certain $\lambda$ grid values where computed FDR exceeds 0.05 despite the FDR adjustment during model fitting. The sensitivity profiles for the methods also appear similar overall, except for the 2-gamma model, which shows somewhat lower sensitivities than the others in scenarios with a large number of signal cells and a high level of ZI. This is not particularly surprising, as all models except BCPNN (including the 2-gamma-zi model) are flexible enough to accommodate zero inflation when present. BCPNN demonstrates similar sensitivities to the proposed non-parametric empirical Bayes approaches; however, this comes at the cost of possibly higher FDR. Together, Figure \ref{fig:comb_set1}A and \ref{fig:comb_set1}B demonstrate that the proposed non-parametric empirical Bayes approaches perform comparably to state-of-the-art approaches in terms of signal discovery, as evidenced by the strong control of FDR and high sensitivity profiles.

Panels C and D display the estimation performance results for all methods except BCPNN (which does not use a $\lambda_{ij}$-based formulation) and single-gamma (which is too inflexible for reasonable estimation). In Panel C, the Max-Scaled-RMSE (Equation \eqref{eqn:Average-Scaled-RMSE&Max-Scaled-RMSE}) is plotted against signal cell $\lambda_{ij}$ grid values separately for the different models (color-coded lines/dots) across varying numbers of signal cells (cases; along rows) and different zero inflation (ZI) levels (across columns). The bars in Panel D show the overall mean of the average scaled RMSE (left) and the maximum scaled RMSE (right) obtained across different signal cell $\lambda_{ij}$ grid values and numbers of signal cells (cases) for each ZI level (rows). This provides an average estimation performance measurement for the different levels of zero inflation. The associated error whiskers represent the 5th and 95th percentile points across replicates.

The following observations can be made from these figures. First, all methods across all scenarios show a general pattern of the Max-Scaled-RMSE decreasing (i.e., the estimation performance improving) as the signal cell $\lambda_{ij}$ values increase (Panel C). However, the magnitude of the decrease in Max-Scaled-RMSE varies among the methods, and some methods (namely, 2-gamma, 2-gamma-zi, and HDP) exhibit a U-shaped pattern, where the Max-Scaled-RMSE increases when the underlying true $\lambda_{ij}$ is very high. This discrepancy between the methods showing a U-shaped pattern and our proposed non-parametric empirical Bayes approaches, which show a monotonic decreasing pattern, can be largely attributed to the improvements in estimation accuracy provided by the proposed null baseline expected count $\tilde{E}_{ij}$ estimator (see Figure \ref{sec:computed_AMSE}), which is utilized in the latter group but not in the former.

Second, for each method other than the 2-gamma model, the overall estimation performance pattern remains consistent across the different levels of ZI for each number of true signals (each row in Panel C and each column in Panel D). However, the 2-gamma model shows noticeable deterioration in signal strength estimation as the ZI level increases, due to its inability to flexibly accommodate zero inflation, unlike the other models. Finally, across all scenarios—for all ZI levels and numbers of signal cells—the general-gamma model and the KM model produce the most accurate estimators of the true $\lambda_{ij}$ values, with the smallest Max-Scaled-RMSEs. The Efron model results are close competitors, with Max-Scaled-RMSEs slightly higher than those obtained for the KM and general-gamma models. However, the performance of the Efron model could potentially be improved with further tuning of model parameters (specifically $p$ and $c_0$), which future research may explore.

\subsection{Estimation performance of the expected null baseline count E under simulation setting I (homogeneous signal strengths)} \label{sec:computed_AMSE}

We provide a comparison of the estimation performances of the two estimators $\hat{E}$ and $\tilde{E}$ of the expected null baseline count $E$ in terms of scaled MSE under simulation setting I, as defined in Section \ref{sec:simu_setting1}, with no zero-inflation (ZI:none) and 1 embedded signal (Case-1). We considered a fixed $((E_{ij}^{\text{true}} = (n_{i\bullet} n_{\bullet j}/n_{\bullet \bullet}))$ with $n_{i\bullet}$, $n_{\bullet j}$, and $n_{\bullet \bullet}$ as obtained from the statin-42 dataset, and a range of values for the $\lambda_{ij}^{\text{true}}$ parameter for the signal cell, ranging between 1 and 10, while keeping the $\lambda_{ij}^{\text{true}}$ parameters for all non-signal cells fixed at 1. Subsequently, with each specified $((E_{ij}^{\text{true}}))$ and $((\lambda_{ij}^{\text{true}}))$, we generated $M=1000$ replicated datasets $\{((n_{ij}^{(m)})): m = 1, \dots, 1000\}$ using Algorithm~\ref{alg:multinomial_data_generation}, focusing on estimating $\{E_{ij}\}$ for the entire table (excluding the reference row $I$ and reference column $J$) in each replicate $m$ using the two estimators $\{\hat{E}_{ij}\}$ and $\{\tilde{E}_{ij}\}$. 

For each $((\lambda_{ij}^{\text{true}}))$, we then evaluated the replication-based scaled MSEs for the two estimators of $E_{ij}$ using the formulas:
\[
\text{Scaled-RMSE}(\hat{E}_{ij}) = \sqrt{\frac{1}{M} \sum_{m=1}^M \frac{(\hat{E}_{ij} - E_{ij}^{\text{true}})^2}{(E_{ij}^{\text{true}})^2}}
\]
and
\[
\text{Scaled-RMSE}(\tilde{E}_{ij}) = \sqrt{\frac{1}{M} \sum_{m=1}^M \frac{(\tilde{E}_{ij} - E_{ij}^{\text{true}})^2}{(E_{ij}^{\text{true}})^2}}
\]
for each cell $(i, j)$ under each choice of $((\lambda_{ij}^{\text{true}}))$. We then considered the scaled RMSE ratios $\text{Scaled-RMSE}(\hat{E}_{ij})/\text{Scaled-RMSE}(\tilde{E}_{ij})$. These scaled RMSE ratios are plotted (along the vertical axis) against the values of $\lambda_{ij}^{\text{true}}$ for the signal cell (horizontal axis) as dots in Figure \ref{fig:Eestimators_comparison}, separately for signal and non-signal cells (panels). The red vertical lines in the figure separate the setting where the sufficient condition \eqref{eqn:sufficient_condition_thm2} holds (right) and does not hold (left).  The dashed horizontal line indicates where the ratio equals 1.

\begin{figure}[htp]
    \centering
    \includegraphics[width=0.8\textwidth]{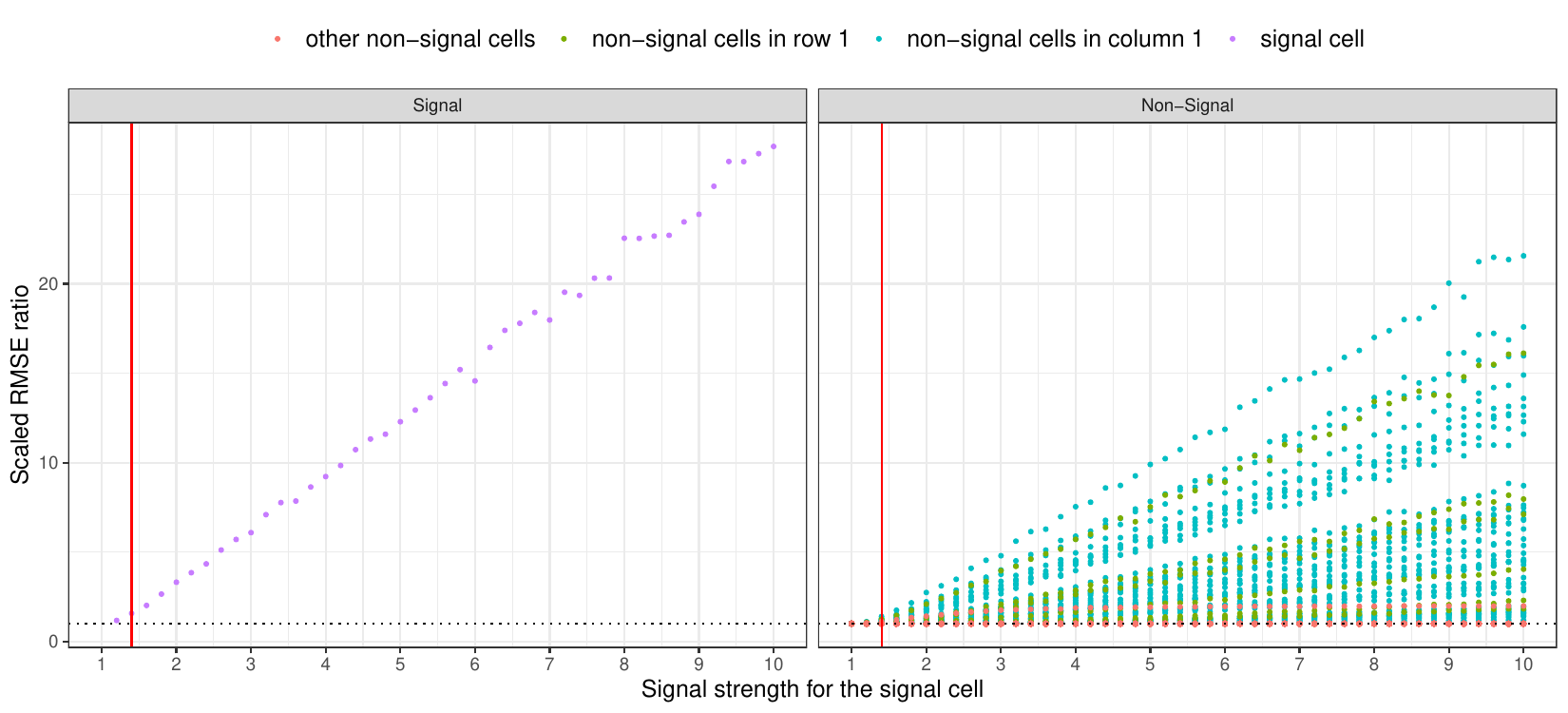}
    \caption{Results of estimating performance of the expected null baseline count (E) under simulation setting I (homogeneous signal strengths). The figures display the replication-based ratio: Scaled-RMSE($\hat E_{ij}$)/Scaled-RMSE($\tilde E_{ij}$) plotted along the vertical axes against the signal strength of the only signal cell $(1, 1)$ for the signal cell  (left panel) and non-signal cells (right panel). The dots are color-coded based on whether they correspond to the signal cell (purple), non-signal cells sharing the same row (green) or column (teal) as the signal cell, and other non-signal cells (orange). The dashed horizontal line indicates where the ratio equals 1, and the red vertical line denotes the minimum signal strength of the signal cell where the sufficient condition \eqref{eqn:sufficient_condition_thm2} holds. } \label{fig:Eestimators_comparison}
\end{figure}

As seen in Figure \ref{fig:Eestimators_comparison}, the estimator $\tilde{E}$ attains impressive efficiency gains over $\hat{E}$, with scaled RMSE ratios $\gg 1$ for the signal cell (left facet) while also performing noticeably better for many non-signal cells (right facet), particularly in settings where the underlying true signal strength $\lambda_{ij}^{\text{true}}$ for the signal cell is moderate or high (horizontal axis $\geq$ 2). This efficiency gain is highly prominent when the sufficient condition \eqref{eqn:sufficient_condition_thm2} holds (right sides of the red vertical lines in the two panels) but is also present, albeit to a lesser extent, when the sufficient condition \eqref{eqn:sufficient_condition_thm2} does not hold (left sides of the lines). Importantly, across all settings, there is little to no efficiency loss in $\tilde{E}$ compared to $\hat{E}$ for non-signal cells. This is evident from the lower ends of the scaled RMSE ratios for non-signal cells, which virtually never fall below 1.

\subsection{Results from simulation settings II and III (moderately heterogeneous and highly heterogeneous signal strengths)} \label{sec:simu_setting2}

Simulation settings II and III assess the signal discovery and signal strength estimation performances of the models under scenarios with moderately and highly heterogeneous $\lambda_{ij}$ for signal cells. Table \ref{table:simulation-setting-2} summarizes the levels of the various simulation scenarios/factors varied for setting II. For setting III, only two scenarios are considered: one with fixed truth and one with randomly perturbed truth, each having 6 signal cells with $\lambda_{ij}^{\text{true}} > 1$ in the entire table. For both settings, we focus below on results from the "fixed truth" $\{\lambda_{ij}^{\text{true}}\}$ scenarios. Comparable results with perturbations, which again demonstrate similar findings, are provided in Appendix S5.1.  We further concentrate here on the signal estimation performance; signal detection performances are displayed in Appendix S5.1, showing similar patterns as those observed in setting I and demonstrating similar conclusions that the proposed non-parametric empirical Bayes approaches perform comparably to the existing approaches. For the estimation performance assessment, we again utilize the max-Scaled-RMSE as defined in Equation \eqref{eqn:Average-Scaled-RMSE&Max-Scaled-RMSE}. Analogous results based on max-Scaled-MAE are provided in Appendix S5.2.

\begin{table}[]
\centering
\begin{tabular}{lll}
\hline
Factors        & Level & number of levels \\ \hline
Case & Case4            & 1 \\
Signal strength & 1.2, 1.4, 1.6, 2.0, 2.5, 3.0, 4.0 & 7 \\
Level of zero-inflation  & none, low, high           & 3 \\
Existence of variation in $\{\lambda^{\text{true}}\}$   & Y, N    & 2 \\\hline
\end{tabular}
\caption{Random table configurations in simulation setting II (moderately heterogeneous signal strengths)}
\label{table:simulation-setting-2}
\end{table}

\begin{figure}[htp]
    \centering
    \includegraphics[width=0.75\linewidth]{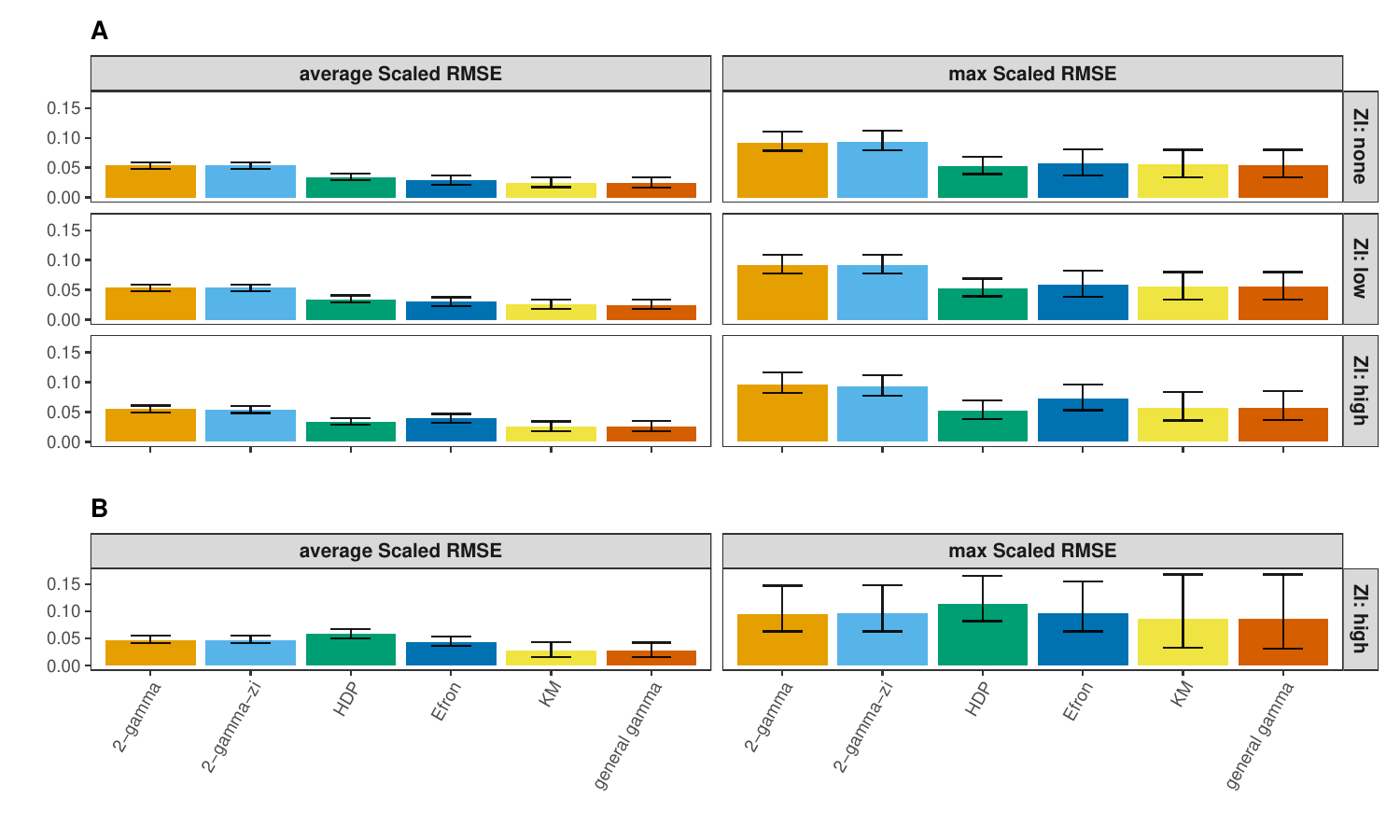}
    \caption{Results from simulation settings II (heterogeneous signal strength; panel A) and III (highly heterogeneous signal strength; panel B). The barplots show the replication-based means of average scaled RMSE and max scaled RMSE across signal strength levels across the replications. The error whiskers display the 5th and 95th quantiles computed over the replicated tables.}\label{fig:sett2_3_average}
\end{figure}

Figure \ref{fig:sett2_3_average} A and B show the signal strength estimation performances for the models through average and max Scaled-RMSE metrics computed over all signal cells and across different simulation scenarios. The findings largely resemble those observed in homogeneous signal $\lambda_{ij}$ settings (Figure 3D), except that both 2-gamma and 2-gamma-zi appear to have noticeably poorer performances here. This is unsurprising, as with heterogeneous signals, the effective underlying prior distribution for $\{\lambda_{ij}^{\text{true}}\}$ ends up with multiple ($\geq 4$) clusters/modes (one mode for structural zeros, one mode for non-signals, and at least two modes for signals), which the 2-gamma and 2-gamma-zi models, each allowing fewer than 4 prior modes, are unable to flexibly accommodate. The proposed non-parametric empirical Bayes approaches effectively address this multimodality, leading to highly accurate estimation of $\lambda_{ij}$, particularly when the heterogeneity is moderate (Panel A), which results in well-separated distinct clusters in the underlying prior distribution for $\lambda_{ij}$. When the $\lambda_{ij}$ values for the signal cells are highly heterogeneous as in Setting III, the underlying prior structure for the signal cells effectively becomes flat, with non-prominent modes/clusters and high variance. In this setting, any framework that is able to accommodate two clusters for $\lambda_{ij}$ values---one concentrated cluster for the non-signal cells with $\lambda_{ij}^{\text{true}} = 1$ and one diffuse cluster for the remaining cells  (both zero inflation and varying signal values for $\lambda_{ij}^{\text{true}} > 1$)---can reasonably well estimate the underlying prior for $\lambda_{ij}$. This is precisely what we observe in the panel B. However, the proposed non-parametric empirical Bayes approaches, particularly the KM and general-gamma models, still provide positive albeit moderate improvement over existing approaches on average in signal strength estimation. 

Collectively, the simulation results presented in this section demonstrate the high accuracy in the signal strength $\lambda_{ij}$ estimation permitted by the proposed non-parametric Bayesian approaches. This, in turn, provides sufficient confidence in the results and inferences obtained by employing these approaches in real-world datasets in applications.

\section{real data analysis} \label{sec:real-data-analysis}

In this section, we discuss the results obtained by employing the proposed non-parametric empirical Bayes approaches to the statin data discussed in Section~\ref{sec:statin-data-description}. We consider two subsets of the data: a large subset with 1,491 AEs (statin-1491) and a smaller subset with 46 AEs (statin-46). The former is used for exploratory identification of AEs of concern (signal detection) from the dataset, aimed at providing preliminary alarms. The latter dataset comprises a more focused subset of AEs previously determined to be of importance by the FDA through expert knowledge, and the interest here centers on identifying the major AEs of concern from this curated subset and determining their degree of relevance (i.e., both signal detection and signal strength estimation). 

In each analysis, we fit the proposed non-parametric empirical Bayes models---viz., the KM, Efron, and general-gamma models. To aid comparison, we also employed all the existing Bayesian/empirical Bayesian models/methods reviewed in Section~\ref{sec:review}. The KM and Efron models were fit using grids with $K=\min\{3000, 10 \times I \times J\}$ grid points. Additionally, for the Efron model, the model hyper-parameters parameters $p$ and $c_0$ were set to $120$ and $0.01$, respectively. For the general-gamma model, the Dirichlet prior parameter $\alpha$ was chosen based on LOOCV (see Section S4.3 for details), computed on models fitted with $\alpha \in \{0.01, 0.5, 0.75, 0.99\}$, and the optimal choice was determined to be $\alpha = 0.75$. The HDP model was fit using MCMC sampling (using JAGS\cite{plummer2012jags} to generate $4$ independent chains; each chain produced $5,000$ retained iterations after discarding the initial $5,000$ iterations used as burn-in). For the single-gamma and BCPNN models, FDR adjustments on the computed posterior probabilities of being signals were made before using them for signal detection. Computation times for the different models/methods for the two analyses are provided in Table~\ref{tab:runtime}.

\begin{table}[]
\centering
\begin{tabular}{lll}
\hline
method & Average elapsed time on statin-46& Average elapsed time on statin-1491\\\hline
general-gamma & 0.798 & 104.869\\
KM & 1.276 & 921.133\\
Efron & 1.768 & 178.405\\
HDP & 75.543 & 22311.099\\
2-gamma & 2.911 & 50.941\\
2-gamma-zi & 2.466 & 90.286\\
single-gamma & 0.000 & 0.000\\
BCPNN & 0.003 & 0.044\\
\hline
\end{tabular}
\caption{ Average Time (in seconds) needed to execute the different methods applied on the statin-46 table among 100 runs and on statin-1491 among 10 runs.}
\label{tab:runtime}
\end{table}

As expected, the empirical Bayes approaches appear substantially more efficient than the full Bayesian method (HDP) in terms of computation expenses, as noted in Table \ref{tab:runtime}. Specifically, HDP is over 75 and 200 times more computation-heavy than the proposed general-gamma non-parametric empirical Bayes approach in the statin-46 and statin-1491 datasets, respectively. The simpler approaches of BCPNN and the single-gamma model are essentially effortless in both datasets, thanks to their simple parametric natures. The general-gamma model is more efficient than the 2-gamma and 2-gamma-zi models in the smaller statin-46 dataset despite having many more gamma components in the underlying mixture prior. This is due to the computational overhead required to determine reasonable starting points for the underlying marginal maximum likelihood estimation of the mixture prior density in 2-gamma and 2-gamma-zi models. For these models, our implementation begins with multiple starting points obtained from different approaches and subsequently selects an optimal starting point (see Appendix S2). However, for the larger statin-1491 dataset, this overhead does not dominate the overall computation cost, and in this dataset, the general-gamma model has a higher implementation time. In both datasets, the general-gamma model enjoys significantly lower computation costs compared to the other two non-parametric empirical Bayes approaches, namely, the Efron and the KM models. The Computational costs for the KM and Efron models are somewhat comparable in the statin-46 dataset, but in the larger statin-1491 dataset, the KM model becomes substantially more computation-heavy due to the complexities of the underlying convex optimization routine in higher dimensions (caused by a substantially higher number of prior mixture components). We next focus on the signal detection performances of these approaches in the two datasets.

\begin{table}[htpb]
\centering
\begin{tabular}[t]{lrrrrrr}
\hline
Method & Atorvastatin & Fluvastatin & Lovastatin & Pravastatin & Rosuvastatin & Simvastatin\\
\hline
general-gamma & 172 & 74 & 66 & 165 & 244 & 244\\
KM & 172 & 75 & 66 & 167 & 243 & 243\\
Efron & 173 & 77 & 68 & 175 & 246 & 246\\
HDP & 208 & 92 & 91 & 227 & 305 & 305\\
2-gamma & 175 & 79 & 73 & 179 & 252 & 248\\
2-gamma-zi & 168 & 70 & 66 & 152 & 234 & 234\\
single-gamma & 239 & 110 & 114 & 261 & 336 & 345\\
BCPNN & 239 & 110 & 153 & 279 & 348 & 353\\
\hline
\end{tabular}
\caption{Numbers of signals detected for each drug by different models/methods on the statin-1491 datset}
\label{tab:signal-detect-statin-1491}
\end{table}

\begin{table}[htpb]
\centering
\begin{tabular}{lllllll}
\hline
method & Atorvastatin & Fluvastatin & Lovastatin & Pravastatin & Rosuvastatin & Simvastatin\\\hline
general-gamma & 31 & 13 & 10 & 17 & 25 & 28\\
KM & 31 & 13 & 10 & 17 & 25 & 28\\
Efron & 32 & 13 & 10 & 17 & 25 & 29\\
HDP & 30 & 13 & 10 & 15 & 28 & 28\\
2-gamma & 31 & 13 & 10 & 16 & 25 & 28\\
2-gamma-zi & 30 & 13 & 10 & 14 & 25 & 27\\
single-gamma & 32 & 14 & 11 & 19 & 30 & 29\\
BCPNN & 33 & 14 & 12 & 20 & 30 & 30\\
\hline
\end{tabular}
\caption{Numbers of signals detected for each drug by different models/methods on the statin-46 datset}
\label{tab:signal-detect-statin-46}
\end{table}

For signal detection, we used posterior probability-based approaches as discussed in Section \ref{sec:simulation}. We focus on the number of detected signals by each method for each drug in both the first and second dataset-specific analyses; the results are displayed in Tables \ref{tab:signal-detect-statin-46} and \ref{tab:signal-detect-statin-1491}, respectively. In the second analysis on the smaller statin-46 dataset, the number of detected signals appears consistent across models, although BCPNN detects a slightly larger number of signals for some drugs. However, significant differences are observed in the number of detected signals from the first analysis with the larger statin-1491 dataset.  In particular, Table \ref{tab:signal-detect-statin-1491} shows that BCPNN detects the highest number of signals across most statin drugs, especially for Rosuvastatin and Simvastatin. This aligns with our findings from the simulation experiments, where BCPNN exhibited higher FDR—sometimes exceeding the nominal level of $0.05$, despite FDR control during training—compared to the other methods. The fact that the single-gamma model also detects a large number of signals is not surprising: the estimated single-gamma prior used for all $\{\lambda_{ij}\}$ is disproportionately pulled toward the ``large'' $\lambda_{ij}$ values due to its inflexible parametric nature.  HDP also appears to detect a large number of signals; however, the results may be impacted by computational issues, specifically, potentially poor convergence/mixing of the MCMC chains, which is difficult to assess with $5,000 \times 4$ post burn-in iterations, given the complexity of the model. However, the extremely high computational cost of implementing this model makes substantially increasing the number of MCMC draws challenging.

On the other end of the spectrum, the 2-gamma-zi model appears to be the most conservative, detecting the fewest signals for each drug in the dataset. Models explicitly accounting for zero inflation often demonstrate such conservative signal detection, as noted in several studies \cite{chakraborty2022use, huang2017zero, hu2015signal}, and our results concur with the existing literature. By contrast, the 2-gamma model is much less conservative, detecting notably more signals than 2-gamma-zi (though substantially fewer than BCPNN, single-gamma, and HDP).  The general-gamma model, a flexible refinement of these two gamma mixture models that permits simultaneous acknowledgment of heterogeneity among the signal, non-signal, and zero-inflation (if present) $\lambda_{ij}$ values, leads to a detected number of signals between the number of signals determined by the 2-gamma and 2-gamma-zi models. The KM approach identifies nearly the same number of signals as general-gamma, aligning with our simulation-based observations that these two models perform similarly in applications (Section \ref{sec:simulation}). However, the KM model requires more computation, as shown in Table \ref{tab:runtime}. The Efron model also appears to perform roughly similarly to the KM and general-gamma models but detects somewhat more signals for each drug. This could be an artifact of the tuning parameter in the Efron model. However, as noted in Section~\ref{sec:efron}, there is currently no general approach suggested in the literature for tuning these parameters, making optimal implementation of the model somewhat challenging.

We next focus on signal strength inference for the $\{\lambda_{ij}\}$ parameters through the computed empirical Bayes posterior distributions, summarized by the associated posterior medians and equi-tailed 90\% posterior credible intervals. Statin drugs are known to be associated with muscular and renal disease\cite{LAW2006S52, wolfe2004dangers}. With a selection of muscle and kidney related AE, the resulting posterior medians and 90\% credible intervals is displayed as dot and error bar (forest) plots in Figure \ref{fig:real_data_analysis}. The results reveal substantial nuances, which are discussed below.

\begin{figure}[htp] 
    \centering
    \includegraphics[width=16cm]{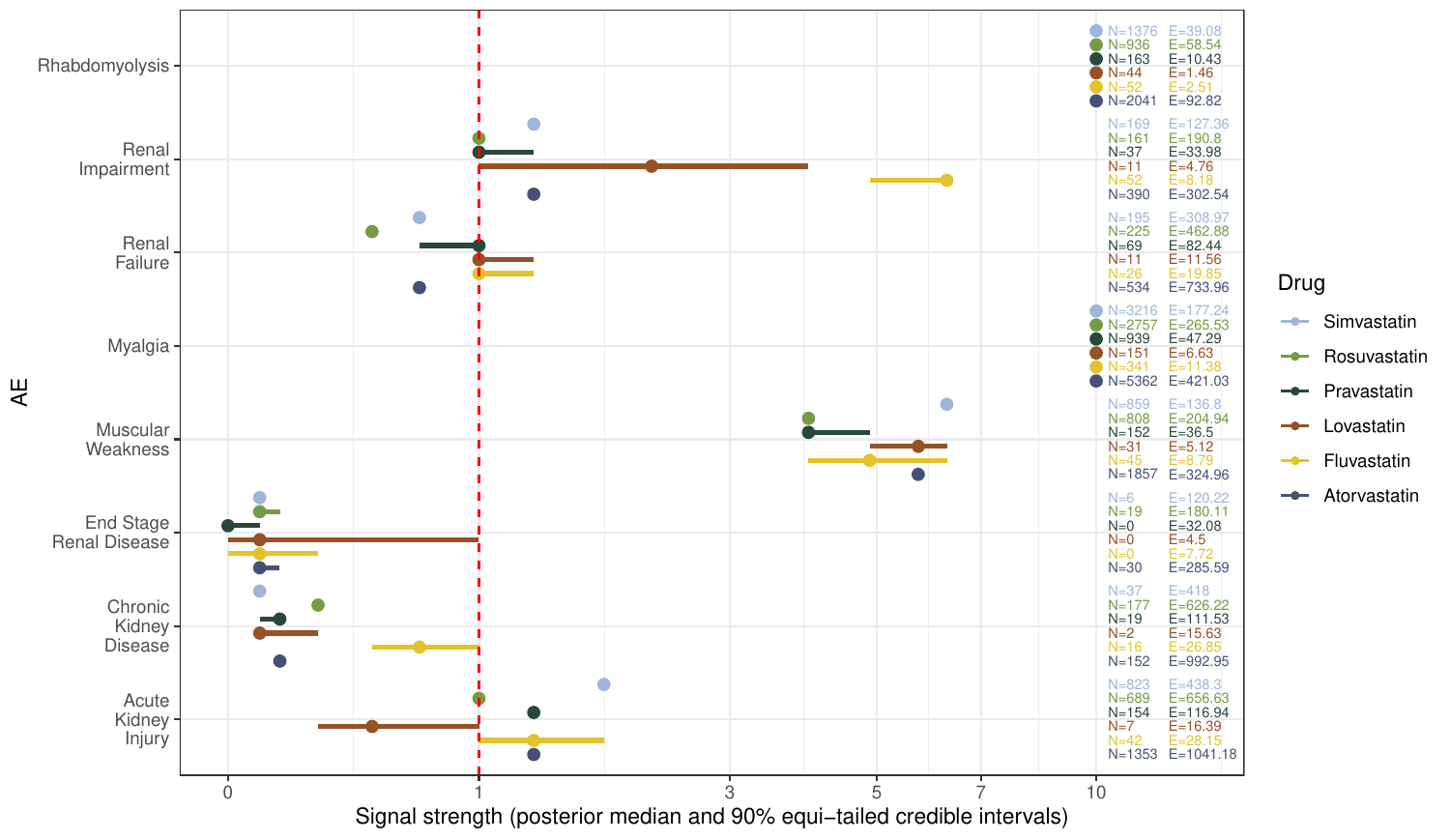}
    \caption{Forest plot visualizing empirical posterior inferences on $8$ prominent AEs across $6$ statin drugs through computed empricial Bayesian posterior distributions of signal strengths $\{\lambda_{ij}\}$ obtained from the general-gamma model employed on the statin-46 dataset. The points and bars represent the posterior medians and $90\%$ equi-tailed credible intervals for the corresponding AE-drug pair-sepcific $\{\lambda_{ij}\}$, and the results from different drugs are color-coded. The red dotted vertical line represents the value ``1''. The texts on the right provide the number of observations as well as the null baseline expected counts under independence for a AE-drug pair.} \label{fig:real_data_analysis}
\end{figure}

First, even among detected signal pairs $(i, j)$ with corresponding $\lambda_{ij} > 1$ with high posterior probability, there can be substantial heterogeneity in the magnitudes of the estimated signal strength parameters $\{\lambda_{ij}\}$. For example, muscular weakness appears to be a signal with a very high posterior probability of having $\lambda_{ij} > 1$ for the AE ($i$) Renal Impairment in each of the drugs ($j$) Simvastatin, Atorvastatin, and Fluvastatin. However, the degree of signal strength is much higher in Fluvastatin, where $\lambda_{ij} \geq 5$ with high posterior probability, compared to the other two drugs where $\lambda_{ij} \in (1, 2)$ with high posterior probabilities. This degree of relevance cannot be rigorously quantified in a pure signal detection analysis that only focuses on signal/non-signal dichotomizations.

Second, the detected signal pairs $(i, j)$ with a high posterior probability of $\lambda_{ij} > 1$ may still demonstrate different levels of uncertainties in their $\lambda_{ij}$ values, even when they have similar observed-to-expected (O/E = $N_{ij}/E_{ij})$ count ratios. For example, the O/E ratio is nearly identical for the AE Muscle Weakness in Rosuvastatin and Pravastatin; however, the posterior distribution for $\lambda_{ij}$ for this AE in Rosuvastatin has much less variability than for Pravastatin due to the considerably larger observed count $N_{ij}$ for the former. This uncertainty assessment is key to making rigorous statistical inferences, which the proposed non-parametric empirical Bayes framework provides.

Third, while permitting substantial flexibility and accommodating heterogeneity through the underlying mixture prior, the proposed general-gamma model incorporates a rigorous regularization/shrinkage mechanism for $\lambda_{ij}$ parameters that avoids overfitting (i.e., unregularized/noisy estimation). This is reflected in the posterior inferences for pairs that have very high observed-to-expected count ratios (O/E). For example, the posterior for $\lambda_{ij}$ in the Rhabdomyolysis-Lovastatin pair is shrunk to $10$ despite an O/E ratio of $44/1.46 \approx 30$.

Finally, notably different inferences can be obtained for the same AE $i$ across different drugs $\{j\}$, even when the same observed counts $\{N_{ij}\}$ are seen across drugs. For instance, Pravastatin, Lovastatin, and Fluvastatin each have $0$ observations corresponding to the AE "End Stage Renal Disease," but their inferred signal strengths are substantially different. While the posterior credible interval for Pravastatin concentrates entirely on $\lambda_{ij}$ values close to zero, for Fluvastatin, it lies within the region $(0, 0.4)$, and for Lovastatin, spans the interval $(0, 1)$. This implies a substantial posterior probability that the observed count for this AE corresponds to a structural zero (hence, non-signal) for Pravastatin. In contrast, for Fluvastatin and Lovastatin, there is less conclusive evidence of the AE-drug pairs being structural zeros but strong evidence of being non-signals. This is a consequence of the differences in the observed counts and corresponding null baseline expected counts for these AE-drug pairs, and the empirical Bayes posterior distributions for $\{\lambda_{ij}\}$ are able to appropriately reflect this.

\section{Discussion and Conclusion}\label{sec:discussion}

This paper extends the scope of existing SRS data mining, which typically focuses only on identifying whether an AE-drug pair is a signal, to a more comprehensive analysis involving signal strength estimation and uncertainty quantification. We leverage a non-parametric empirical Bayesian modeling framework, which permits rigorous and flexible signal detection and signal strength estimation within a single model while retaining computational traceability for model estimation. We introduce three non-parametric empirical Bayes models/methods for pharmacovigilance, namely the KM model, the Efron model, and the general-gamma (mixture) prior model. Each model enables rigorous and flexible modeling of the complex structures in SRS datasets, and we discuss and propose efficient implementation strategies. For enhanced signal strength estimation, we propose a novel estimator for the null baseline expected counts for AE-drug report counts and show that under mild regularity conditions, as the underlying true signal strength increases, the proposed estimator outperforms the existing estimator in terms of asymptotic mean squared error. 

Through extensive simulation, we show that the proposed approaches perform as well or better in signal discovery compared to existing Bayesian/semi-Bayesian models---including BCPNN, the gamma Poisson shrinker, and HDP---while demonstrating substantially better signal strength estimation performance in settings with heterogeneous signals and potential zero-inflation. Using detailed analyses of real SRS data, we illustrate how nuanced insights on AE signals can be obtained through the proposed methods in practice. The implications of these results are as follows. First, if one is solely interested in determining whether certain AE-drug pairs are signals/non-signals, several of the models discussed in the paper, including the existing approaches and the proposed approaches, may be used to achieve reasonable performance. However, much nuance is lost in such a dichotomized significant/non-significant worldview, which contemporary statistical practice argues against in general and which our real data analysis exemplifies. Second, when there is strong prior knowledge about the underlying signal structures that can be quantified through a parametric/semi-parametric prior model, one might use the corresponding parametric/semi-parametric Bayesian model to estimate signal strength for AE-drug pairs (e.g., the 2-gamma model for signal/non-signal structures and the 2-gamma-zi model for true-zero, non-signal, or signal structures). While these models may support reasonable signal discovery, they can fail to yield accurate signal strength estimates if the underlying model assumptions do not hold. The proposed non-parametric empirical Bayes approaches provide sufficient safeguards against model violation over a range of settings with varied signal heterogeneity and zero inflation, enabling robust signal detection \textit{and} signal strength estimation for a large class of real-world SRS datasets.

Several extensions of our approaches are possible. First, the proposed models have some tuning parameters, including the Dirichlet prior parameter $\alpha$ for the general-gamma model, the number of grid points for the KM model, and the $p$ and $c_0$ parameters for the Efron model. In this paper, we employ an approximate leave-one-out cross-validation information criterion-based selection for some of these parameters (e.g., $\alpha$) while using fixed choices for others. While our tuning approach appears reasonably robust for the general-gamma model, future research may investigate more comprehensive approaches to tuning (or estimating within a hierarchical Bayesian framework) the general-gamma model or, more importantly, for the latter two models. Second, the proposed methodology does not explicitly account for joint associations and dependencies between multiple drugs and/or multiple AEs. Future research will extend the current modeling framework to more rigorously address these dependencies. Finally, we aim to apply our current methods to more extensive real-world data for deeper scientific insights in application-focused studies.

\bibliographystyle{plain}

\clearpage 
\section*{APPENDIX}

\begin{appendix} 
\renewcommand{\thesection}{S\arabic{section}}

\section{Derivation of AMSEs for null value estimations}

Based on the conditional multinomial model (2) in  section 4.4.2:
\begin{equation}
    N_{ij} \mid N_{\bullet \bullet} \sim \text{Multinomial}\left(N_{\bullet \bullet}, p_{ij} = \frac{\lambda_{ij}p_{i\bullet}^*p^*_{\bullet j}}{\sum_{k=1}^I\sum_{l=1}^J \lambda_{kl}p_{k\bullet}^*p_{\bullet l}^*}\right),
\end{equation}
 where $p_{i\bullet}^*$ and $p_{\bullet j}^*$ are the null baseline marginal probabilities of AE-$i$ drug-$j$ under independence, respectively, and $N_{\bullet \bullet}$ is the total number of report counts (grand total) in the observed SRS dataset. In the following, we derive the AMSE of $\hat E_{ij}/ E_{ij}$. 

First, let us define $N_A = N_{ij}$, $N_B = \sum_{j'\neq j} N_{ij'}$ and  $N_C = \sum_{i'\neq i} N_{i'j}$ as well as $p_A = p_{ij}$, $p_B = \sum_{j'\neq j} p_{ij'}$,  $p_C = \sum_{i'\neq i} p_{i'j}$. Thus, we can write 
\[\frac{\hat E_{ij}}{E_{ij}} = N_{\bullet \bullet}\frac{N_A+N_B}{N_{\bullet \bullet}}\frac{N_A+N_C}{N_{\bullet \bullet}}/(N_{\bullet\bullet}p_{i\bullet}^*p_{\bullet j}^*) = \frac{N_A+N_B}{N_{\bullet \bullet}}\frac{N_A+N_C}{N_{\bullet \bullet}}/(p_{i\bullet}^*p_{\bullet j}^*).\]
By the central limit theorem, the asymptotic distribution of the vector random variable $\hat p_{ABC} = [N_A/N_{\bullet \bullet} \quad N_B/N_{\bullet \bullet} \quad N_C/N_{\bullet \bullet}]^T$ is:
\begin{align*}
    &\sqrt{N_{\bullet \bullet}}\left(\left[\frac{N_A}{N_{\bullet\bullet}}\; \frac{N_B}{N_{\bullet\bullet}}\;\frac{N_C}{N_{\bullet\bullet}}\right]^T - [p_A\; p_B\; p_C]^T\right) \quad \xrightarrow{d} \quad N(0,\Sigma),
\end{align*}
where
\[\Sigma=\begin{bmatrix}
p_A(1-p_A) & -p_A p_B & -p_A p_C \\
-p_A p_B & p_B(1-p_B) & -p_B p_C \\
-p_A p_C & -p_B p_C & p_C(1-p_C) 
\end{bmatrix}  \]
To derive the asymptotic distribution of  $\hat E_{ij}/ E_{ij} = \frac{N_A+N_B}{N_{\bullet \bullet}}\frac{N_A+N_C}{N_{\bullet \bullet}}/(p_{i\bullet}^*p_{\bullet j}^*)$, we define a function $g(p_A,p_B,p_C) = (p_A+p_B)(p_A+p_C)$; Its partial derivative vector is:
\[\nabla g = [2p_A+p_B+p_C\quad p_A+p_C \quad p_B+p_C]^T\]
By the Delta method, we have:
\[\sqrt{N_{\bullet \bullet}}\left(\frac{N_A+N_B}{N_{\bullet \bullet}}\frac{N_A+N_C}{N_{\bullet \bullet}} - (p_A+p_B)(p_A+p_C)\right) \quad \xrightarrow{d} \quad N(0,\nabla g^T\Sigma\nabla g),\]
where $\nabla g^T\Sigma\nabla g = (p_A+p_B)(p_A+p_C)[4p_A+p_B+p_C-4(p_A+p_B)(p_A+p_C)]$. Thus,
\[\sqrt{N_{\bullet \bullet}}\left(\frac{\hat E_{ij}}{E_{ij}} - \frac{(p_A+p_B)(p_A+p_C)}{p_{i\bullet}^*p_{\bullet j}^*}\right) \quad \xrightarrow{d} \quad N\left(0,\frac{\nabla g^T\Sigma\nabla g}{(p_{i\bullet}^*p_{\bullet j}^*)^2}\right),\]

The asymptotic MSE of $\hat E_{ij}/ E_{ij}$ is the asymptotic expectation of $(\hat E_{ij}/ E_{ij} -1)^2$.  Plugin in $p_A = p_{ij}$, $p_B = \sum_{j'\neq j} p_{ij'}$,  $p_C = \sum_{i'\neq i} p_{i'j}$, after some simplifications, we obtain the asymptotic MSE of $\hat E_{ij}/ E_{ij}$:
\begin{equation} \label{eqn:AMSE_hatE}
    \text{AMSE}\left( \frac{\hat E_{ij}}{E_{ij}} \right)\\ 
= \frac{\bar \lambda_{i\bullet} \bar \lambda_{\bullet j}}{\bar \lambda_{\bullet \bullet}^4N_{\bullet \bullet}}\left[\bar \lambda_{\bullet \bullet}\left( 2\lambda_{ij}+ \frac{\bar \lambda_{i \bullet}}{p_{i\bullet}^*} + \frac{\bar \lambda_{\bullet j}}{p_{\bullet j}^*}\right) - 4\bar \lambda_{i \bullet}\bar \lambda_{\bullet j} \right] + \left[ \frac{\bar \lambda_{i \bullet}\bar \lambda_{\bullet j}}{\bar \lambda_{\bullet \bullet}^2} -1\right]^2,
\end{equation}
where $\bar \lambda_{i \bullet} = \sum_{l=1}^J\lambda_{il}p_{\bullet l}^*$, $\bar \lambda_{\bullet j} = \sum_{k=1}^I\lambda_{kj}p_{k\bullet}^*$ and $\bar \lambda_{\bullet \bullet} = \sum_{k=1}^I\sum_{l=1}^J\lambda_{kl}p_{k\bullet}^*p_{\bullet l}^*$.

The AMSE of $\tilde E_{ij}/E_{ij}$ is derived similarly. We write 
\[\frac{\tilde E_{ij}}{E_{ij}} = \frac{\frac{N_{iJ}}{N_{\bullet\bullet}}\frac{N_{Ij}}{N_{\bullet\bullet}}}{\frac{N_{IJ}}{N_{\bullet\bullet}}(p_{i\bullet}^*p_{\bullet j}^*)}.\]
By the CLT, we have:
\begin{align*}
    &\sqrt{N_{\bullet \bullet}}\left(\left[\frac{N_{iJ}}{N_{\bullet\bullet}}\; \frac{N_{Ij}}{N_{\bullet\bullet}}\;\frac{N_{IJ}}{N_{\bullet\bullet}}\right]^T - [p_{Ij}\; p_{iJ}\; p_{IJ}]^T\right) \quad \xrightarrow{d} \quad N(0,\Sigma_2),
\end{align*}
where
\[\Sigma_2=\begin{bmatrix}
p_{Ij}(1-p_{Ij}) & -p_{Ij} p_{iJ} & -p_{Ij} p_{IJ} \\
-p_{Ij} p_{iJ} & p_{iJ}(1-p_{iJ}) & -p_{iJ} p_{IJ} \\
-p_{Ij} p_{IJ} & -p_{iJ} p_{IJ} & p_{IJ}(1-p_{IJ}) 
\end{bmatrix}  \]
To derive the asymptotic distribution of  $\tilde E_{ij}/ E_{ij} = \frac{\frac{N_{iJ}}{N_{\bullet\bullet}}\frac{N_{Ij}}{N_{\bullet\bullet}}}{\frac{N_{IJ}}{N_{\bullet\bullet}}(p_{i\bullet}^*p_{\bullet j}^*)}$, We define a function $g_2(p_{Ij},p_{iJ},p_{IJ}) = \frac{p_{Ij}p_{iJ}}{p_{IJ}}$; Its partial derivative vector is:
\[\nabla g_2 = [p_{iJ}/p_{IJ} \quad p_{Ij}/p_{IJ} \quad -p_{iJ}p_{Ij}/p_{IJ}^2]^T.\]
Following the same steps as $\hat E_{ij}$, the AMSE of $\tilde E_{ij}/E_{ij}$ is:
\begin{equation} \label{eqn:AMSE_tildeE}
   \text{AMSE} \left(\frac{\tilde E_{ij}}{E_{ij}} \right) =\frac{1}{N_{\bullet \bullet}}\left[\frac{1}{p_{I\bullet}^*p_{\bullet J}^*\bar \lambda_{\bullet \bullet}}+\frac{1}{p_{i\bullet}^*p_{\bullet J}^*\bar \lambda_{\bullet \bullet}}+\frac{1}{p_{I\bullet}^*p_{\bullet j}^*\bar \lambda_{\bullet \bullet}}-\frac{1}{\bar \lambda_{\bullet \bullet}^2} \right] +\left(\frac{1}{\bar \lambda_{\bullet \bullet}}-1\right)^2. 
\end{equation}

\subsection{proof of Theorem 2}

First, we define  $S_{ij}(\lambda_{ij}, \bar \lambda_{i\bullet}, \bar \lambda_{\bullet j}, \bar \lambda_{\bullet \bullet})$ as the scaled difference between the asymptotic MSE of $\hat E_{ij}/E_{ij}$ \eqref{eqn:AMSE_hatE} and $\tilde E_{ij}/E_{ij}$ \eqref{eqn:AMSE_tildeE}:
\begin{equation*} 
\begin{split}
\text{AMSE}(\hat E_{ij}/E_{ij}) - \text{AMSE}(\tilde E_{ij}/E_{ij}) = \frac{1}{N_{\bullet \bullet}\lambda_{\bullet \bullet}^4}S_{ij}(\lambda_{ij}, \bar \lambda_{i\bullet}, \bar \lambda_{\bullet j}, \bar \lambda_{\bullet \bullet}),\\
\end{split}
\end{equation*}
where $S_{ij}$ is given below and is deposited as the summation of $s_{ij1}$ (contains the first two terms of $S_{ij}$) and $s_{ij2}$ (contains the third term of $S_{ij}$):
\begin{equation*} 
\begin{split}
S_{ij}(\lambda_{ij}, \bar \lambda_{i\bullet}, \bar \lambda_{\bullet j}, \bar \lambda_{\bullet \bullet}) = &\bar \lambda_{i\bullet} \bar \lambda_{\bullet j} \bar \lambda_{\bullet \bullet}\left( 2\lambda_{ij} + \frac{\bar \lambda_{\bullet j}}{p_{\bullet j}^*} + \frac{\bar \lambda_{i\bullet}}{p_{i\bullet}^*} - \frac{4\bar \lambda_{i\bullet} \bar \lambda_{\bullet j}}{\bar \lambda_{\bullet \bullet}}\right) + \bar \lambda_{\bullet \bullet}^2 +  \\
&N_{\bullet \bullet}\left[ \bar \lambda_{i\bullet}^2 \bar \lambda_{\bullet j}^2 - 2\bar \lambda_{i\bullet} \bar \lambda_{\bullet j}\bar \lambda_{\bullet \bullet}^2 + (2 - \frac{a_{ij}}{N_{\bullet \bullet}})\bar \lambda_{\bullet \bullet}^3 - \bar \lambda_{\bullet \bullet}^2\right]\\
&= s_{ij1}(\lambda_{ij}, \bar \lambda_{i\bullet}, \bar \lambda_{\bullet j}, \bar \lambda_{\bullet \bullet}) + s_{ij2}(\bar \lambda_{i\bullet}, \bar \lambda_{\bullet j}, \bar \lambda_{\bullet \bullet}),
\end{split}
\end{equation*}

and $a_{ij} = \frac{1}{p_{I\bullet}^*p_{\bullet J}^*}+ \frac{1}{p_{i\bullet}^*p_{\bullet J}^*}+\frac{1}{p_{I\bullet}^*p_{\bullet j}^*}$.  Since both terms of $s_{ij1}$ are positive (see equation \eqref{eqn:AMSE_hatE}: the first term is the scaled asymptotic variance of $\hat E_{ij}$),  $s_{ij1}(\lambda_{ij}, \bar \lambda_{i\bullet}, \bar \lambda_{\bullet j}, \bar \lambda_{\bullet \bullet})$ 
 is always positive. Thus, if $s_{ij2}(\bar \lambda_{i\bullet}, \bar \lambda_{\bullet j}, \bar \lambda_{\bullet \bullet})$ is positive, then $S_{ij} > 0$ meaning that the proposed null value estimator $\tilde E_{ij}$ performs better then the original estimator $\hat E_{ij}$ in terms of scaled AMSE. In the following, we investigate the condition for  $\bar \lambda_{i\bullet}$, $\bar \lambda_{\bullet j}$ and $\bar \lambda_{\bullet \bullet}$ lead to positive $s_{ij2}(\bar \lambda_{i\bullet}, \bar \lambda_{\bullet j}, \bar \lambda_{\bullet \bullet})$. By the regularity condition (D), we establish a upper and lower bound for $a_{ij}$:
\begin{align}
    a_{ij} &= \frac{N_{\bullet\bullet}}{N_{\bullet\bullet}p_{I\bullet}^*p_{\bullet J}^*}+ \frac{N_{\bullet\bullet}}{N_{\bullet\bullet}p_{i\bullet}^*p_{\bullet J}^*}+\frac{N_{\bullet\bullet}}{N_{\bullet\bullet}p_{I\bullet}^*p_{\bullet j}^*} \nonumber\\
    & = N_{\bullet\bullet}\left(\frac{1}{E_{iJ}}+ \frac{1}{E_{Ij}}+\frac{1}{E_{IJ}} \right)\nonumber\\
    &\leq \frac{3N_{\bullet\bullet}}{E_{\min}}. \nonumber\\
    &\Rightarrow \frac{3}{N_{\bullet \bullet}}<\frac{a_{ij}}{N_{\bullet \bullet}}<\frac{3}{E_{\min}},
\end{align}
where $E_{\min}$ is the smallest expected count among reference cells. When $\bar \lambda_{i\bullet}\bar \lambda_{\bullet j} = \bar \lambda_{\bullet \bullet}^2$, it straightforward to show that $ s_{ij2}(\bar \lambda_{i\bullet}, \bar \lambda_{\bullet j}, \bar \lambda_{\bullet \bullet})<0$. The partial derivative $\frac{\partial s_{ij2}(\bar \lambda_{i\bullet}, \bar \lambda_{\bullet j}, \bar \lambda_{\bullet \bullet})}{\partial (\bar \lambda_{i\bullet}\bar \lambda_{\bullet j}) }$ shows that $s_{ij2}(\bar \lambda_{i\bullet}, \bar \lambda_{\bullet j}, \bar \lambda_{\bullet \bullet})$ is monotonically increasing in $\bar \lambda_{i\bullet}\bar \lambda_{\bullet j}$ if $\bar \lambda_{i\bullet}\bar \lambda_{\bullet j}> \bar \lambda_{\bullet \bullet}^2$. Now, consider substitution $\bar \lambda_{i\bullet}\bar \lambda_{\bullet j} = x$ and consider a fixed $\bar \lambda_{\bullet \bullet}$. Then, $s_{ij2}(x)$ is a quadratic function of $x > 1$:
\[s_{ij2}(x) = x^2 - 2\bar \lambda_{\bullet \bullet}^2 x +\bar \lambda_{\bullet \bullet}^2[(2 - a_{ij}/N_{\bullet \bullet})\bar \lambda_{\bullet \bullet} -1].\]
The square root determinant of the equation $s_{ij2}(x) = 0$ is $\sqrt{\Delta} = 2\bar \lambda_{\bullet \bullet}^2\sqrt{\left( 1- \frac{1}{\bar \lambda_{\bullet \bullet}}\right)^2 + \frac{a_{ij}}{N_{\bullet \bullet}\bar \lambda_{\bullet \bullet}}} > 0$, which guarantees that equation $s_{ij2}(x) = 0$ has two roots ($A_{ij}, B_{ij}$) in real number that are:
\[
A_{i^*j^*} =\left\{1 -  \sqrt{\left(1-\frac{1}{\bar \lambda_{\bullet \bullet}} \right)^2 + \left( \frac{1}{E_{i^* J}}+\frac{1}{E_{I j^*}}+\frac{1}{E_{IJ}}\right) \frac{1}{\bar \lambda_{\bullet \bullet}}}\right\} \bar \lambda_{\bullet \bullet}^2,
\]
and 
\[
B_{i^*j^*} =\left\{1 +  \sqrt{\left(1-\frac{1}{\bar \lambda_{\bullet \bullet}} \right)^2 + \left( \frac{1}{E_{i^* J}}+\frac{1}{E_{I j^*}}+\frac{1}{E_{IJ}}\right) \frac{1}{\bar \lambda_{\bullet \bullet}}}\right\} \bar \lambda_{\bullet \bullet}^2
\] 
The derivatives of $A_{ij}$ and $B_{ij}$ with respect to $\bar \lambda_{\bullet \bullet}$ show that $A_{ij}$ and $B_{ij}$ are monotonically increasing for $\bar \lambda_{\bullet \bullet} > 1$ and $\frac{3}{N_{\bullet \bullet}}<\frac{a_{ij}}{N_{\bullet \bullet}}<\frac{3}{E_{\min}}$. Then, the solution for $s_{ij2}(x) > 0$ is: $x \in [1, A_{ij}) \cup (B_{ij}, \infty)$ for $A_{ij} \geq 1$; $x \in (B_{ij}, \infty)$ for $A_{ij} < 1$.

\subsection{Corollary 1 and associated proofs}

\begin{corollary} \label{corollary1}
    Suppose that all the regularity conditions of Theorem 2 are satisfied. Then, we have the following for non-reference cells in $i^*$-th row $\{(i^*, j'): j'\neq J\}$ (similar result can be derived for non-reference cells in a column):
    \begin{enumerate}
        \item If the null baseline marginal probabilities $\{p_{i \bullet}^*\}$ and $\{p_{\bullet j}^*\}$ satisfy $0 < p_{i\bullet}^*, p_{\bullet j}^*< \frac{1}{2}$ for all $i = 1, \dots, (I-1)$ and for all $j = 1, \dots, (J-1)$ and the associated column-wise average signal strength equals the table-wise average signal strength: $\bar \lambda_{\bullet j'} = \bar \lambda_{\bullet \bullet}$, then, for a large enough $\lambda_{i^*j^*}$ for some $i^* \in \{1, \dots, I-1\}$ and $j^* \in \{1, \dots, J-1\}$ such that,
    \begin{align}\label{cor(1)}
        &\bar \lambda_{i^* \bullet} >\left\{1+ \sqrt{\left(1-\frac{1}{\bar \lambda_{\bullet \bullet}} \right)^2 + \frac{3}{E_{\min}\bar \lambda_{\bullet \bullet}}}\right\}\bar \lambda_{\bullet \bullet}, 
    \end{align}
    the estimating performance of $\tilde E$ is better than $\hat E$:
    \begin{align*}
        &\text{AMSE}(\hat E_{i^* j'}/E_{i^* j'}) >  \text{AMSE}(\tilde E_{i^* j'}/E_{i^* j'}) \quad  \text{for } j'\neq J.
    \end{align*}

    \item We further assume that the null baseline marginal probabilities $\{p_{i \bullet}^*\}$ and $\{p_{\bullet j}^*\}$ satisfy $0 < p_{i\bullet}^*, p_{\bullet j}^*< \frac{1}{7}$ for all $i = 1, \dots, (I-1)$ and for all $j = 1, \dots, (J-1)$. Then, if the the associated column-wise average signal strength equals 1: $\bar \lambda_{\bullet j'} = 1$. Among non-reference cells in $i^*$-th row $\{(i^*, j'): j'\neq J\}$, for a large enough $\lambda_{i^*j^*}$ for some $i^* \in \{1, \dots, I-1\}$ and $j^* \in \{1, \dots, J-1\}$ such that,
    \begin{align}\label{cor(2)}
        &\bar \lambda_{i^* \bullet} >\left\{1+ \sqrt{\left(1-\frac{1}{\bar \lambda_{\bullet \bullet}} \right)^2 + \frac{3}{E_{\min}\bar \lambda_{\bullet \bullet}}}\right\}\bar \lambda_{\bullet \bullet}^2, 
    \end{align}
    the estimating performance of $\tilde E$ is better than $\hat E$:
    \begin{align*}
        &\text{AMSE}(\hat E_{i^* j'}/E_{i^* j'}) >  \text{AMSE}(\tilde E_{i^* j'}/E_{i^* j'}) \quad  \text{for } j'\neq J.
    \end{align*}
    \end{enumerate}
\end{corollary}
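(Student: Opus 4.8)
The plan is to deduce both parts directly from Theorem~\ref{thm2:Eestimator} after replacing the cell-dependent threshold $B_{i^*j'}$ by a bound that only involves $\bar\lambda_{\bullet\bullet}$ and $E_{\min}$. Recall that Theorem~\ref{thm2:Eestimator} certifies $\text{AMSE}(\hat E_{i^*j'}/E_{i^*j'}) > \text{AMSE}(\tilde E_{i^*j'}/E_{i^*j'})$ for a non-reference cell $(i^*,j')$ whenever $\bar\lambda_{i^*\bullet}\bar\lambda_{\bullet j'} > B_{i^*j'}$, because the branch $(B_{i^*j'},+\infty)$ appears in the sufficient region \eqref{eqn:sufficient_condition_thm2} both when $A_{i^*j'}\le 1$ and when $A_{i^*j'}>1$. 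First I would record that $\bar\lambda_{\bullet\bullet} = \sum_{k,l}\lambda_{kl}p_{k\bullet}^*p_{\bullet l}^* \ge 1$ since every $\lambda_{kl}\ge 1$ under condition~(A). Then, by regularity condition~(D), $E_{i^*J}, E_{Ij'}, E_{IJ} \ge E_{\min}$, so $\tfrac{1}{E_{i^*J}}+\tfrac{1}{E_{Ij'}}+\tfrac{1}{E_{IJ}} \le \tfrac{3}{E_{\min}}$; because $u \mapsto \bigl\{1+\sqrt{(1-1/\bar\lambda_{\bullet\bullet})^2+u/\bar\lambda_{\bullet\bullet}}\,\bigr\}\bar\lambda_{\bullet\bullet}^2$ is nondecreasing in $u\ge 0$, this gives
\[
B_{i^*j'}\;\le\;\kappa\,\bar\lambda_{\bullet\bullet}^2,\qquad
\kappa:=1+\sqrt{\Bigl(1-\tfrac{1}{\bar\lambda_{\bullet\bullet}}\Bigr)^2+\tfrac{3}{E_{\min}\bar\lambda_{\bullet\bullet}}}\,,
\]
with $\kappa>1$ strictly since the radicand is strictly positive (its second summand alone is positive).

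The two parts then follow by substituting the respective column-average hypotheses. For Part~1, condition \eqref{cor(1)} reads exactly $\bar\lambda_{i^*\bullet} > \kappa\,\bar\lambda_{\bullet\bullet}$; multiplying through by $\bar\lambda_{\bullet j'}=\bar\lambda_{\bullet\bullet}>0$ gives
\[
\bar\lambda_{i^*\bullet}\bar\lambda_{\bullet j'}=\bar\lambda_{i^*\bullet}\bar\lambda_{\bullet\bullet}>\kappa\,\bar\lambda_{\bullet\bullet}^2\ge B_{i^*j'},
\]
and also $\bar\lambda_{i^*\bullet}\bar\lambda_{\bullet j'}>\kappa\,\bar\lambda_{\bullet\bullet}^2\ge\bar\lambda_{\bullet\bullet}^2\ge 1$, so the product lies in the region $(B_{i^*j'},+\infty)$ that is sufficient in \eqref{eqn:sufficient_condition_thm2} regardless of the sign of $A_{i^*j'}-1$; Theorem~\ref{thm2:Eestimator} then yields the claimed AMSE inequality for every non-reference column $j'$ with $\bar\lambda_{\bullet j'}=\bar\lambda_{\bullet\bullet}$. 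For Part~2, condition \eqref{cor(2)} reads $\bar\lambda_{i^*\bullet}>\kappa\,\bar\lambda_{\bullet\bullet}^2$, and since $\bar\lambda_{\bullet j'}=1$ we get $\bar\lambda_{i^*\bullet}\bar\lambda_{\bullet j'}=\bar\lambda_{i^*\bullet}>\kappa\,\bar\lambda_{\bullet\bullet}^2\ge B_{i^*j'}$ (and $>1$), so again Theorem~\ref{thm2:Eestimator} closes the argument.

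What remains is to justify the role of the marginal-probability bounds $p_{i\bullet}^*,p_{\bullet j}^*<\tfrac12$ (resp.\ $<\tfrac17$): they guarantee that the thresholds \eqref{cor(1)} (resp.\ \eqref{cor(2)}) can actually be met, e.g.\ by taking $\lambda_{i^*j^*}$ large, so that the corollary is non-vacuous. The key elementary estimate here is $\kappa\le 2$, which follows from $(1-1/t)^2+3/(E_{\min}t)\le 1$ for all $t\ge 1$ and $E_{\min}\ge 4$ (rearranged, $1/t+3/E_{\min}\le 2$, true since $1/t\le 1$ and $3/E_{\min}\le 3/4$). For Part~1, letting $\lambda_{i^*j^*}\to\infty$ with all other entries held fixed, a direct computation gives $\bar\lambda_{i^*\bullet}/\bar\lambda_{\bullet\bullet}\to 1/p_{i^*\bullet}^*$, and $1/p_{i^*\bullet}^*>2\ge\kappa$ precisely when $p_{i^*\bullet}^*<\tfrac12$, so \eqref{cor(1)} holds for all sufficiently large $\lambda_{i^*j^*}$. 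Part~2 is the delicate step: $\kappa\,\bar\lambda_{\bullet\bullet}^2$ grows quadratically in $\bar\lambda_{i^*\bullet}$ through $\bar\lambda_{\bullet\bullet}\ge p_{i^*\bullet}^*\bar\lambda_{i^*\bullet}+(1-p_{i^*\bullet}^*)$, so one must instead identify a bounded window of feasible values; reducing \eqref{cor(2)} in the extremal configuration (all entries outside the target row and column equal to $1$) to a scalar inequality of the form $w-1>p_{i^*\bullet}^*(2w^2-1)$ with $w=\bar\lambda_{\bullet\bullet}>1$, and showing it is solvable exactly when $p_{i^*\bullet}^*$ lies below $\tfrac17$, is the main obstacle and the place where that constant is pinned down.
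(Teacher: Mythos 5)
Your proposal is correct and follows essentially the same route as the paper: bound $B_{i^*j'}$ above by $\kappa\,\bar\lambda_{\bullet\bullet}^2$ using $E_{\min}\geq 4$, invoke the $(B_{i^*j'},+\infty)$ branch of the sufficient condition after multiplying by $\bar\lambda_{\bullet j'}=\bar\lambda_{\bullet\bullet}$ (resp.\ $=1$), and verify achievability in the single-signal configuration via $\kappa\leq 2$ and the same scalar quadratic $w-1>p(2w^2-1)$ whose discriminant analysis is exactly how the paper pins down the constant $\tfrac17$. The only quibble is that this quadratic is solvable for all $p$ below $(2-\sqrt{2})/4\approx 0.146$, so $p<\tfrac17$ is sufficient rather than ``exactly'' the threshold.
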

\begin{proof}
    (1) and (2) can be simply obtained by inequality manipulations based on Theorem 2. Based on the solution for $s_{ij2}(x) > 0$, (1) and (2) of the Corollary 1 are obtained as follows:

(1)If there exists a large enough $\lambda_{i^*j^*}$ such that
\begin{align*}
    &\bar \lambda_{i^* \bullet} >\left\{1+ \sqrt{\left(1-\frac{1}{\lambda_{\bullet \bullet}} \right)^2 + \frac{3}{E_{\min}\lambda_{\bullet \bullet}}}\right\} \bar\lambda_{\bullet \bullet},
\end{align*}
then, for all the non-reference cells in $i^*$-th row $\{(i^*,j'): j' \neq J\}$, we have:
\begin{align*}
    \bar \lambda_{i^* \bullet} \bar \lambda_{ \bullet j'} &> \left\{1+ \sqrt{\left(1-\frac{1}{\bar \lambda_{\bullet \bullet}} \right)^2 + \frac{3}{E_{\min}\bar \lambda_{\bullet \bullet}}}\right\}\bar \lambda_{\bullet \bullet}\bar \lambda_{ \bullet j'} \\
    &>\left\{1+ \sqrt{\left(1-\frac{1}{\bar \lambda_{\bullet \bullet}} \right)^2 + \frac{3}{E_{\min}\bar \lambda_{\bullet \bullet}}}\right\}\bar \lambda_{\bullet \bullet}^2 \\
    &= B_{i^*j'}
\end{align*}
Therefore, $\text{AMSE}(\hat E_{i^* j'}/E_{i^* j'}) > \text{AMSE}(\tilde E_{i^* j'}/E_{i^* j'}) \quad \text{for } j'\neq J.$ Similar conclusion can be established for non-reference cells that are in the same column. Next, let us check the existence of $\bar \lambda_{i \bullet} > \frac{B_{i^*j'}}{\bar \lambda_{\bullet \bullet}}$

In a simplified case where the cell $(i, j)$ is the only signal cell, we have $\bar \lambda_{\bullet \bullet} = 1 + p^*_{i \bullet}p^*_{\bullet j}(\lambda_{ij}-1)$ and $\bar \lambda_{i^* \bullet} = 1 + p^*_{i \bullet}(\lambda_{ij}-1)$. Then, 
\[\bar \lambda_{i^* \bullet} = \frac{\bar \lambda_{\bullet \bullet}}{p^*_{\bullet j}} +1 - \frac{1}{p^*_{\bullet j}}.\]
We provide a upper bound of $B_{ij}$ for $\bar \lambda_{\bullet \bullet}>1$:
\begin{align*}
    B_{ij} &= \left(1+ \sqrt{\left( 1- \frac{1}{\bar \lambda_{\bullet \bullet}}\right)^2 + \frac{a_{ij}}{N_{\bullet \bullet}\bar \lambda_{\bullet \bullet}}}\right)\bar \lambda_{\bullet \bullet}^2\\
    &= \bar \lambda_{\bullet \bullet}^{2}\ +\ \bar \lambda_{\bullet \bullet} \sqrt{\left(\bar \lambda_{\bullet \bullet}-1\right)^{2}+\frac{a_{ij}\bar \lambda_{\bullet \bullet}}{N_{\bullet \bullet}}} \\
    &< \bar \lambda_{\bullet \bullet}^{2}\ +\ \bar \lambda_{\bullet \bullet} \sqrt{\left(\bar \lambda_{\bullet \bullet}-1\right)^{2}+\bar \lambda_{\bullet \bullet}}\\
    &=\bar \lambda_{\bullet \bullet}^{2}\ +\ \bar \lambda_{\bullet \bullet} \sqrt{\bar \lambda_{\bullet \bullet}^2-\bar \lambda_{\bullet \bullet}+1} \\
    &< \bar \lambda_{\bullet \bullet}^{2}\ + \ \bar \lambda_{\bullet \bullet}^{2} = 2\bar \lambda_{\bullet \bullet}^{2}
\end{align*}
If $\bar \lambda_{i^* \bullet}$ is greater than $ 2\bar \lambda_{\bullet \bullet}$, it is greater than $\frac{B_{i^*j'}}{\bar \lambda_{\bullet \bullet}}$.
\begin{align}\label{eqn:check_existence1}
    \bar \lambda_{i^* \bullet} &= \frac{\bar \lambda_{\bullet \bullet}}{p^*_{\bullet j}} +1 - \frac{1}{p^*_{\bullet j}} > 2\bar \lambda_{\bullet \bullet} \nonumber\\
    &\Rightarrow (\frac{1}{p^*_{\bullet j}} - 2)\bar \lambda_{\bullet \bullet} -\frac{1}{p^*_{\bullet j}} +1 > 0.
\end{align}
We note that $\frac{1}{p^*_{\bullet j}} - 2 > 0$. 
Thus, for a large enough $\lambda_{ij}$, we have $\bar \lambda_{i \bullet} > \frac{B_{i^*j'}}{\bar \lambda_{\bullet \bullet}}$. Under a general situation in the context of SRS data, $\bar \lambda_{\bullet \bullet} = c_{\bullet \bullet} + p^*_{i \bullet}p^*_{\bullet j}(\lambda_{ij}-1)$ and $\bar \lambda_{i^* \bullet} = c_{i \bullet} + p^*_{i \bullet}(\lambda_{ij}-1)$, where $c_{\bullet \bullet} = \sum_{i'\neq i}\sum_{j' \neq j}\lambda_{i'j'}p^*_{i' \bullet}p^*_{\bullet j'} + p^*_{i \bullet}p^*_{\bullet j}\geq 1$ and $c_{i \bullet} = \sum_{j' \neq j}\lambda_{ij'}p^*_{\bullet j'} + p^*_{\bullet j} \geq 1$, similar condition as equation \eqref{eqn:check_existence1} can be established.

(2) If there exists a large enough $\lambda_{i^*j^*}$ such that
\begin{align*}
    &\bar \lambda_{i^* \bullet} >\left\{1+ \sqrt{\left(1-\frac{1}{\lambda_{\bullet \bullet}} \right)^2 + \frac{3}{E_{\min}\lambda_{\bullet \bullet}}}\right\} \bar\lambda_{\bullet \bullet}^2,
\end{align*}
then, for all the non-reference cells in $i^*$-th row $\{(i^*,j'): j' \neq J\}$, we have:
\begin{align*}
    \bar \lambda_{i^* \bullet} \bar \lambda_{ \bullet j'} &> \left\{1+ \sqrt{\left(1-\frac{1}{\bar \lambda_{\bullet \bullet}} \right)^2 + \frac{3}{E_{\min}\bar \lambda_{\bullet \bullet}}}\right\}\bar \lambda_{\bullet \bullet}^2 \bar \lambda_{ \bullet j'} \\
    &>\left\{1+ \sqrt{\left(1-\frac{1}{\bar \lambda_{\bullet \bullet}} \right)^2 + \frac{3}{E_{\min}\bar \lambda_{\bullet \bullet}}}\right\}\bar \lambda_{\bullet \bullet}^2 \\
    &= B_{i^*j'}
\end{align*}
In a simplified case where the cell $(i, j)$ is the only signal cell, we have $\bar \lambda_{\bullet \bullet} = 1 + p^*_{i \bullet}p^*_{\bullet j}(\lambda_{ij}-1)$ and $\bar \lambda_{i^* \bullet} = 1 + p^*_{i \bullet}(\lambda_{ij}-1)$. Then, 
\[\bar \lambda_{i^* \bullet} = \frac{\bar \lambda_{\bullet \bullet}}{p^*_{\bullet j}} +1 - \frac{1}{p^*_{\bullet j}}.\]
If $\bar \lambda_{i^* \bullet}$ is greater than $ 2\bar \lambda_{\bullet \bullet}^{2}$, it is greater than $B_{ij}$.
\begin{align}\label{eqn:check_existence}
    \bar \lambda_{i^* \bullet} &= \frac{\bar \lambda_{\bullet \bullet}}{p^*_{\bullet j}} +1 - \frac{1}{p^*_{\bullet j}} > 2\bar \lambda_{\bullet \bullet}^{2} \nonumber\\
    &\Rightarrow 2\bar \lambda_{\bullet \bullet}^{2} - \frac{\bar \lambda_{\bullet \bullet}}{p^*_{\bullet j}} +\frac{1}{p^*_{\bullet j}} -1 < 0.
\end{align}
The associated determinant is greater than 0: $\Delta(p^*_{\bullet j}) = \frac{1}{p^{*2}_{\bullet j}} - \frac{8}{p^*_{\bullet j}} + 8 > 0$ for $p_{\bullet j}^* < \frac{1}{7}$. Then, the solution for the inequality \eqref{eqn:check_existence} is:
\begin{equation} \label{eqn:sol_check_existence}
    \bar \lambda_{\bullet \bullet} \in \left(\frac{\frac{1}{p^*_{\bullet j}} - \sqrt{\Delta(p^*_{\bullet j})}}{4}, \frac{\frac{1}{p^*_{\bullet j}} + \sqrt{\Delta(p^*_{\bullet j})}}{4} \right).
\end{equation}
Thus, for a large enough $\lambda_{ij}$ such that the condition \eqref{eqn:sol_check_existence} is satisfied, we have $\bar \lambda_{i \bullet} > B_{ij}$. For a general situation in the context of SRS data, $\bar \lambda_{\bullet \bullet} = c_{\bullet \bullet} + p^*_{i \bullet}p^*_{\bullet j}(\lambda_{ij}-1)$ and $\bar \lambda_{i^* \bullet} = c_{i \bullet} + p^*_{i \bullet}(\lambda_{ij}-1)$, where $c_{\bullet \bullet} = \sum_{i'\neq i}\sum_{j' \neq j}\lambda_{i'j'}p^*_{i' \bullet}p^*_{\bullet j'} + p^*_{i \bullet}p^*_{\bullet j}\geq 1$ and $c_{i \bullet} = \sum_{j' \neq j}\lambda_{ij'}p^*_{\bullet j'} + p^*_{\bullet j} \geq 1$. Similar condition as \eqref{eqn:sol_check_existence} can be established.  
\end{proof}

\section{Improvement of numeric parameter estimation and initial value selection for 2-gamma and 2-gamma-zi model}

\subsection{Improved numeric parameter estimation}
In the "openEBGM" package, parameter estimation of $\{\alpha_1, \beta_1, \alpha_2, \beta_2, \omega\}$ is performed by directly optimizing the joint log marginal likelihood with build-in optimizers in R, e.g. "nlminb", "nlm" or "optim". We note that the parameters $\{\alpha_1, \beta_1, \alpha_2, \beta_2, \omega\}$ in the prior distribution are bounded--- $\alpha_1, \beta_1, \alpha_2, \beta_2 > 0$ and $0 < \omega < 1$. Optimizing with bounded parameters, optimizers can struggle near the boundary of the parameter space, leading to poor gradient estimation and unstable behavior. In our implementation of the 2-gamma and 2-gamma-zi model, we reparameterize parameters by $\alpha_1 = \log(\alpha_1^*)$; $\beta_1 = \log(\beta_1^*)$; $\alpha_2 = \log(\alpha_2^*)$; $\beta_2 = \log(\beta_2^*)$; and $p = expit(p^*)$, where $\{ \alpha_1^*, \beta_1^*, \alpha_2^*, \beta_2^*, p^*\} \in R^5$ in our implementation, which leads to better conditioning of the optimization problem and prevents issues such as large gradients or vanishing gradients, that are common in constrained spaces.

\subsection{Kmeans-based method: KM1}

The 2-gamma model assumes a mixture of two gamma distribution components prior to $\{\lambda_{ij}\}$. Although we $\{\lambda_{ij}\}$ are unobservable, we can use the empirical $O/E$ values $\tilde \lambda_{ij} = N_{ij}/E_{ij}$ as a proxy to reveal the latent bimodal structure of $\{\lambda_{ij}\}$. Applying the K-means algorithm with two clusters to $\{\tilde \lambda_{ij}\}$ helps partition $\{\tilde \lambda_{ij}\}$ values into two groups, aligning with the two gamma components of the prior distribution, thereby facilitating parameter initialization.

Let $\bar{x}_1$ be the mean of $\{\tilde \lambda_{ij}\}$ labeled as cluster 1 and $\tilde{s}_1$ is the associated the standard deviation of cluster 1. Similarly, $\bar{x}_2$ and $\tilde{s}_2$ are the mean and standard deviation of $\{\tilde \lambda_{ij}\}$ labeled as cluster 2 by the Kmeans algorithm. Then, the initial values of $\{\alpha_1, \beta_1, \alpha_2, \beta_2, \omega\}$ are given by:

\begin{equation} \label{eqn:km1}
\begin{split}
&\alpha_1 = \bar{x}_1^2/\tilde{s}_1; \quad \beta_1 = \bar{x}_1/\tilde{s}_1;\\
&\alpha_2 = \bar{x}_2^2/\tilde{s}_2; \quad \beta_2 = \bar{x}_2/\tilde{s}_2; \\
& \omega = \frac{\#\{ \tilde \lambda_{ij} :\text{ labeled as 1}\}}{\# \{\tilde \lambda_{ij}\}}.
\end{split}
\end{equation}

\subsection{Kmeans-based method: KM2}

The KM2 method takes structural zeros into account, which commonly appear in SRS data. By adopting the zero-inflated Poisson model,
$$z_{ij^*}\sim Bernoulli(\omega_{j^*});$$
\[    
N_{ij}|z_{ij} = z\sim\left\{
\begin{array}{ll}
      \delta_0,  & z=1 \\
      \text{Poisson($\lambda_{ij}\times E_{ij}$)}, & z=0, \\
\end{array} 
\right. \]
we estimate the number of structural zeros $n_0$ using a profile likelihood maximization technique proposed by Chakraborty et al. \cite{chakraborty2022use} under this model. Then, we compute the empirical $O/E$ values $\tilde \lambda_{ij} = N_{ij}/E_{ij}$ as the KM1 method and remove $n_0$ zeros from $\{\tilde \lambda_{ij}\}$. We apply the K-means algorithm with two clusters to the rest of $\{\tilde \lambda_{ij}\}$. The initial values of $\{\alpha_1, \beta_1, \alpha_2, \beta_2, \omega\}$ are computed from the Kmeans clustering results in the same way as the KM1 method (see equation \eqref{eqn:km1}).

\subsection{Method of Moment  approach}

The method of moments approach provides an initial set of values, $\{\alpha_1, \beta_1, \alpha_2, \beta_2, \omega\}$, for the 2-gamma model by solving a system of equations derived from equating the scaled two negative binomial mixture (marginal distribution) factorial moments with their sample version moments. We note that the $m$-th order scaled factorial moment is invariant of $(i, j)$-th observation and is given as:
\[E\left(\frac{(N_{ij})_m}{E_{ij}^m}\right) = \omega\beta_1^m\prod_{k=0}^{m-1}(\alpha_1+k)+(1-\omega)\beta_2^m\prod_{k=0}^{m-1}(\alpha_2+k),\]
where $(N_{ij})_m = \frac{N_{ij}!}{(N_{ij}-m)!}$. When $N_{ij}<m$, $(N_{ij}-m)!$ is undefined. Therefore, we excluded the observations ($\{N_{ij}\}$) that are smaller than $m$  from the computation of the corresponding scaled sample factorial moment, and we define $(N_{ij})_m = 0$ for $N_{ij}<m$. Then, the sample version of the $m$-th order scaled factorial moment is:
$${\frac{(\bar{N})_m}{E^m}}=\frac{1}{IJ-\#\{N_{ij}<m\}}\sum_i\sum_j\frac{(N_{ij})_m}{E_{ij}^m}.$$
We obtain a set of $\{\alpha_1, \beta_1, \alpha_2, \beta_2, \omega\}$ values by solving the following system of 5 equations:
$$\frac{(\bar{N})_m}{E^m} = E\left(\frac{(N_{ij})_m}{E_{ij}^m}\right), \quad \text{for $m=1,2,3,4,5$}.$$
In our implementation, this system of equations is solved by an R package "nleqslv".

\section{Details of the sparse general gamma model}

The details of two types of data augmentation mentioned in the section 4.4.1 are provided below.

The first data augmentation is used to estimate $\{\omega_k, h_k\}$ and introduces the unobserved allocation variable $S = \{S_{11}, S_{12}, \dots, S_{IJ}\}$ where $S_{ij} = (S_{ijk}: k = 1, \dots, K)$ a $K$-component $0-1$ vector that indicates which mixture component the $(i, j)$-th observation belongs to; $S_{ijk} = 1$ if observation $(i, j)$ belongs to component $k$ and $S_{ijk} = 0$, otherwise. We denote the complete data as $X_1 = \{S, N\}$ with the corresponding complete-data likelihood:
\[
p(X_1) = \frac{\Gamma(K\alpha)}{[\Gamma(\alpha)]^K}\prod_{k=1}^K \omega_k^{\alpha-1} \prod_{i=1}^I\prod_{j=1}^J\prod_{k=1}^K\left( \omega_k\frac{\Gamma(N_{ij}+r_k)}{\Gamma(r_k)\Gamma(N_{ij}+1)}(1-\theta_{ijk})^{N_{ij}}\theta_{ijk}^{r_k}\right)^{S_{ijk}}.
\]
The E-step of the algorithm entails obtaining the current iterate of the objective expected log complete-data likelihood function $Q$, which takes the following shape in $u$-th iteration:
\begin{align} \label{eqn:Q1}
    Q_1(\phi\mid \phi^{(u)}) &= \sum_{i=1}^I \sum_{j=1}^J\sum_{k=1}^K \tau_{ijk}^{(u)} \left[\log\Gamma(N_{ij}+r_k)-\log\Gamma(r_k)+N_{ij}\log(1-\theta_{ijk})+r_k\log(\theta_{ijk})\right] \nonumber \\
    &+ \log\Gamma(K\alpha) - K\log\Gamma(\alpha)+ (\alpha-1)\sum_{k=1}^K\log(\omega_k)+\sum_{i=1}^I\sum_{j=1}^J\sum_{k=1}^K (\alpha-1)\tau_{ijk}^{(u)}\log\omega_k,    
\end{align}
where $\phi^{(u)} = \{ \Omega^{(u)}, R^{(u)}, H^{(u)}\}$ and $\tau_{ijk}^{(u)} = p(S_{ij}=k\mid N,\phi^{(u)}) \propto \omega_k^{(u)} f_{NB}(N_{ij}\mid r_k^{(u)},\theta_{ijk}^{(u)})$. The associated M step involves maximizing $Q_1$ with respect to $\{ \omega_k, h_k \}$ to obtain their next iterates. This leads to the following updating rule for $\{\omega_k\}$:
\[
\omega_k^{(u+1)} = \max\left\{0,\frac{\alpha-1 + \sum_{i=1}^I\sum_{j=1}^J\tau_{ijk}^{(u)}}{I*J+K(\alpha-1)}\right\}.
\]
The update for $\{h_k\}$ is obtained by solving the following equation:
\[
\frac{\partial Q_1(\phi_1 \mid \phi_1^{(u)})}{\partial h_k} = \sum_{i=1}^I\sum_{j=1}^J \tau_{ijk}^{(u)}\left[ \frac{N_{ij}}{h_k} - \frac{E_{ij}(N_{ij}+r_k)}{1+E_{ij} h_k}\right]=0.
\]
While a closed-form solution for $h_k$ is not available from the above equation, a simple iterative updating scheme may be employed (see Appendix S3.1). 

For estimation of $\{r_k\}$, we consider a separate data augmentation leveraging the Poisson sum of logarithmic series representation of negative binomial random variables (Quenouille (1946)\cite{quenouille1949relation} ).  Following  Adamidis (1999)\cite{adamidis1999theory} we consider this data augmentation and use the following complete data triplet $X_2 = \{S, M, Y\}$ where
\[
M_{ij} \mid S_{ij}=k \sim \text{Poisson}(\lambda_{ijk}=-r_k\log\theta_{ijk}), \text{and} 
\]
\[
Y_{ijl} \mid S_{ij}=k \sim \text{Logarithmic}(\theta_{ijk}), \quad N_{ij} = \sum_{l=1}^{m_{ij}}Y_{ijl}.
\]
The complete likelihood given $X_2$ is:
\[
p(X_2) = \frac{\Gamma(K\alpha)}{[\Gamma(\alpha)]^K}\prod_{k=1}^K \omega_k^{\alpha-1}\prod_{i=1}^I\prod_{j=1}^J\prod_{k=1}^K \left[ \omega_k^{\alpha-1} \frac{\lambda_{ijk}^{m_{ij}}\exp(-\lambda_{ijk})}{\Gamma(m_{ij}+1)}\prod_{l=1}^{m_{ij}}\frac{1}{-\log(1-\theta_{ijk})}\frac{\theta_{ijk}^{y_{ijl}}}{y_{ijl}}\right]^{S_{ij}}.
\]
The $u$-th iteration of the corresponding expected log complete-data likelihood function $Q_2$ is then:
\begin{align}
    Q_2(\phi \mid \phi^{(u)}) &= \sum_{i=1}^I\sum_{j=1}^J\sum_{k = 1}^K \tau_{ijk}^{(u)}\left[ \delta_{ijk}^{(u)} \log\lambda_{ijk} - \lambda_{ijk} + \delta_{ijk}^{(u)} \log \left( \frac{1}{-\log(1-\theta_{ijk})}\right) +N_{ij}\log\theta_{ijk} \right] \nonumber \\
    &+ \log\Gamma(K\alpha) - K\log\Gamma(\alpha)+ (\alpha-1)\sum_{k=1}^K\log(\omega_k)+\sum_{i=1}^I\sum_{j=1}^J\sum_{k=1}^K (\alpha-1)\tau_{ijk}^{(u)}\log\omega_k,
\end{align}
where $\phi^{(u)} = \{ \Omega^{(u)}, R^{(u)}, H^{(u)}\}$, $\tau_{ijk}^{(u)} = p(S_{ij}=k\mid N,\phi^{(u)}) \propto \omega_k^{(u)} f_{\text{NB}} (N_{ij}\mid r_k^{(u)},\theta_{ijk}^{(u)})$, $\delta_{ijk}^{(u)} = E(M_{ij}\mid S_{ij}=k,\phi^{(u)}) = r_k^{(u)}[\Psi(N_{ij}+r_k^{(u)})-\Psi(r_k^{(u)})].$  The corresponding CM-step of the algorithm produces the following closed-form updating rule for $\{r_k\}$, given $\{h_k\}$:
\begin{equation*}
 r_k^{(u+1)} = \frac{\sum_{i=1}^I\sum_{j=1}^J\tau_{ijk}^{(u)}\delta_{ijk}^{(u)}}{\sum_{i=1}^I\sum_{j=1}^J\tau_{ijk}^{(u)}\log\theta_{ijk}}.
\end{equation*}

\subsection{A iterative way to update $h_k$}

In CM2-step, we propose a inner iterative algorithm to update $\{h_k\}$ given $r_k$ and $\{ \tau_{ijk} \}$.

\begin{algorithm}
\caption{The inner loop algorithm to update $\{h_k\}$ given $\{r_k\}$} \label{alg:inner-loop} 
\begin{algorithmic}
\Require Current iteration of $\{ r_k^{(u+1)}\}$, $\{\tau_{ijk}^{(u+1)}\}$ and a tolerance parameter $h_{\text{tol}}$.

$h_k^{\text{old}} = h_k^{u}$ for $k=1,\dots,K$;

\textbf{Loop forever} 

$\quad$ $h_k^{\text{new}} = \frac{\sum_i^I \sum_j^J \tau_{ijk}^{(u+1)} N_{ij}}{\sum_i^I \sum_j^J\frac{\tau_{ijk}^{(u+1)} E_{ij}(N_{ij}+r_k^{(u+1)})}{1+E_{ij}h_k^{\text{old}}}} $ 
 for $k=1,\dots,K$;
 
$\quad$\textbf{if} $\max\{|h_k^{\text{new}} - h_k^{\text{old}}| : k=1,\dots,K\}<h_{\text{tol}}$

       $\quad$$\quad$ \textbf{break};

    $\quad$\textbf{end if}
    
    $\quad$$h_k^{\text{old}} = h_k^{\text{new}}$
    
\textbf{end loop}

\textbf{return} $h_k^{\text{new}}$

\end{algorithmic}
\end{algorithm}

\subsection{Proof of Theorem 1}

Define the following notations (see also Dempster et al. (1977)\cite{dempster1977maximum}).

Denote the conditional density of complete data ($x$) given observed data ($y$) and parameter $\phi$ be $f(x|y,\phi)$ which is constructed by the ratio between the complete data density and observed data density given the parameter $\phi$.
\begin{equation*}
    f(x|y,\phi) = f(x|\phi)/f(y|\phi).
\end{equation*}
Denote $L(\phi)$ as the log-likelihood of observed data given parameter $\phi$:
\begin{equation*}
    L(\phi) = \log f(y|\phi)
\end{equation*}
In our Algorithm 1, $x_1 = \{ S,N\}$,   $x_2 = \{ S,M,Y,Z\}$, $y = N$ and $\phi = \{ \Omega, R,H\}$ with two different $Q$ functions:
\begin{equation*}
    Q_1(\phi'|\phi) = E[\log f(x_1|\phi') | y, \phi]
\end{equation*}
\begin{equation*}
    Q_2(\phi'|\phi) = E[\log f(x_2|\phi') | y, \phi]
\end{equation*}
Define $H$ functions by the conditional expectation of log conditional likelihood $(f(x|y,\phi'))$ given the observed data ($y$) and parameter ($\phi$):
\begin{equation*}
    H(\phi'|\phi) = E[\log f(x|y,\phi') | y, \phi]
\end{equation*}

Lemma 1 from Dempster et al. (1977), given below, ensures that $H$ function decreases as we update parameters in M-step of an EM algorithm and is used to estiabilish the convergence of our Algorithm 1.
\begin{theorem}
For any pair $(\phi',\phi)$ in $\Omega \times \Omega$,
\[ H(\phi'|\phi)\leq H(\phi|\phi), \]
with equality if and only if $f(x|y,\phi') = f(x|y,\phi)$ almost everywhere.  
\end{theorem}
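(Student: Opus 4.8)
The statement is exactly the assertion that a Kullback--Leibler divergence is nonnegative, so the plan is to reduce it to Jensen's inequality (the information inequality). First I would rewrite the difference of the two $H$-functionals as a single conditional expectation of a log-ratio,
\[
H(\phi \mid \phi) - H(\phi' \mid \phi) = E\!\left[ \log \frac{f(x \mid y, \phi)}{f(x \mid y, \phi')} \,\Big|\, y, \phi \right],
\]
which uses only the linearity of the conditional expectation that defines $H$. This quantity is well defined as soon as $f(\cdot \mid y, \phi')$ is positive wherever $f(\cdot \mid y, \phi)$ is (absolute continuity of the two conditional complete-data laws), a condition that holds automatically in the data-augmentation schemes underlying Algorithm~\ref{alg:algorithm1}, since the supports of the latent variables $S$, $M$, $Y$ there do not depend on the parameter $\phi$.

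Next I would apply Jensen's inequality to the convex map $t \mapsto -\log t$, which gives
\[
H(\phi \mid \phi) - H(\phi' \mid \phi) \;\ge\; -\log E\!\left[ \frac{f(x \mid y, \phi')}{f(x \mid y, \phi)} \,\Big|\, y, \phi \right].
\]
The inner expectation then evaluates directly: because $f(\cdot \mid y, \phi)$ is a genuine probability density in $x$, integrating the ratio against it collapses to $\int f(x \mid y, \phi')\,dx$, which equals $1$ when the two supports coincide and is at most $1$ otherwise; in either case its $-\log$ is $\ge 0$. Hence $H(\phi' \mid \phi) \le H(\phi \mid \phi)$, as claimed.

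For the equality characterization I would invoke the strict-convexity clause of Jensen's inequality: equality forces $f(x \mid y, \phi')/f(x \mid y, \phi)$ to be constant for $f(\cdot \mid y, \phi)$-almost every $x$, and normalization of both conditional densities pins that constant to $1$, so $f(x \mid y, \phi') = f(x \mid y, \phi)$ almost everywhere. The one genuinely delicate point --- the step I would handle most carefully --- is the measure-theoretic bookkeeping about the supports: one must exclude the log-ratio being $-\infty$ on a set of positive $f(\cdot \mid y, \phi)$-measure and justify the interchange of integration implicit in ``integrating the ratio against $f(\cdot \mid y, \phi)$''; as noted above, the discrete, parameter-free supports of $S$, $M$, $Y$ make this immediate in the present setting. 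Finally, this lemma feeds into the proof of Theorem~\ref{thm:thm1}: together with the monotone ascent of $Q_1$ and $Q_2$ across the two CM steps, it yields $L(\phi^{(u+1)}) \ge L(\phi^{(u)})$, which is what is needed there.
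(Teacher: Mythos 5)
Your proof is correct: the reduction of $H(\phi\mid\phi)-H(\phi'\mid\phi)$ to a Kullback--Leibler divergence, Jensen's inequality applied to the convex map $t\mapsto-\log t$, and the strict-convexity argument pinning the ratio to the constant $1$ in the equality case together constitute the standard proof of this result, and your attention to the parameter-free supports of the augmented variables $S$, $M$, $Y$ correctly disposes of the absolute-continuity issue. Note that the paper itself offers no proof of this statement---it is quoted verbatim as Lemma~1 of Dempster, Laird and Rubin (1977) and used as a cited ingredient in the convergence chain for Theorem~\ref{thm:thm1}---so your argument simply supplies the canonical information-inequality proof that the paper (and, essentially, the original reference) leaves to the reader.
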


The connection between $H(\phi'|\phi)$,  log likelihood of observed data $L(\phi)$ and the $Q(\phi'|\phi)$ function is described by the following Lemma.
\begin{theorem}
For any pair $(\phi',\phi)$ in $\Omega \times \Omega$,
\[ Q(\phi'|\phi) = L(\phi') + H(\phi'|\phi). \] 
\end{theorem}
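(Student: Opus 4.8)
The plan is to obtain the identity directly from the definitions of $Q$, $L$, and $H$ together with the elementary factorization of the complete-data density. First I would note that, by construction of the data augmentation, the observed data $y$ is a measurable function of the complete data $x$ (here $y = N$, and in the second augmentation $N_{ij} = \sum_{l=1}^{M_{ij}} Y_{ijl}$), so that $f(y \mid \phi')$ is genuinely the marginal of $f(x \mid \phi')$ and the conditional density $f(x \mid y, \phi') = f(x \mid \phi')/f(y \mid \phi')$ is well defined on the support of $x$. Taking logarithms of this relation gives the pointwise identity $\log f(x \mid \phi') = \log f(y \mid \phi') + \log f(x \mid y, \phi')$.

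Next I would take the conditional expectation $E[\,\cdot \mid y, \phi]$ of both sides with respect to the distribution of the complete data given $y$ under the parameter $\phi$. By definition the left-hand side becomes $Q(\phi' \mid \phi)$. On the right-hand side, $\log f(y \mid \phi') = L(\phi')$ depends on $y$ alone, hence is a constant under $E[\,\cdot \mid y, \phi]$ and passes through unchanged, while $E[\log f(x \mid y, \phi') \mid y, \phi] = H(\phi' \mid \phi)$ by definition. Assembling the three pieces yields $Q(\phi' \mid \phi) = L(\phi') + H(\phi' \mid \phi)$. The argument applies verbatim to each of the two complete-data triples $x_1 = \{S, N\}$ and $x_2 = \{S, M, Y\}$ used in Algorithm~\ref{alg:algorithm1}, producing the two decompositions $Q_1 = L + H_1$ and $Q_2 = L + H_2$ needed in the proof of Theorem~\ref{thm:thm1}.

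There is essentially no hard computation here; the only points requiring care are (i) verifying that $f(y \mid \phi')$ really is the marginal of the complete-data density so that the factorization is legitimate (which holds precisely because $y$ is a function of $x$), and (ii) checking that the relevant conditional expectations are finite so that linearity of expectation may be applied term by term — this is immediate since all densities in play are those of (mixtures of) Poisson, negative binomial, and logarithmic-series laws, all of which have finite log-moments under the stated regularity conditions. Combined with the preceding lemma, $H(\phi' \mid \phi) \le H(\phi \mid \phi)$, this identity is exactly what lets one conclude that increasing $Q_1$ and $Q_2$ in the CM steps forces the observed-data log-likelihood $L$ to increase monotonically, which completes the convergence argument for Algorithm~\ref{alg:algorithm1}.
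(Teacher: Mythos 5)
Your proof is correct and is the standard Dempster--Laird--Rubin derivation: write $\log f(x\mid\phi') = \log f(y\mid\phi') + \log f(x\mid y,\phi')$ (valid because $y$ is a deterministic function of the complete data $x$) and take $E[\,\cdot\mid y,\phi]$ term by term. This is precisely the argument the paper relies on --- it cites Dempster et al.\ (1977) and leaves the proof of this identity implicit --- so your write-up matches the intended route and, if anything, supplies the details (marginalization legitimacy, finiteness of the log-moments) that the paper omits.
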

The complete proof of Theorem 1 is provided below.

\begin{proof}

\end{proof}

In the $i$-th iteration of proposed algorithm,
\begin{equation}
    \begin{split}
        L(\phi^{(u)} = \{\omega^{(u)}, r^{(u)},h^{(u)}\}) &= Q_2(\phi^{(u)}|\phi^{(u)}) - H_2(\phi^{(u)}|\phi^{(u)})\\
        &\myeqa Q_2(\omega^{(u+1)}, r^{(u+1)},h^{(u)}|\phi^{(u)}) -  H_2(\phi^{(u)}|\phi^{(u)})\\
        &\myeqb Q_2(\omega^{(u+1)}, r^{(u+1)},h^{(u)}|\phi^{(u)}) -  H_2(\omega^{(u+1)}, r^{(u+1)},h^{(u)}|\phi^{(u)})\\
        &= L(\omega^{(u+1)}, r^{(u+1)},h^{(u)})\\
        &= Q_1(\omega^{(u+1)}, r^{(u+1)},h^{(u)}|\phi^{(u)}) - H_1(\omega^{(u+1)}, r^{(u+1)},h^{(u)}|\phi^{(u)})\\
        &\myeqaa Q_1(\omega^{(u+1)}, r^{(u+1)},h^{(u+1)}|\phi^{(u)}) -  H_1(\omega^{(u+1)}, r^{(u+1)},h^{(u)}|\phi^{(u)})\\
        &\myeqb Q_1(\omega^{(u+1)}, r^{(u+1)},h^{(u+1)}|\phi^{(u)}) - H_1(\omega^{(u+1)}, r^{(u+1)},h^{(u+1)}|\phi^{(u)})\\
        & = L(\phi^{(u+1)}),
    \end{split}
\end{equation}

which shows that for each iteration of our algorithm, the log-likelihood of observed data is guaranteed to increase.

\subsection{Selection of the Dirchlet hyperparameter $\alpha$}

To select the Dirichlet hyperparameter $\alpha$, we begin by specifying a set of candidate values for $\alpha$, i.e. $\{0.01, 0.25, 0.5, 0.75, 0.99\}$. Given an SRS data table, we fit the general gamma model with $\alpha$ values respectively. For each fitted model, posterior draws are generated, and their corresponding log model likelihoods are calculated. To assess model fit, we compute the approximate leave-one-out cross-validation information criterion (LOOIC) for each model using R package "LOO"\cite{gabry2024package}. The $\alpha$ value that minimizes the LOOIC is then selected as the optimal hyperparameter.

\section{Connection between Wasserstein distance and posterior RMSE}

Let $f_{ij}$ and $F_{ij}$ be the posterior and cumulative posterior density of $\lambda_{ij}$. Let $\lambda_{ij}^{\text{true}}$ be a degenerate random variable at $a$. Then, the Wasserstein-$p$ distance between $f_{ij}$ and $f_{\lambda_{ij}^{\text{true}}}$ is:

\begin{equation}
    \begin{split}
        \text{Wasserstein-}p(f_{ij}, f_{\lambda_{ij}^{\text{true}}})& = \left[\int_0^1|F^{-1}_{ij}(q) - F^{-1}_{\lambda_{ij}^{\text{true}}}(q)|^pdq\right]^{1/p}\\
        & = \left[\int_0^1|F^{-1}_{ij}(q) - a|^pdq\right]^{1/p}\\
        &= \left[\int_0^{\infty}|\lambda - a|^pf_{ij}d\lambda\right]^{1/p}
    \end{split}
\end{equation}

Therefore, when $p=1,2$, the corresponding Wasserstein distances are equivalent to the posterior absolute error and the root posterior squared error. 

\subsection{Exact form Wasserstein-1 distance Computation}

It's not hard to show that root posterior squared error or the Wasserstein-2 distance is constructed by the first and the second non-central moment of the posterior distribution which is easy to obtain. For discrete non-parametric empirical Bayes models, the analytical computation of Wasserstein distances can be obtained directly. 

The computation of Wasserstein-1 distance or the posterior absolute error of the general gamma-model is provided as follows.

Suppose our posterior distribution is a mixture of gamma distributions with $K$ components,
$$f(x) = \sum_{j=1}^K p_j\frac{1}{\Gamma(\alpha_j)\beta_j^{\alpha_j}}x^{\alpha_j-1}e^{-x/\beta_j}.$$

Denote the cumulative pdf of gamma as
$$\Phi(x|\alpha_j,\beta_j) = \int_0^{x} \frac{1}{\Gamma(\alpha_j)\beta_j^{\alpha_j}}x^{\alpha_j-1}e^{-x/\beta_j}dx.$$

Then, we have
$$ \int^{\infty}_{\lambda^{true}} xf(x)dx = \sum_{j=1}^K p_j\alpha_j\beta_j[1-\Phi(\lambda^{true}|\alpha_j+1,\beta_j)],$$

$$ \int_{0}^{\lambda^{true}} xf(x)dx = \sum_{j=1}^K p_j\alpha_j\beta_j\Phi(\lambda^{true}|\alpha_j+1,\beta_j).$$

Therefore,

$$ Wasserstein-1(f_{ij}, f_{\lambda^{true}}) = \sum_{j=1}^K p_j\alpha_j\beta_j[1-2\Phi(\lambda^{true}|\alpha_j+1,\beta_j)]+2\lambda^{true}\sum_{j=1}^K p_j\Phi(\lambda^{true}|\alpha_j,\beta_j)-\lambda^{true}.$$

\section{Simulation results}

\subsection{RMSE results}

\begin{table}[]
\centering
\begin{tabular}{llll}
\hline
Simulation setting &Case      &number of signals & Positions of signal\\ \hline
I &1&1 &(1,1)            \\
&2&3& (1,1); (7,1); (9,1) \\
&3&6&  (1,1); (7,1); (9,1) \\
 && &(29,5); (38,5); (39,5)  \\\hline
 II &4 &6 &(1,1); (7,1); (9,1)\\
 && &(29,5); (38,5); (39,5) \\\hline
 III &5 &12 &(1,1); (9,5); (9,1);\\
 &  & &(1,4); (25,1);  (29,1); \\ 
 && &(29,5); (38,5); (39,5); \\ 
 &  & &(39,1); (41,1);(29,6) \\\hline
\end{tabular}
\caption{Positions of signals of Simulation setting I, II and III}
\label{tab:my-table}
\end{table}

\begin{figure}[htp]
    \centering
    \includegraphics[width=16cm]{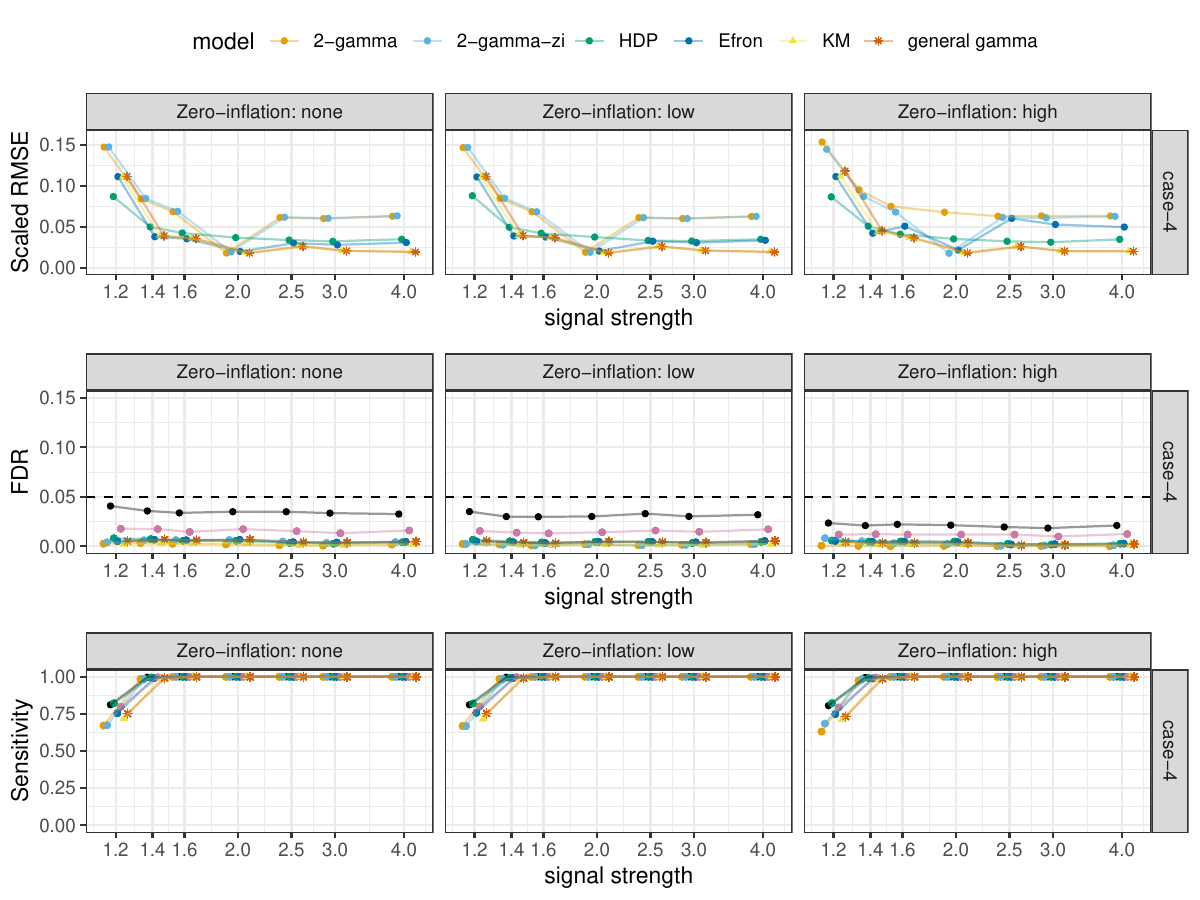}
    \caption{Simulation II (heterogeneous signal strengths): Max-scaled-RMSEs (top panel), FDRs (middle panel), and sensitivities (bottom panel) versus signal strength for Bayesian models (color-coded) and different zero-inflation level. Non-parametric Bayesian models (HDP, Efron, KM, and general-gamma) perform better than parametric Bayesian models (2-gamma and 2-gamma-zi) in signal estimation. All methods control FDR under 0.05. Methods also appear to have similar overall patterns for sensitivity, except for 2-gamma and 2-gamma-zi which show somewhat lower sensitivities than the other in low signal strength.}
\end{figure}

\begin{figure}[htp]
    \centering
    \includegraphics[width=16cm]{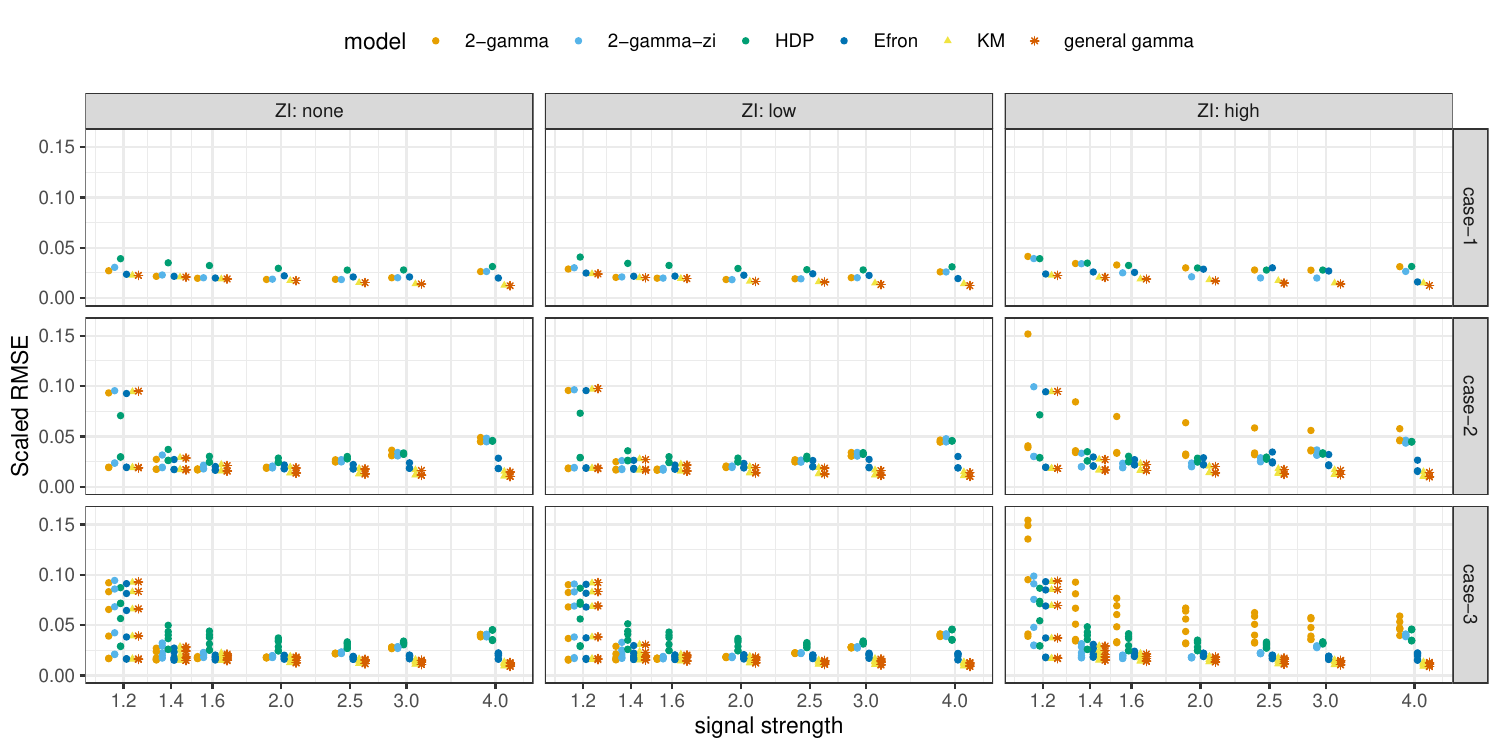}
    \caption{Simulation I (homogeneous signal strengths): Scaled RMSEs versus signal strength for Bayesian models (color-coded) with fixed truth in signal strength. The general-gamma and the KM perform better than other methods across zero-inflation levels and cases.}
\end{figure}

\begin{figure}[htp]
    \centering
    \includegraphics[width=16cm]{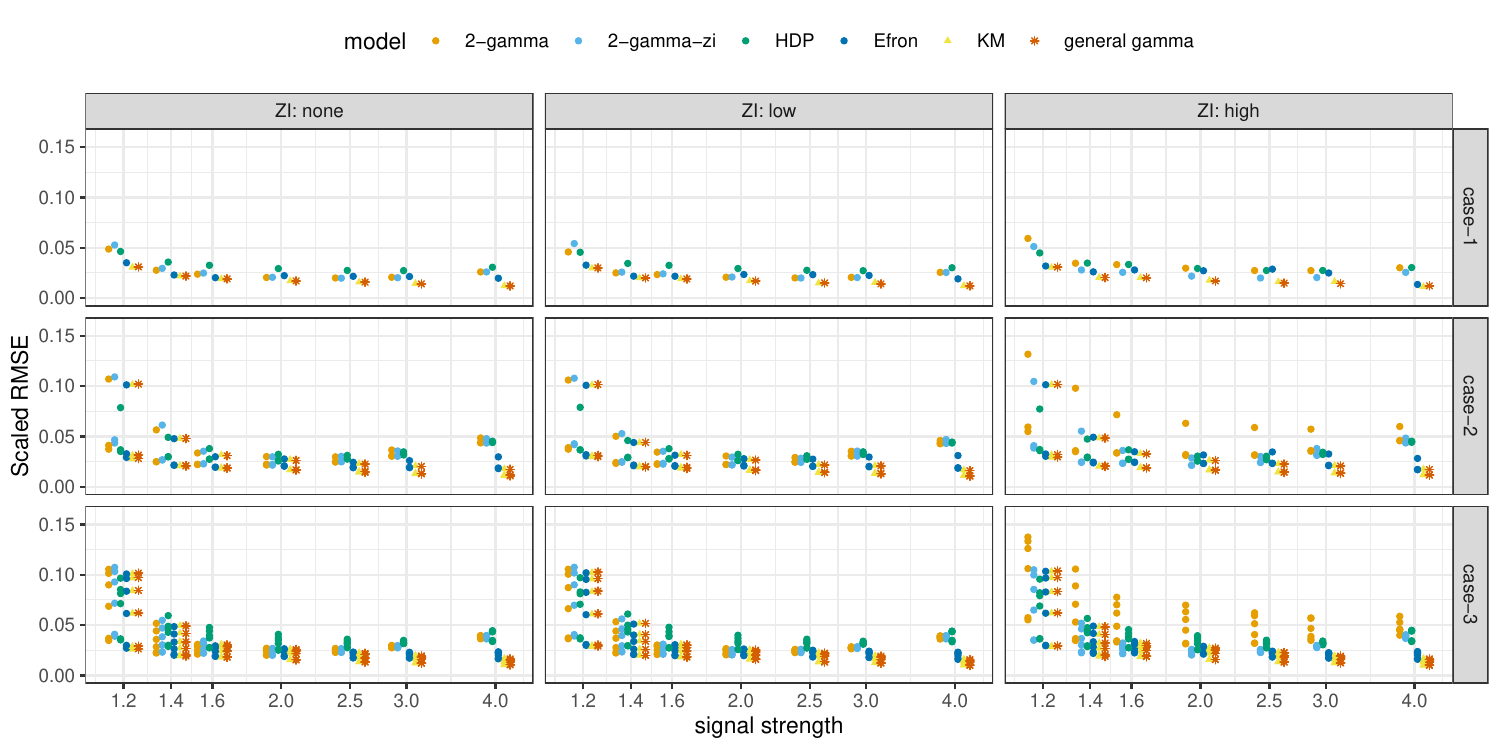}
    \caption{Simulation I (homogeneous signal strengths): Scaled RMSEs versus signal strength for Bayesian models (color-coded) with randomly perturbed truth in signal strength. The general-gamma and the KM perform better than other methods across zero-inflation levels and cases.}
\end{figure}

\begin{figure}[htp]
    \centering
    \includegraphics[width=16cm]{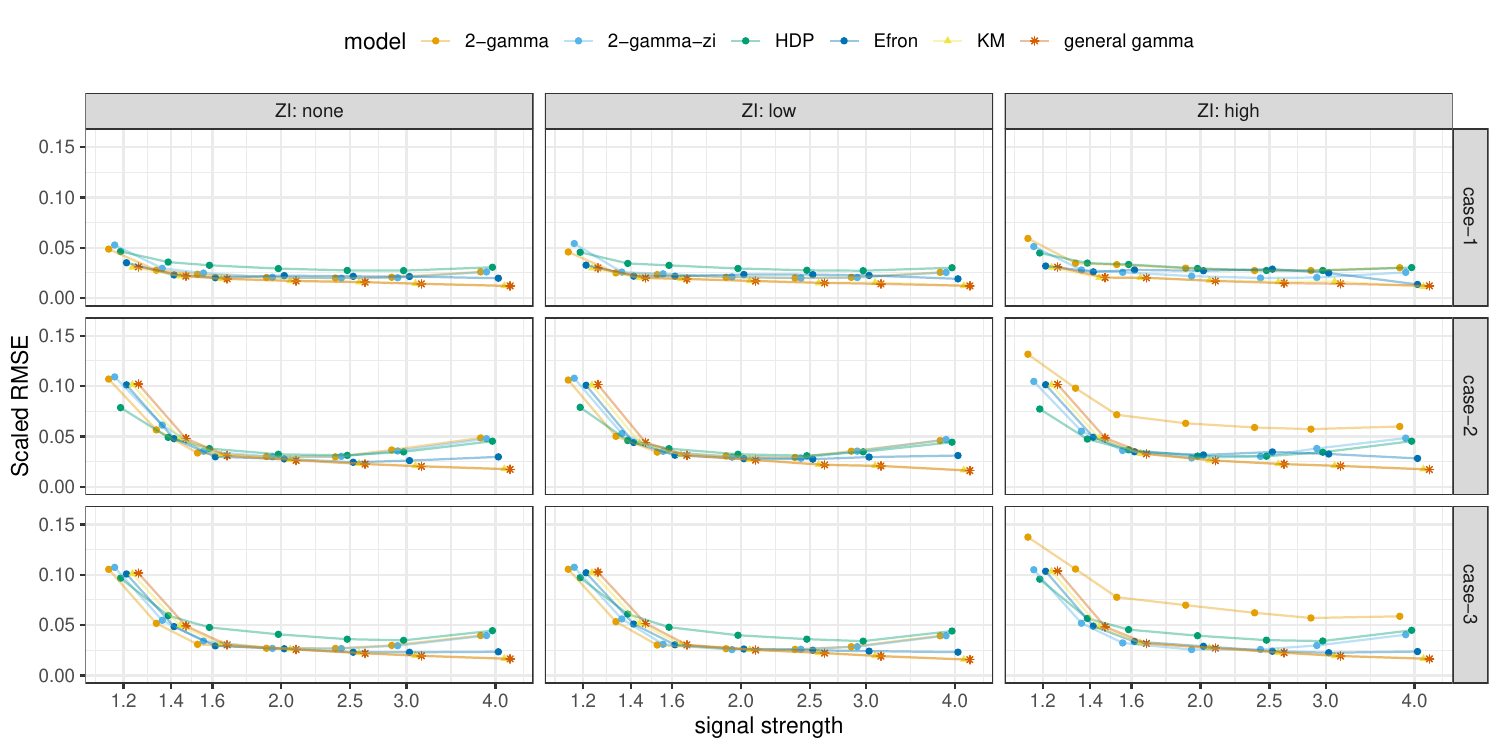}
    \caption{Simulation I (homogeneous signal strengths): Max-Scaled-RMSEs versus signal strength for Bayesian models (color-coded) with fixed truth in signal strength. The general-gamma and the KM perform better than other methods across zero-inflation levels and cases.}
\end{figure}

\begin{figure}[htp]
    \centering
    \includegraphics[width=16cm]{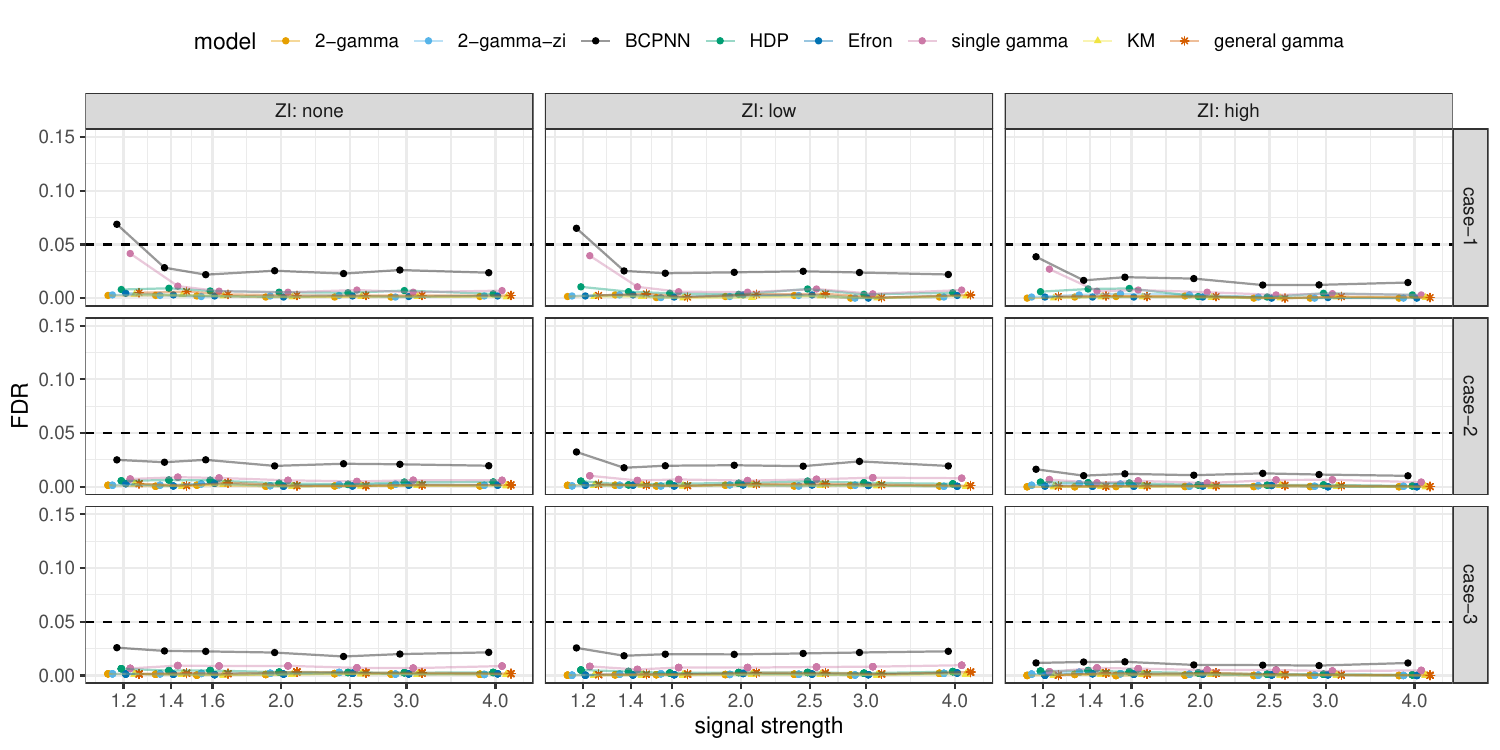}
    \caption{Simulation I (homogeneous signal strengths): FDRs versus signal strength for Bayesian models (color-coded) with randomly perturbed truth in signal strength. All methods control FDR under 0.05, except for the BCPNN, which has some situations and a small level of signal strength, and the computed FDR exceeds 0.05.}
\end{figure}

\begin{figure}[htp]
    \centering
    \includegraphics[width=16cm]{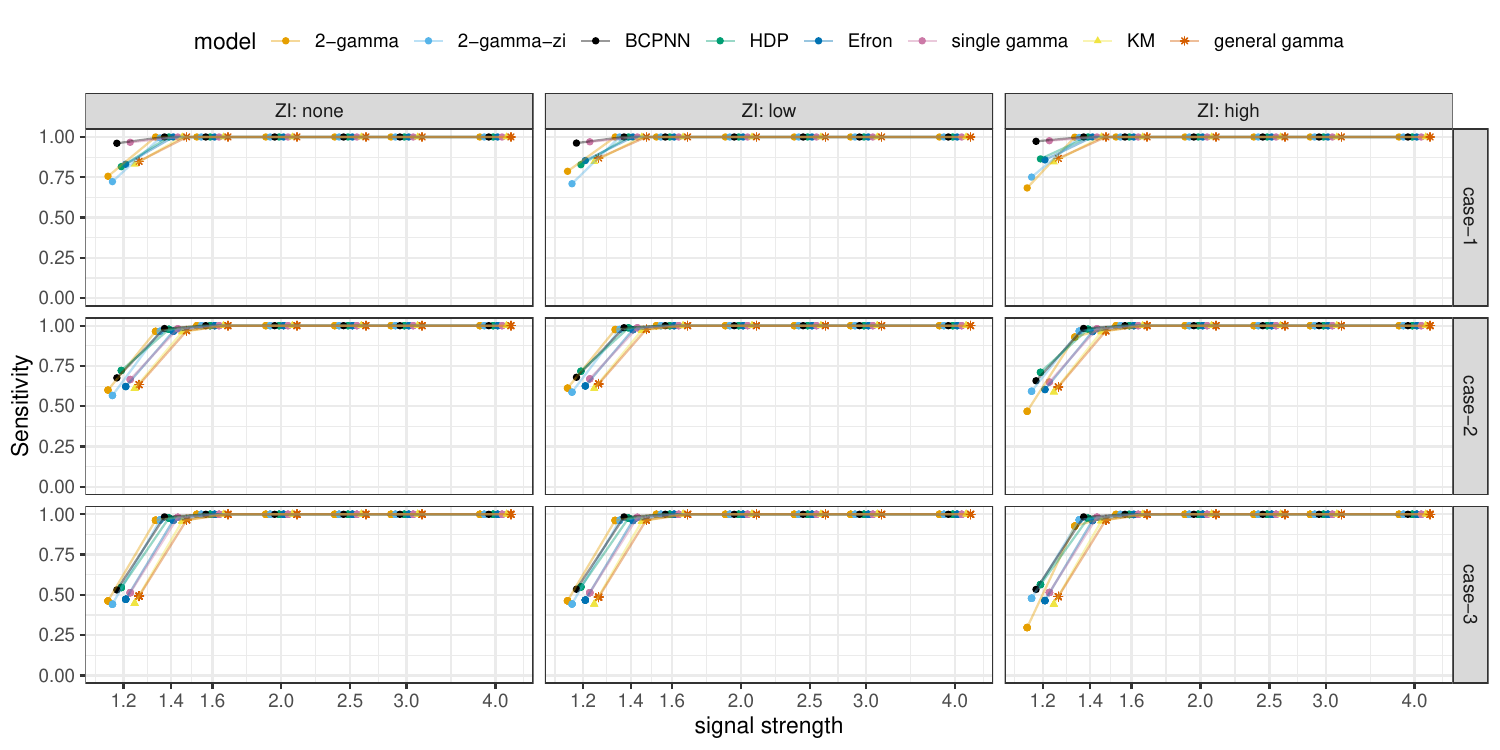}
    \caption{Simulation I (homogeneous signal strengths): sensitivities versus signal strength for Bayesian models (color-coded) with randomly perturbed truth in signal strength. Methods appear similar overall pattern for sensitivity, except for 2-gamma which shows somewhat lower sensitivities than the other in low signal strength.}
\end{figure}

\begin{figure}[htp]
    \centering
    \includegraphics[width=16cm]{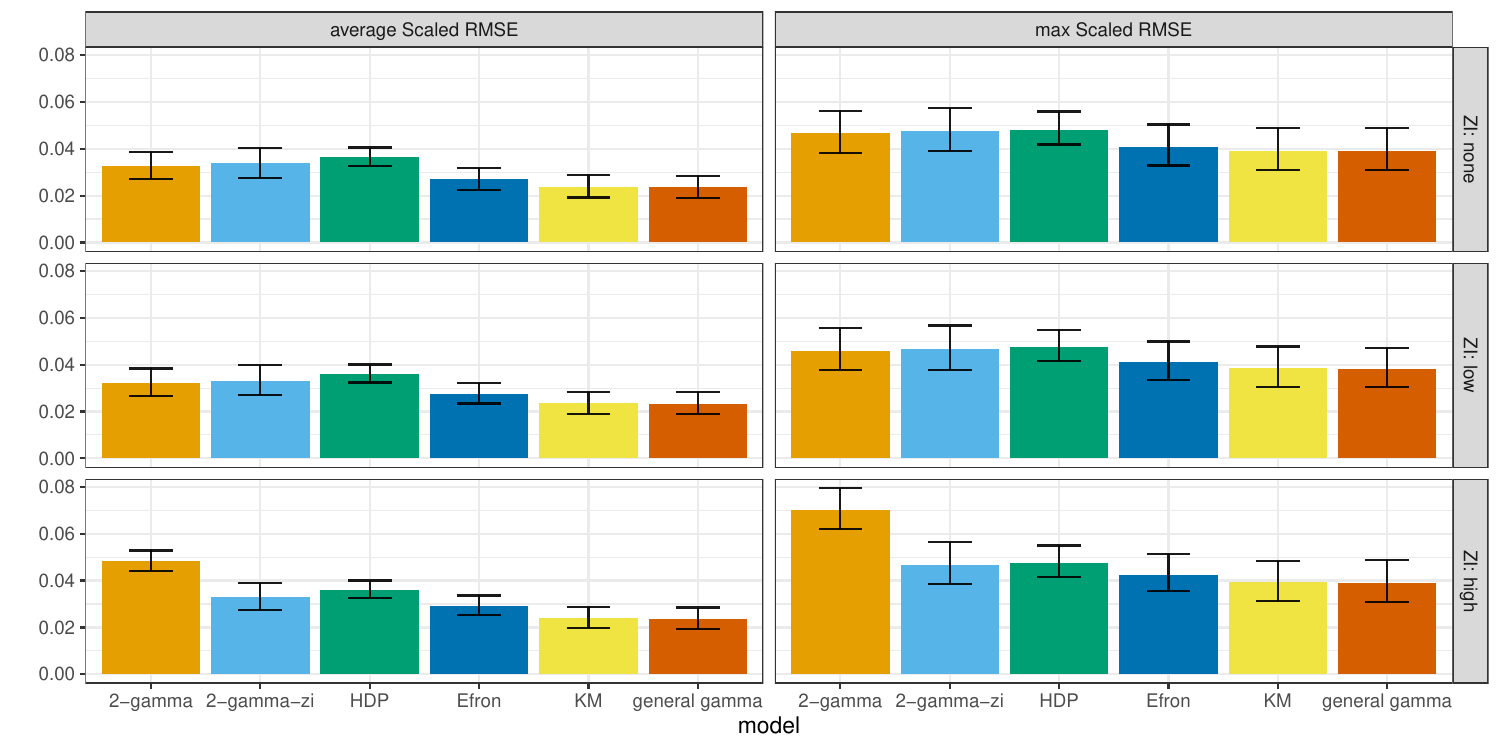}
    \caption{Simulation I (homogeneous signal strengths): The overall means of average (left column) and maximum (right column) scaled RMSEs obtained across all signal strength values and all choices of the number of signal cells are plotted as vertical bars for different levels of zero inflation (along rows) for each bayesian models. Each bar represents the replication-based mean of the average (left column) or the maximum (right column) scaled RMSE of a specific method computed over the signal cells of an entire table; the associated error whiskers represent the 5th and 95th percentile points across replicates. The 2-gamma model performs worse than other models in high levels of zero-inflation.}
\end{figure}

\begin{figure}[htp]
    \centering
    \includegraphics[width=16cm]{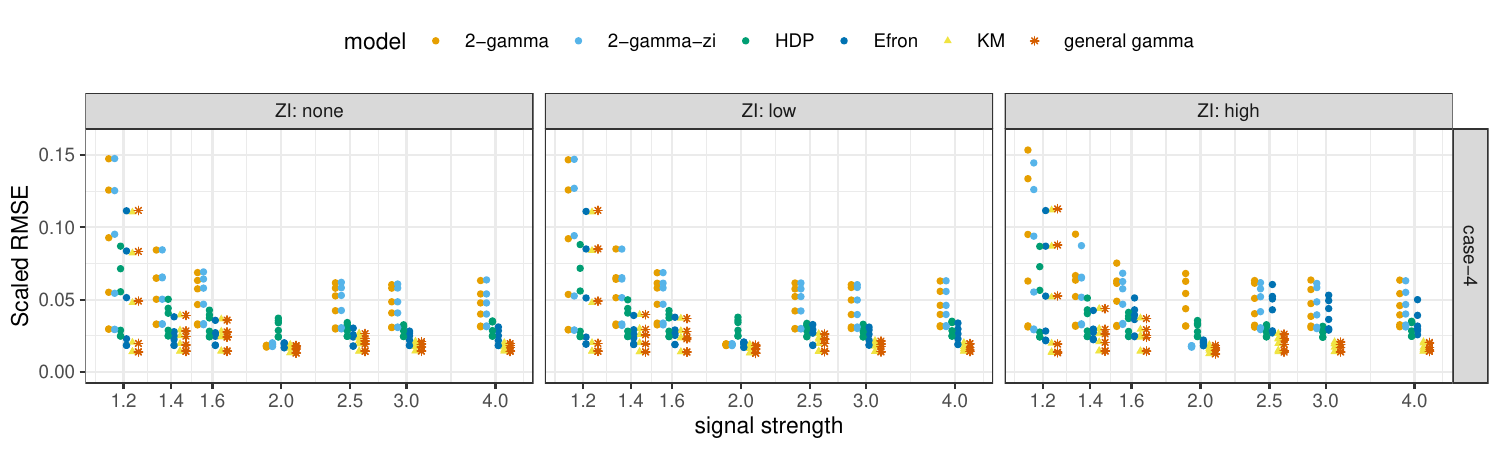}
    \caption{Simulation II (heterogeneous signal strengths): Scaled RMSEs versus signal strength for Bayesian models (color-coded) with fixed truth in signal strength. The general-gamma and the KM perform better than other methods across zero-inflation levels and cases.}
\end{figure}

\begin{figure}[htp]
    \centering
    \includegraphics[width=16cm]{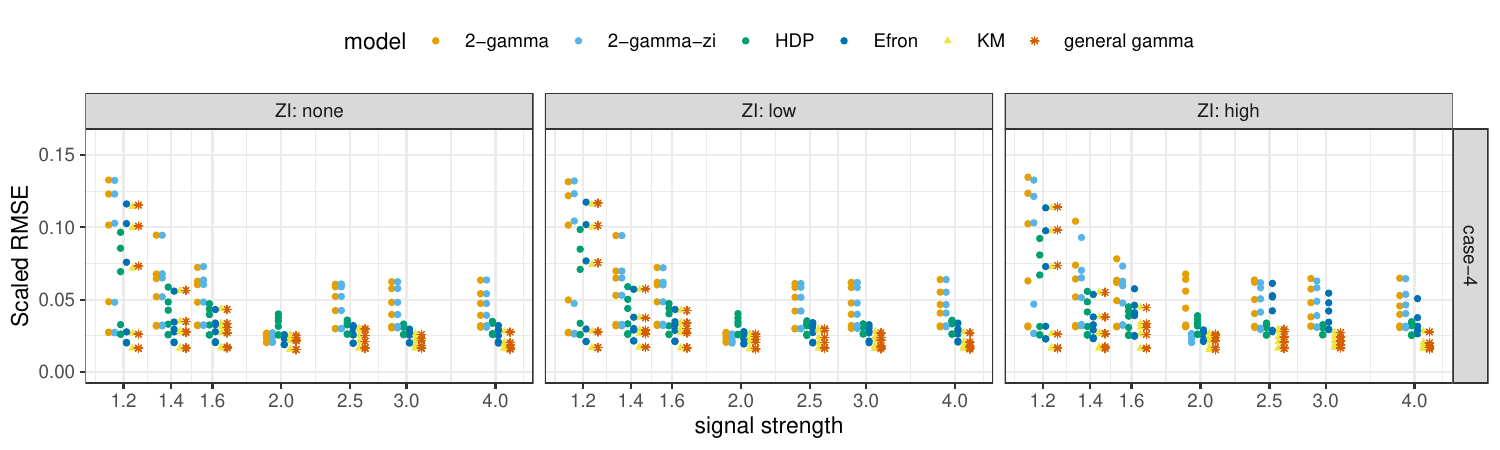}
    \caption{Simulation II (heterogeneous signal strengths): Scaled RMSEs versus signal strength for Bayesian models (color-coded) with randomly perturbed truth in signal strength. The general-gamma and the KM perform better than other methods across zero-inflation levels and cases.}
\end{figure}

\begin{figure}[htp]
    \centering
    \includegraphics[width=16cm]{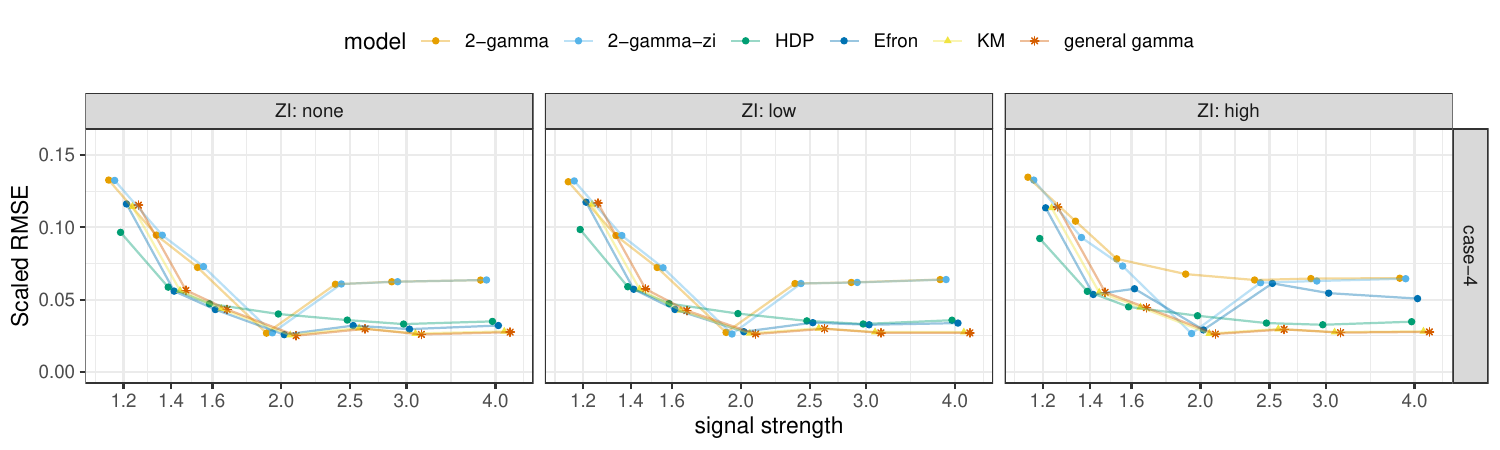}
    \caption{Simulation II (heterogeneous signal strengths): Max-Scaled-RMSEs versus signal strength for Bayesian models (color-coded) with fixed truth in signal strength. The general-gamma and the KM perform better than other methods across zero-inflation levels and cases.}
\end{figure}

\begin{figure}[htp]
    \centering
    \includegraphics[width=16cm]{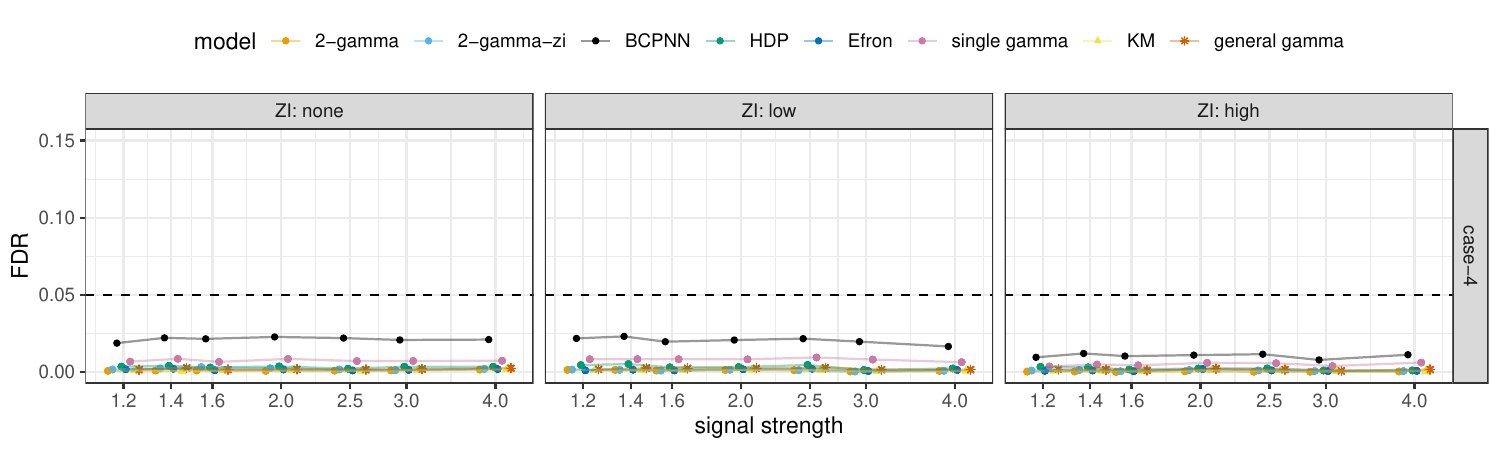}
    \caption{Simulation II (heterogeneous signal strengths): FDRs versus signal strength for Bayesian models (color-coded) with randomly perturbed truth in signal strength. All methods control FDR under 0.05.}
\end{figure}

\begin{figure}[htp]
    \centering
    \includegraphics[width=16cm]{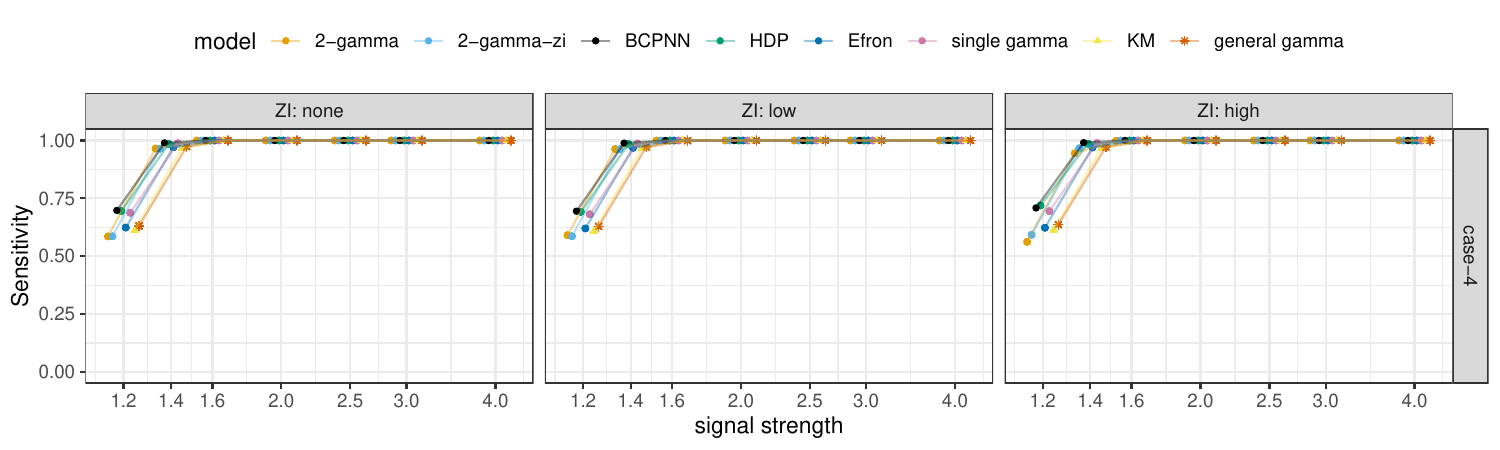}
    \caption{Simulation II (heterogeneous signal strengths): sensitivities versus signal strength for Bayesian models (color-coded) with randomly perturbed truth in signal strength. Methods appear to have similar overall patterns for sensitivity, except for 2-gamma and 2-gamma-zi which show somewhat lower sensitivities than the other in low signal strength.}
\end{figure}

\begin{figure}[htp]
    \centering
    \includegraphics[width=16cm]{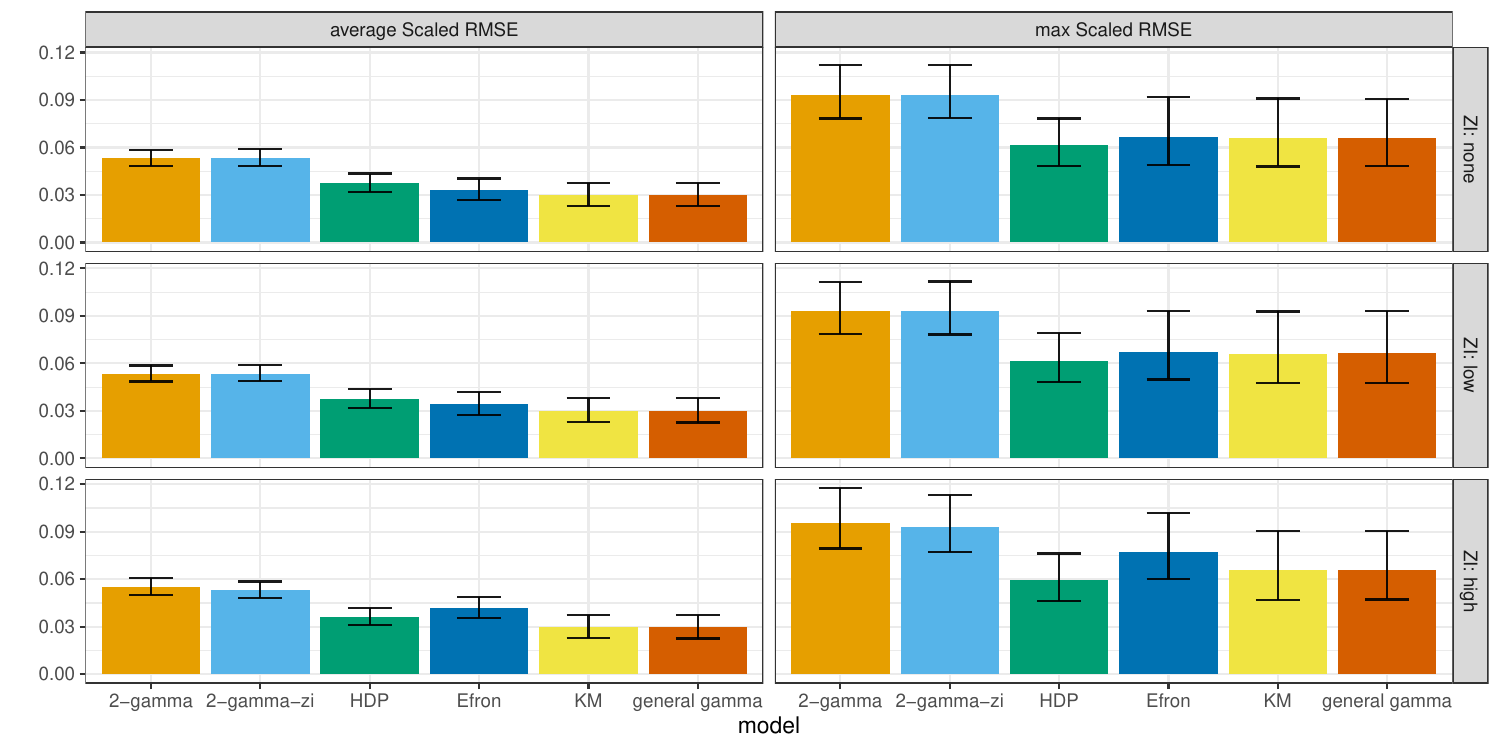}
    \caption{Simulation II (heterogeneous signal strengths): The overall means of average (left column) and maximum (right column) scaled RMSEs obtained across all signal strength values and all choices of the number of signal cells are plotted as vertical bars for different levels of zero inflation (along rows) for each bayesian models. Each bar represents the replication-based mean of the average (left column) or the maximum (right column) scaled RMSE of a specific method computed over the signal cells of an entire table; the associated error whiskers represent the 5th and 95th percentile points across replicates. Non-parametric Bayesian methods (KM, general-gamma, HDP, and Efron) perform better than parametric Bayesian methods (2-gamma and 2-gamma-zi).}
\end{figure}

\begin{figure}[htp]
    \centering
    \includegraphics[width=16cm]{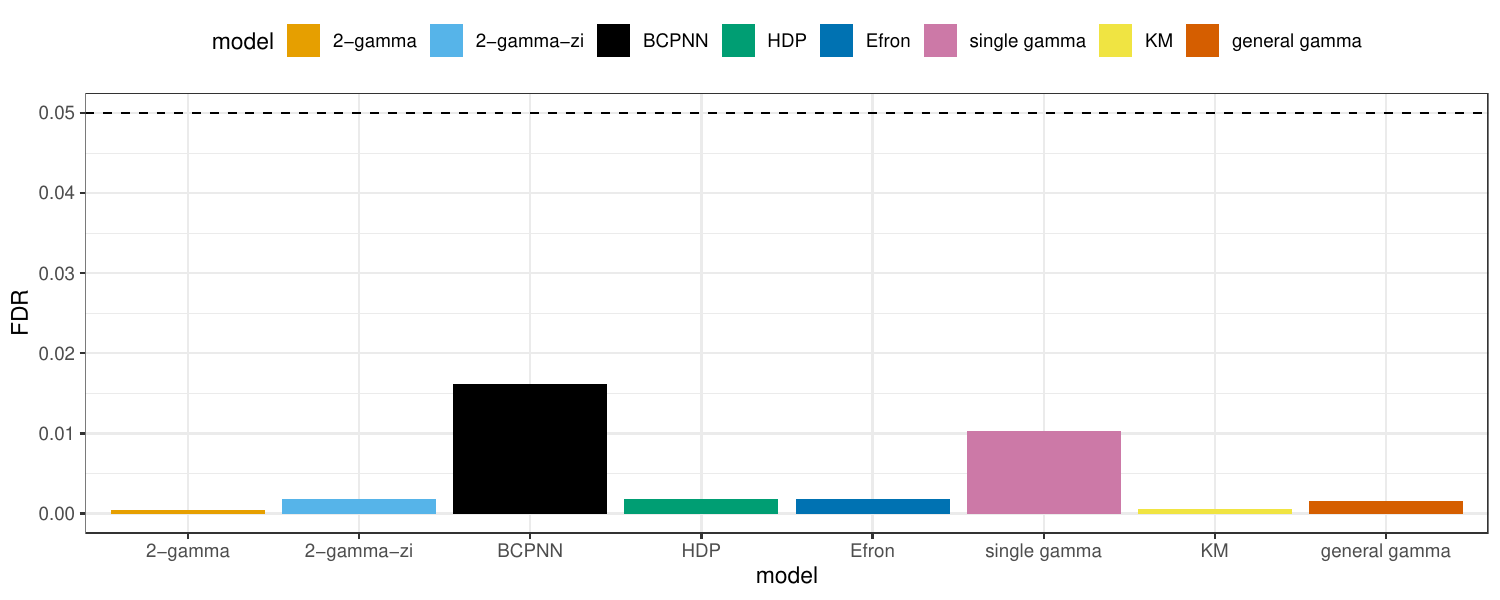}
    \caption{Simulation III (highly heterogeneous signal strengths): FDRs for Bayesian models with fixed truth in signal strength. All methods control FDR under 0.05.}
\end{figure}

\clearpage

\subsection{MAE results}

\begin{figure}[htp]
    \centering
    \includegraphics[width=16cm]{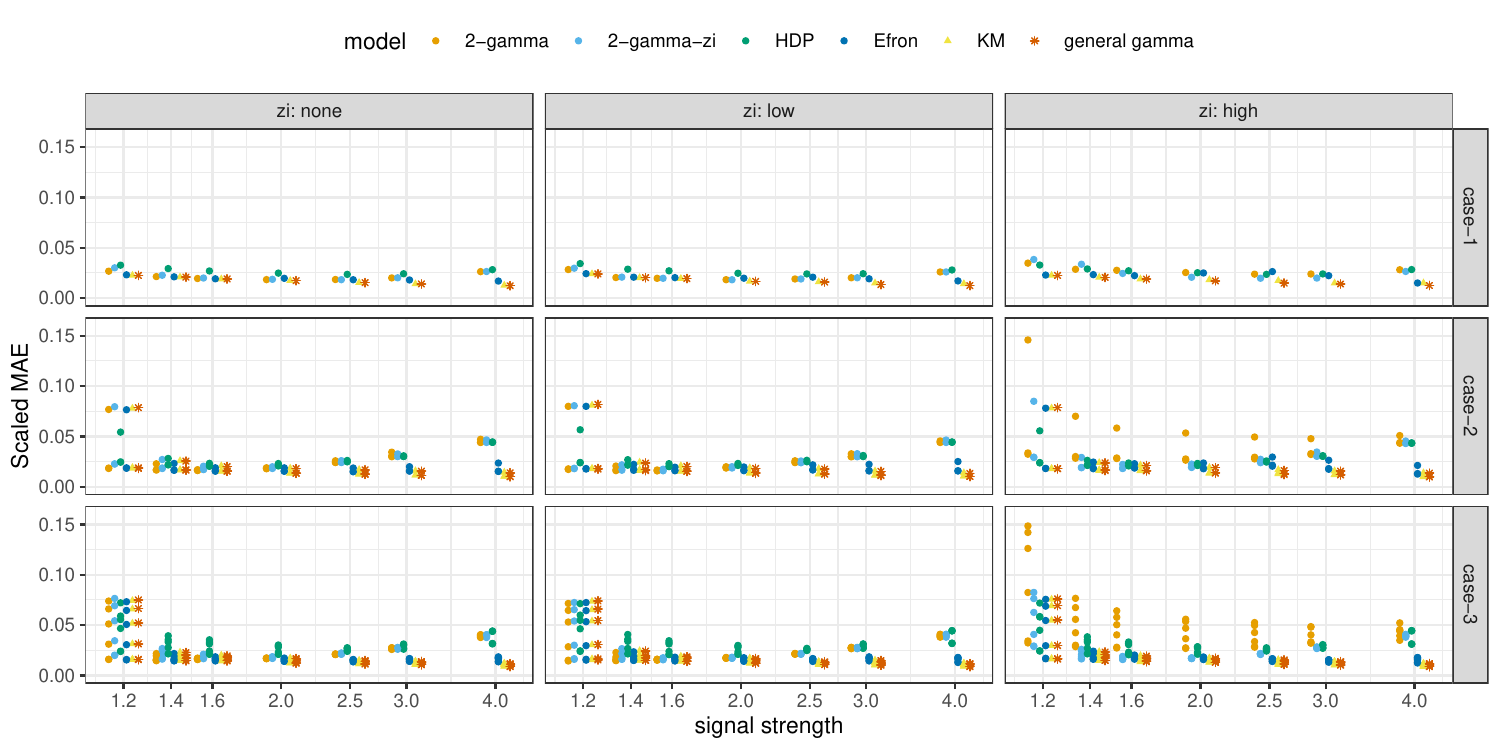}
    \caption{Simulation I (homogeneous signal strengths): Scaled MAEs versus signal strength for Bayesian models (color-coded) with fixed truth in signal strength. The general-gamma and the KM perform better than other methods across zero-inflation levels and cases.}
\end{figure}

\begin{figure}[htp]
    \centering
    \includegraphics[width=16cm]{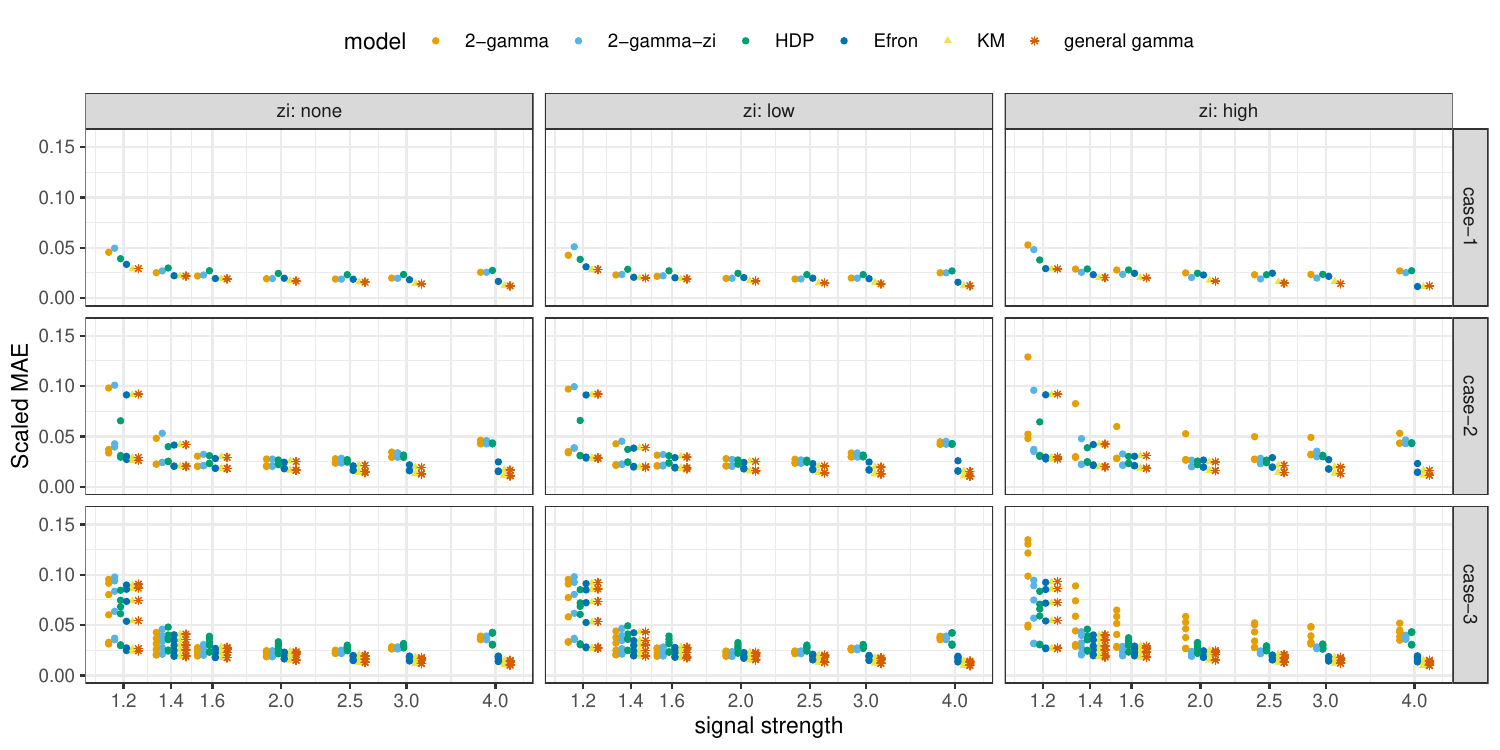}
    \caption{Simulation I (homogeneous signal strengths): Scaled MAEs versus signal strength for Bayesian models (color-coded) with randomly perturbed truth in signal strength. The general-gamma and the KM perform better than other methods across zero-inflation levels and cases.}
\end{figure}

\begin{figure}[htp]
    \centering
    \includegraphics[width=16cm]{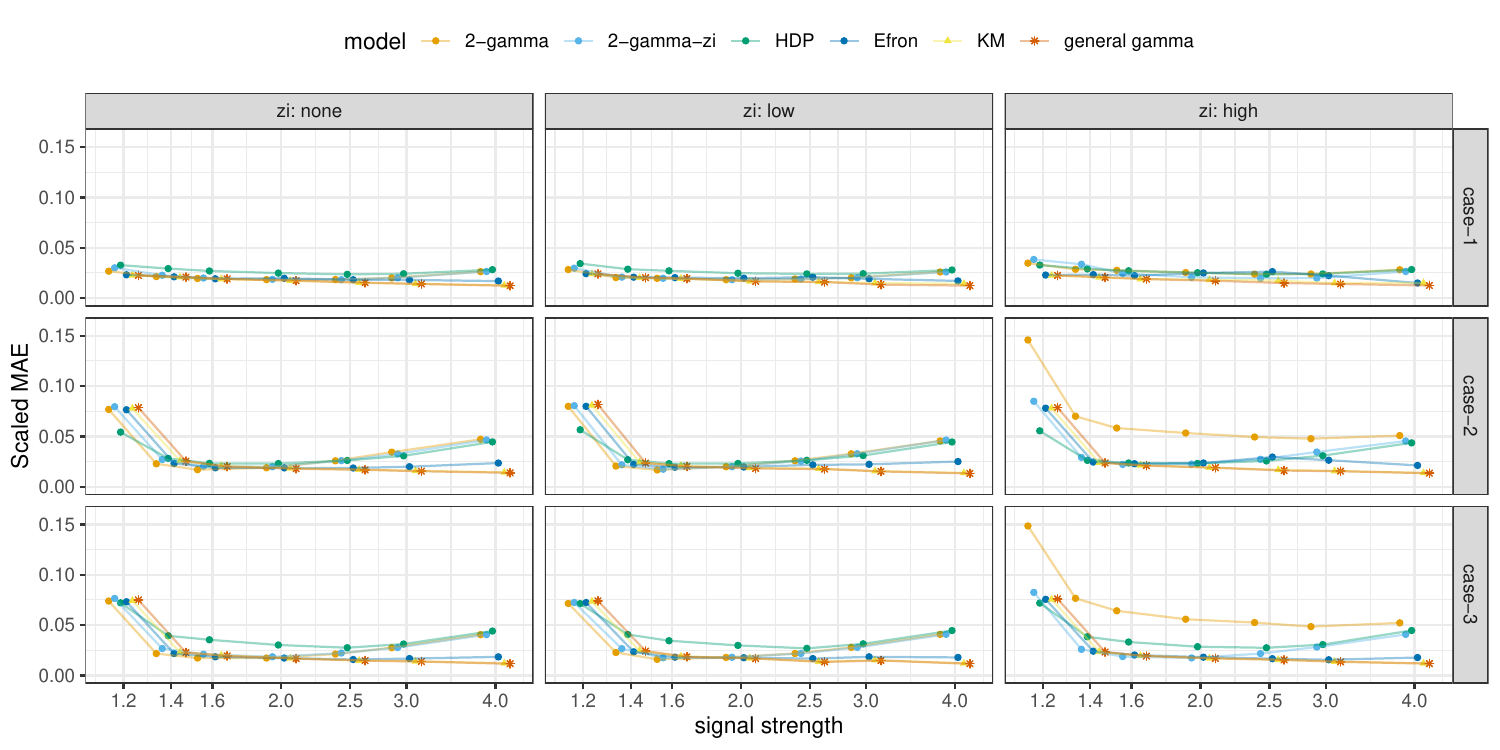}
    \caption{Simulation I (homogeneous signal strengths): Max-Scaled-MAEs versus signal strength for Bayesian models (color-coded) with fixed truth in signal strength. The general-gamma and the KM perform better than other methods across zero-inflation levels and cases.}
\end{figure}

\begin{figure}[htp]
    \centering
    \includegraphics[width=16cm]{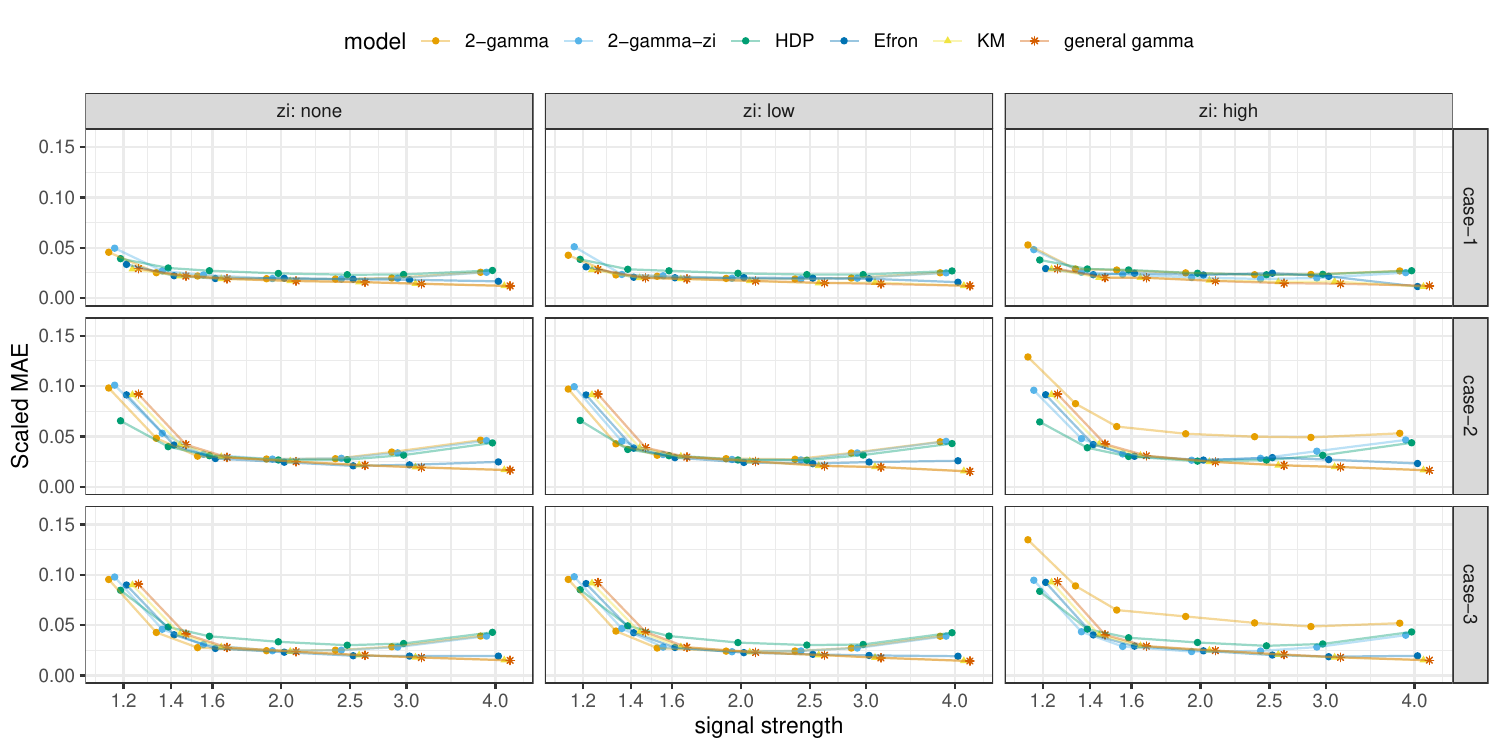}
    \caption{Simulation I (homogeneous signal strengths): Max-Scaled-MAEs versus signal strength for Bayesian models (color-coded) with randomly perturbed truth in signal strength. The general-gamma and the KM perform better than other methods across zero-inflation levels and cases.}
\end{figure}

\begin{figure}[htp]
    \centering
    \includegraphics[width=16cm]{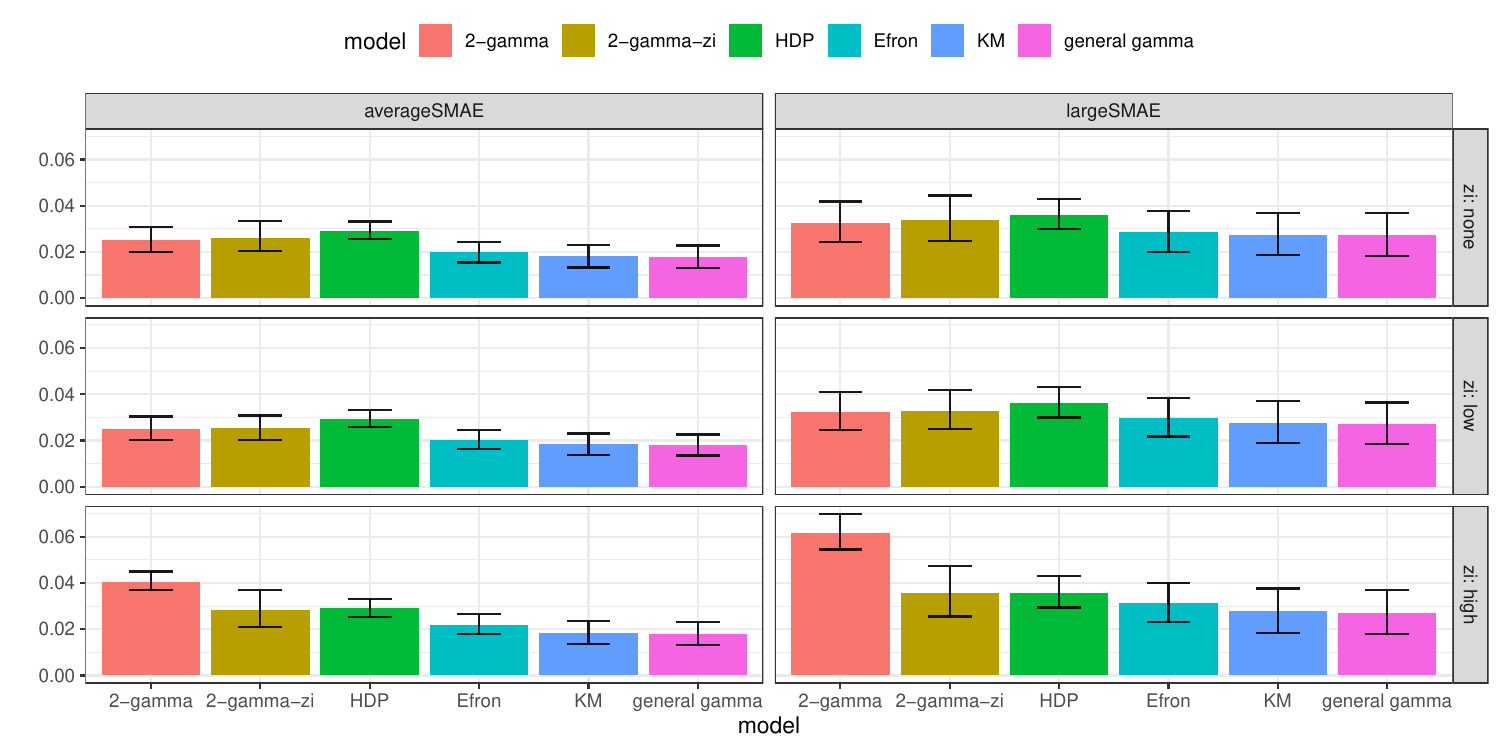}
    \caption{Simulation I (homogeneous signal strengths): The overall means of average (left column) and maximum (right column) scaled MAEs obtained across all signal strength values and all choices of the number of signal cells with fixed true in signal strength are plotted as vertical bars for different levels of zero inflation (along rows) for each bayesian models. Each bar represents the replication-based mean of the average (left column) or the maximum (right column) scaled MAE of a specific method computed over the signal cells of an entire table; the associated error whiskers represent the 5th and 95th percentile points across replicates. The 2-gamma model performs worse than other models in high levels of zero-inflation.}
\end{figure}

\begin{figure}[htp]
    \centering
    \includegraphics[width=16cm]{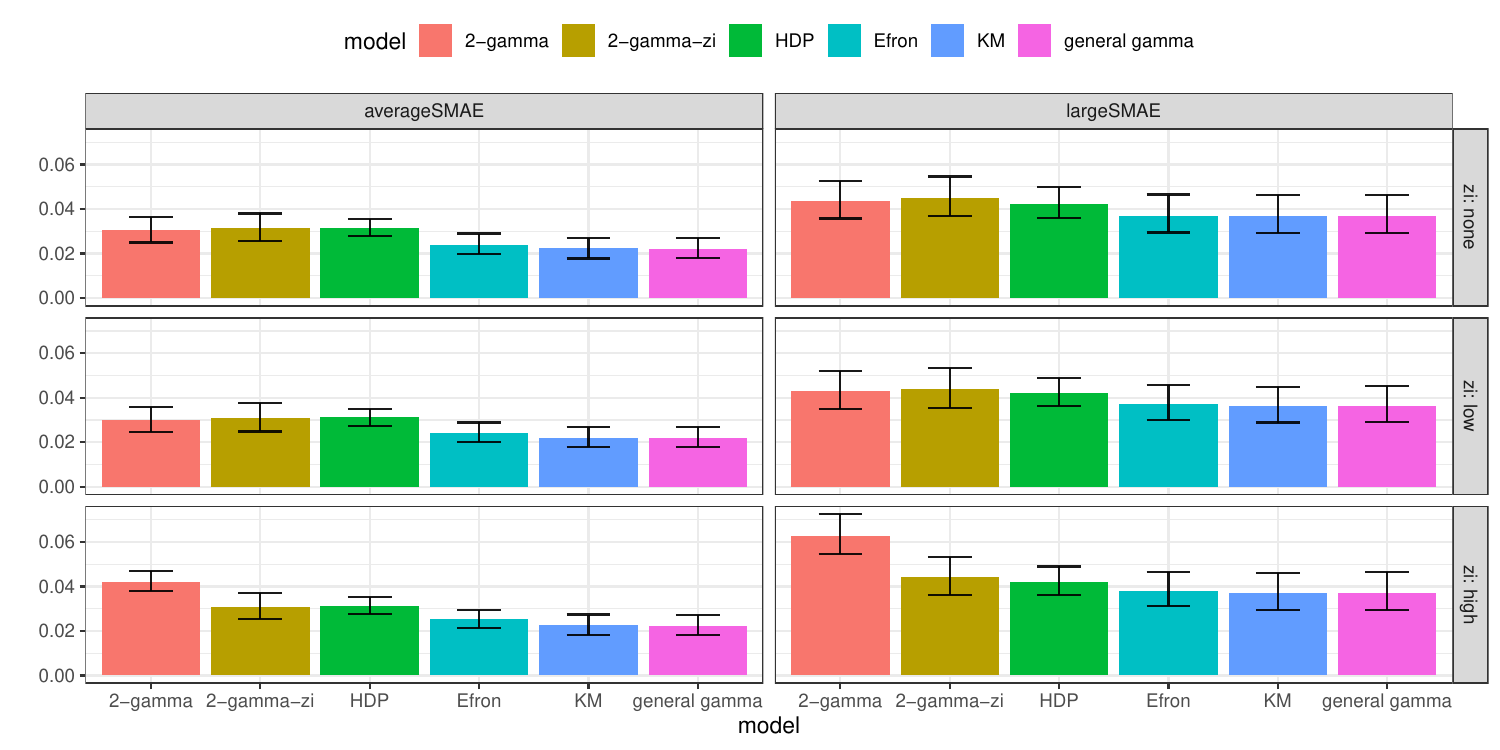}
    \caption{Simulation I (homogeneous signal strengths): The overall means of average (left column) and maximum (right column) scaled MAEs obtained across all signal strength values and all choices of the number of signal cells with randomly perturbed truth in signal strength are plotted as vertical bars for different levels of zero inflation (along rows) for each bayesian models. Each bar represents the replication-based mean of the average (left column) or the maximum (right column) scaled MAE of a specific method computed over the signal cells of an entire table; the associated error whiskers represent the 5th and 95th percentile points across replicates. The 2-gamma model performs worse than other models in high levels of zero-inflation.}
\end{figure}

\begin{figure}[htp]
    \centering
    \includegraphics[width=16cm]{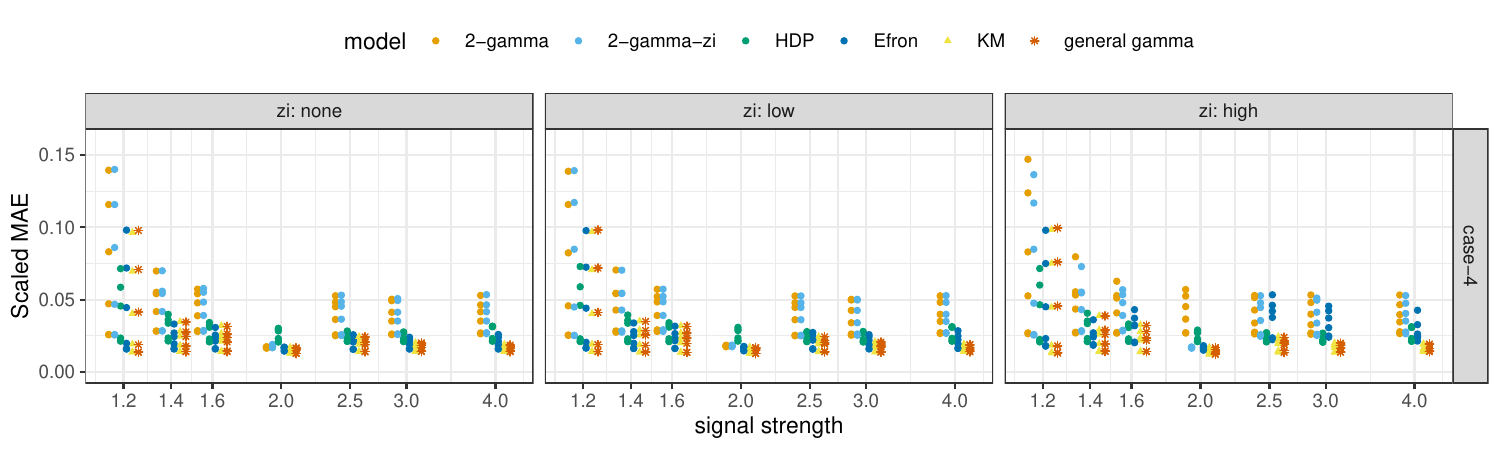}
    \caption{Simulation II (heterogeneous signal strengths): Scaled MAEs versus signal strength for Bayesian models (color-coded) with fixed truth in signal strength. The general-gamma and the KM perform better than other methods across zero-inflation levels and cases.}
\end{figure}

\begin{figure}[htp]
    \centering
    \includegraphics[width=16cm]{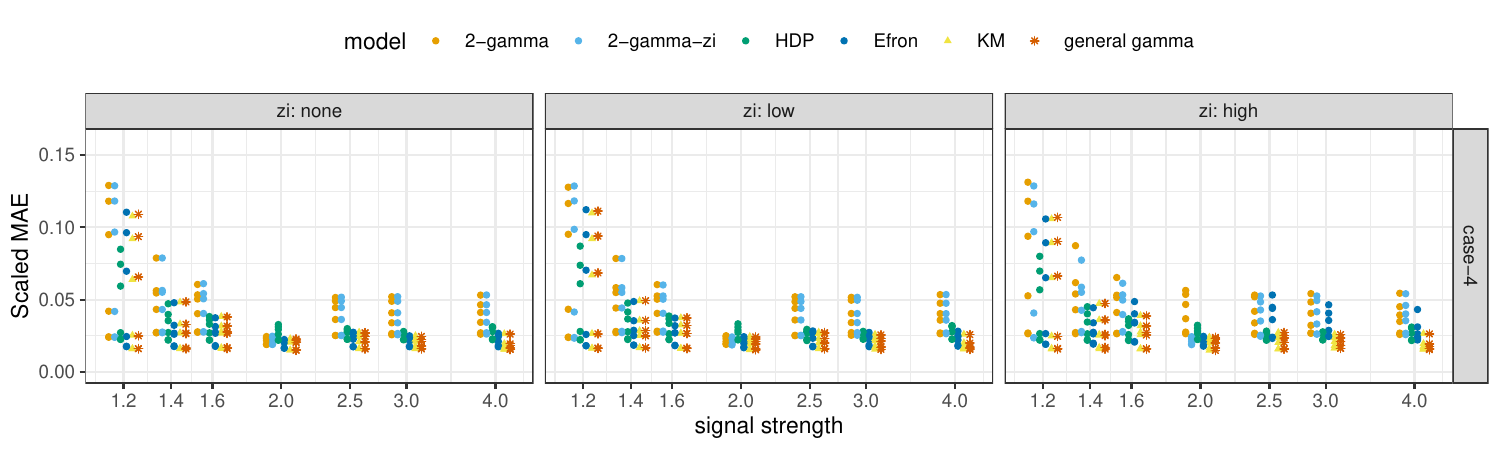}
    \caption{Simulation II (heterogeneous signal strengths): Scaled MAEs versus signal strength for Bayesian models (color-coded) with randomly perturbed truth in signal strength. The general-gamma and the KM perform better than other methods across zero-inflation levels and cases.}
\end{figure}

\begin{figure}[htp]
    \centering
    \includegraphics[width=16cm]{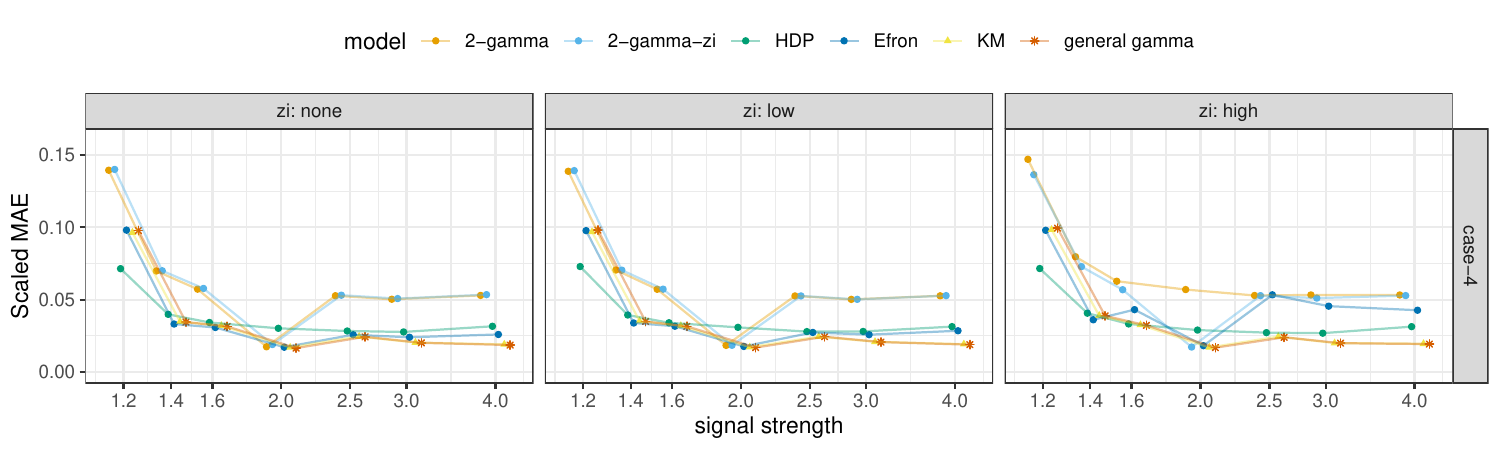}
    \caption{Simulation II (heterogeneous signal strengths): Max-Scaled-MAEs versus signal strength for Bayesian models (color-coded) with fixed truth in signal strength. The general-gamma and the KM perform better than other methods across zero-inflation levels and cases.}
\end{figure}

\begin{figure}[htp]
    \centering
    \includegraphics[width=16cm]{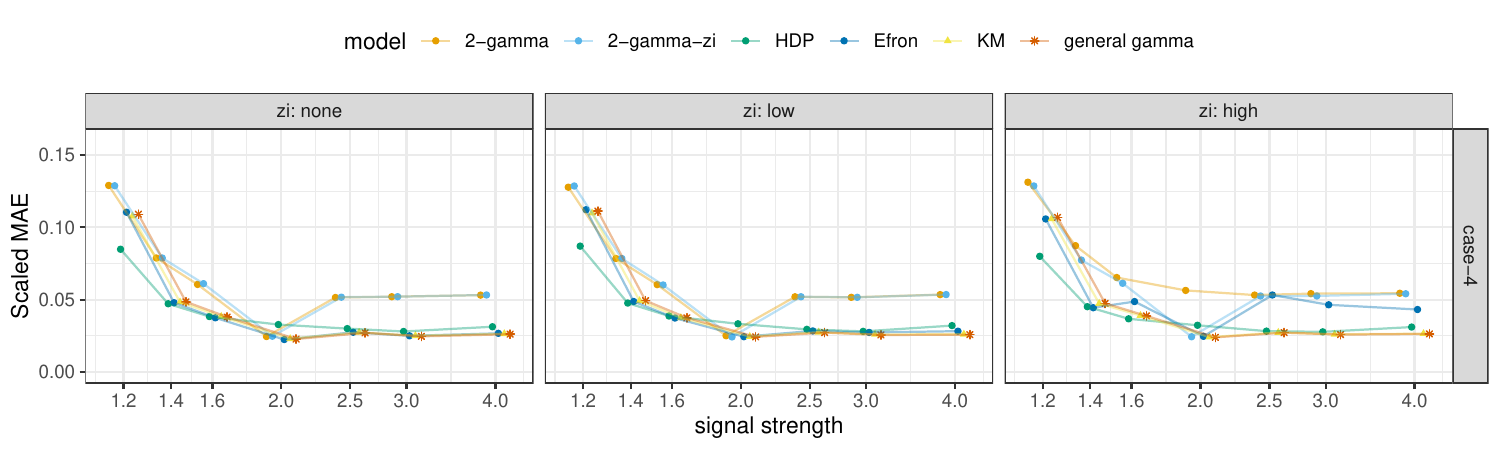}
    \caption{Simulation II (heterogeneous signal strengths): Max-Scaled-MAEs versus signal strength for Bayesian models (color-coded) with randomly perturbed truth in signal strength. The general-gamma and the KM perform better than other methods across zero-inflation levels and cases.}
\end{figure}

\begin{figure}[htp]
    \centering
    \includegraphics[width=16cm]{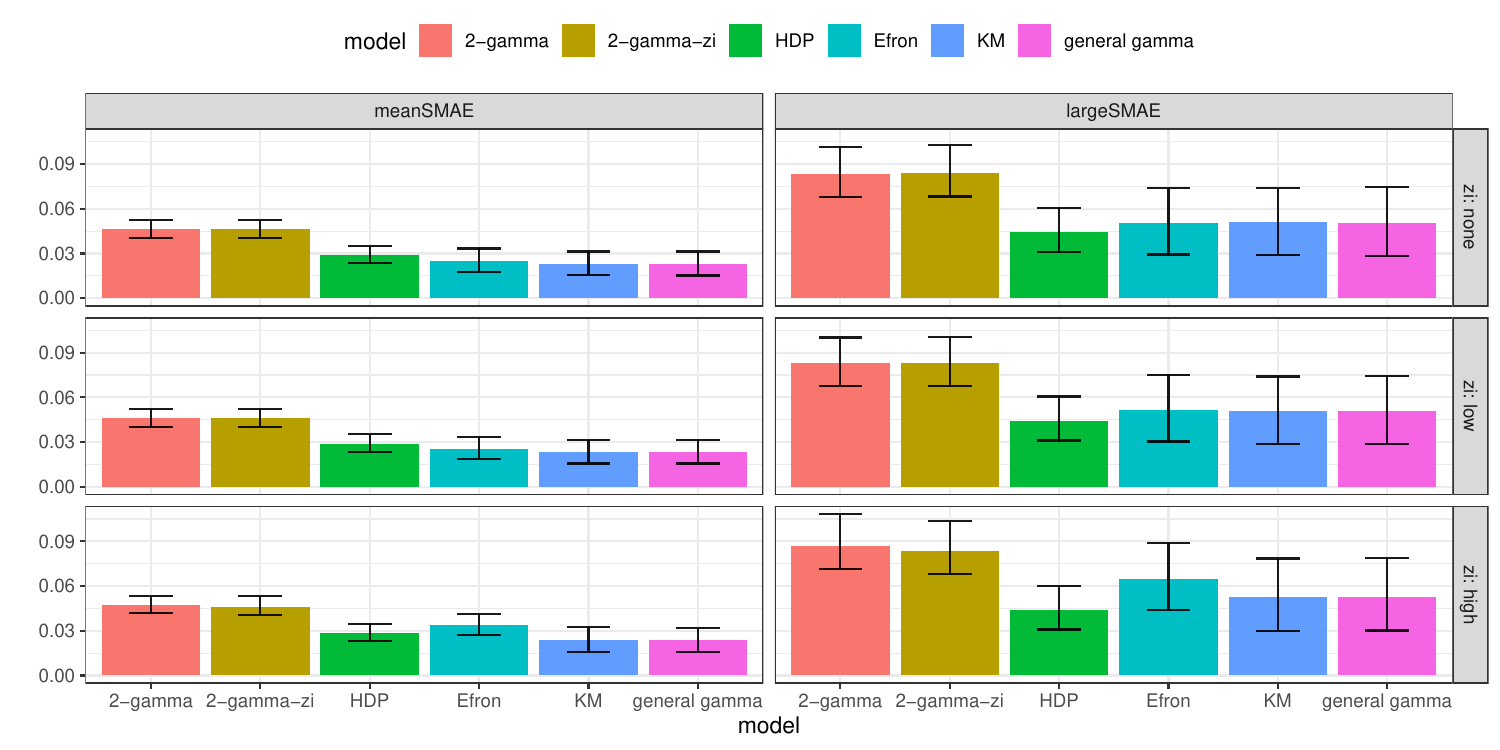}
    \caption{Simulation II (heterogeneous signal strengths): The overall means of average (left column) and maximum (right column) scaled MAEs obtained across all signal strength values and all choices of the number of signal cells with fixed truth in signal strength are plotted as vertical bars for different levels of zero inflation (along rows) for each bayesian models. Each bar represents the replication-based mean of the average (left column) or the maximum (right column) scaled MAE of a specific method computed over the signal cells of an entire table; the associated error whiskers represent the 5th and 95th percentile points across replicates. Non-parametric Bayesian methods (KM, general-gamma, HDP, and Efron) perform better than parametric Bayesian methods (2-gamma and 2-gamma-zi).}
\end{figure}

\begin{figure}[htp]
    \centering
    \includegraphics[width=16cm]{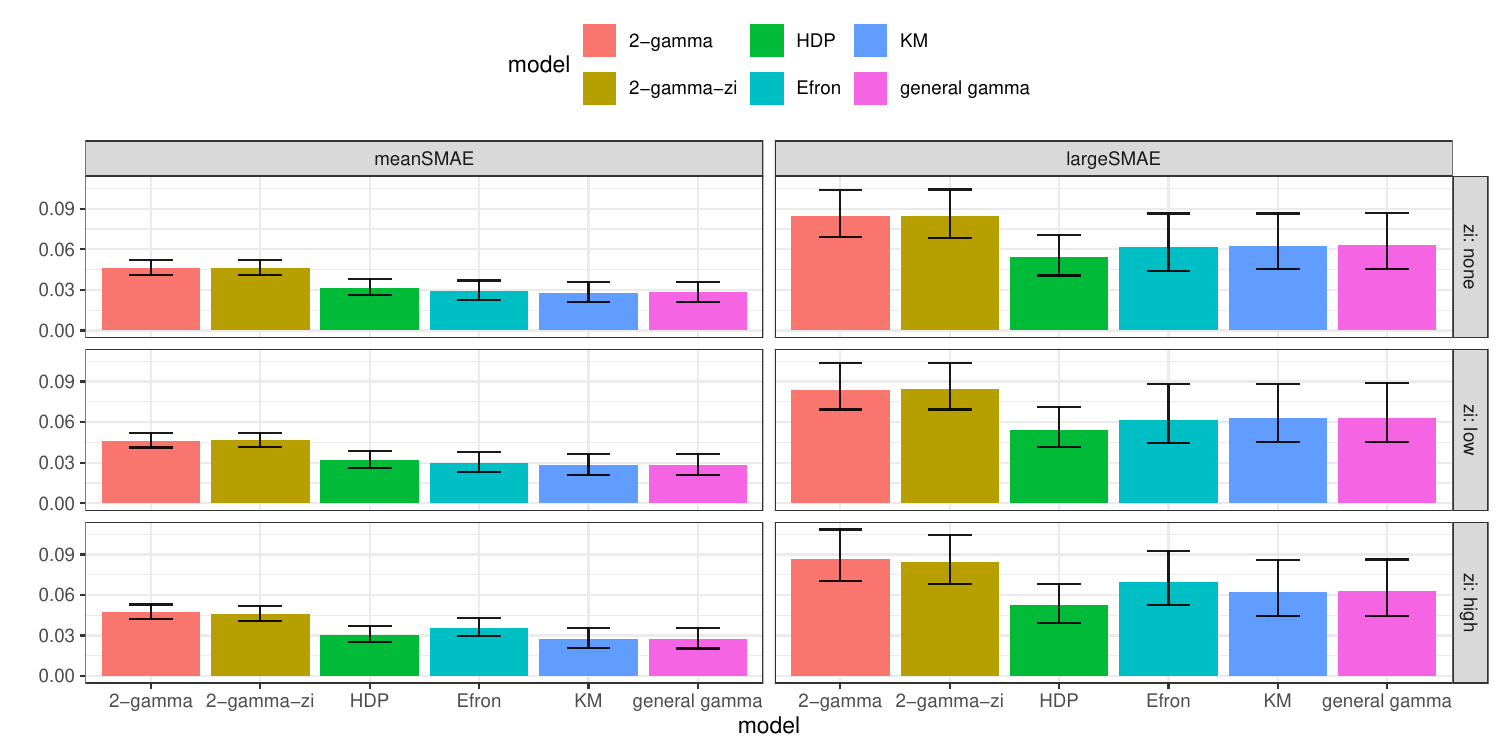}
    \caption{Simulation II (heterogeneous signal strengths): The overall means of average (left column) and maximum (right column) scaled MAEs obtained across all signal strength values and all choices of the number of signal cells with random perturbed truth in signal strength are plotted as vertical bars for different levels of zero inflation (along rows) for each bayesian models. Each bar represents the replication-based mean of the average (left column) or the maximum (right column) scaled MAE of a specific method computed over the signal cells of an entire table; the associated error whiskers represent the 5th and 95th percentile points across replicates. Non-parametric Bayesian methods (KM, general-gamma, HDP, and Efron) perform better than parametric Bayesian methods (2-gamma and 2-gamma-zi).}
\end{figure}

\begin{figure}[htp]
    \centering
    \includegraphics[width=16cm]{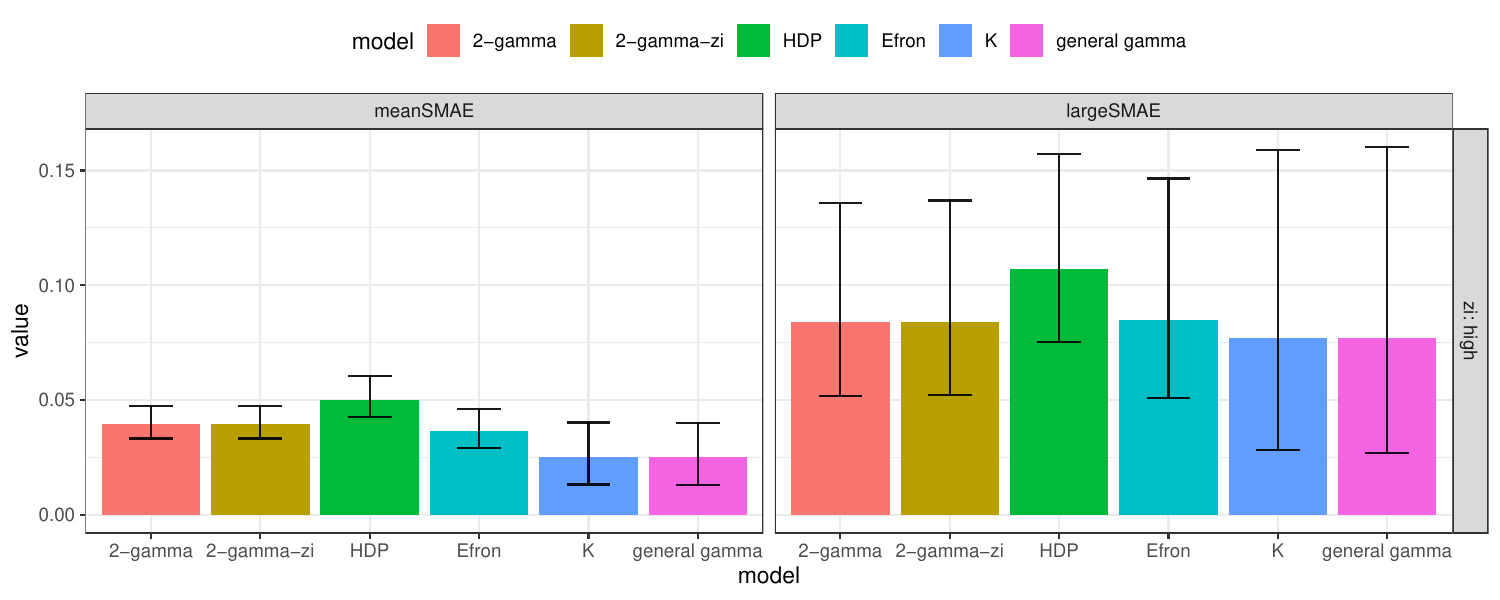}
    \caption{Simulation III (highly heterogeneous signal strengths): The overall means of average (left column) and maximum (right column) scaled MAEs obtained across all signal strength values and all choices of the number of signal cells with fixed truth in signal strength are plotted as vertical bars for different levels of zero inflation (along rows) for each bayesian models. Each bar represents the replication-based mean of the average (left column) or the maximum (right column) scaled MAE of a specific method computed over the signal cells of an entire table; the associated error whiskers represent the 5th and 95th percentile points across replicates.}
\end{figure}

\clearpage

\section{Statin 46 and Statin 42 tables}

We obtain the Statin 42 table from the Statin 46 table by excluding four adverse events with fewer than 70 counts in the 'Other Drugs' reference column, which are Blood Creatine Phosphokinase Mm Increased, Myoglobin Blood Present, Myoglobin Urine Present and Myoglobinaemia.

\begin{table}
\centering
\resizebox{13cm}{!}{
\begin{tabular}{|lrrrrrrr|}
\hline
  & Atorvastatin & Fluvastatin & Lovastatin & Pravastatin & Rosuvastatin & Simvastatin & Other\\
\hline
Acute Kidney Injury & 1353 & 42 & 7 & 154 & 689 & 823 & 355651\\
\hline
Anuria & 71 & 0 & 0 & 2 & 43 & 62 & 10403\\
\hline
Blood Calcium Decreased & 14 & 2 & 0 & 0 & 110 & 17 & 15918\\
\hline
Blood Creatine Phosphokinase Abnormal & 34 & 0 & 0 & 0 & 8 & 11 & 261\\
\hline
Blood Creatine Phosphokinase Increased & 1175 & 125 & 32 & 200 & 562 & 768 & 23805\\
\hline
Blood Creatine Phosphokinase Mm Increased & 2 & 0 & 0 & 0 & 9 & 0 & 14\\
\hline
Blood Creatinine Abnormal & 27 & 0 & 0 & 0 & 5 & 3 & 3385\\
\hline
Blood Creatinine Increased & 227 & 10 & 0 & 17 & 210 & 97 & 74742\\
\hline
Chromaturia & 340 & 10 & 6 & 33 & 174 & 114 & 19294\\
\hline
Chronic Kidney Disease & 152 & 16 & 2 & 19 & 177 & 37 & 339179\\
\hline
Compartment Syndrome & 53 & 0 & 0 & 1 & 12 & 12 & 2644\\
\hline
Creatinine Renal Clearance Decreased & 6 & 0 & 0 & 2 & 124 & 6 & 7768\\
\hline
Diaphragm Muscle Weakness & 14 & 0 & 0 & 0 & 8 & 1 & 94\\
\hline
Electromyogram Abnormal & 2 & 0 & 0 & 2 & 0 & 0 & 132\\
\hline
End Stage Renal Disease & 30 & 0 & 0 & 0 & 19 & 6 & 97553\\
\hline
Glomerular Filtration Rate Abnormal & 8 & 0 & 0 & 0 & 1 & 0 & 1069\\
\hline
Glomerular Filtration Rate Decreased & 59 & 1 & 0 & 8 & 39 & 29 & 13190\\
\hline
Hypercreatininaemia & 0 & 0 & 0 & 0 & 8 & 0 & 648\\
\hline
Hypocalcaemia & 36 & 0 & 0 & 16 & 8 & 18 & 23102\\
\hline
Muscle Disorder & 291 & 2 & 7 & 21 & 191 & 87 & 7329\\
\hline
Muscle Enzyme Increased & 48 & 1 & 0 & 0 & 13 & 9 & 410\\
\hline
Muscle Fatigue & 85 & 0 & 2 & 16 & 30 & 39 & 4257\\
\hline
Muscle Haemorrhage & 24 & 0 & 0 & 5 & 13 & 4 & 3806\\
\hline
Muscle Necrosis & 68 & 2 & 0 & 1 & 10 & 20 & 662\\
\hline
Muscle Rupture & 181 & 25 & 0 & 61 & 36 & 120 & 3219\\
\hline
Muscular Weakness & 1857 & 45 & 31 & 152 & 808 & 859 & 111003\\
\hline
Musculoskeletal Discomfort & 137 & 18 & 15 & 25 & 187 & 93 & 19931\\
\hline
Musculoskeletal Disorder & 56 & 3 & 0 & 9 & 65 & 73 & 25881\\
\hline
Musculoskeletal Pain & 420 & 3 & 2 & 38 & 324 & 228 & 82576\\
\hline
Myalgia & 5362 & 341 & 151 & 939 & 2757 & 3216 & 143819\\
\hline
Myasthenic Syndrome & 1 & 0 & 9 & 0 & 3 & 6 & 643\\
\hline
Myoglobin Blood Increased & 71 & 4 & 0 & 4 & 16 & 39 & 1003\\
\hline
Myoglobin Blood Present & 2 & 0 & 0 & 0 & 0 & 0 & 8\\
\hline
Myoglobin Urine Present & 0 & 0 & 0 & 0 & 1 & 2 & 70\\
\hline
Myoglobinaemia & 15 & 0 & 0 & 0 & 0 & 0 & 62\\
\hline
Myoglobinuria & 26 & 4 & 0 & 1 & 1 & 9 & 296\\
\hline
Myopathy & 849 & 64 & 45 & 145 & 377 & 544 & 6695\\
\hline
Myopathy Toxic & 31 & 0 & 0 & 1 & 2 & 21 & 457\\
\hline
Myositis & 219 & 8 & 10 & 28 & 62 & 141 & 7482\\
\hline
Necrotising Myositis & 279 & 0 & 0 & 2 & 10 & 52 & 278\\
\hline
Oliguria & 52 & 0 & 0 & 4 & 24 & 37 & 7590\\
\hline
Renal Failure & 534 & 26 & 11 & 69 & 225 & 195 & 250710\\
\hline
Renal Impairment & 390 & 52 & 11 & 37 & 161 & 169 & 103343\\
\hline
Renal Tubular Necrosis & 40 & 0 & 0 & 12 & 10 & 24 & 12762\\
\hline
Rhabdomyolysis & 2041 & 52 & 44 & 163 & 936 & 1376 & 31707\\
\hline
Tendon Discomfort & 9 & 0 & 0 & 3 & 22 & 10 & 794\\
\hline
Other Pt & 180699 & 4886 & 2845 & 20296 & 113960 & 76068 & 61724222\\
\hline
\end{tabular}}
\caption{ Statin 46}
\end{table}

\begin{table}
\centering
\resizebox{13cm}{!}{
\begin{tabular}{|lrrrrrrr|}
\hline
  & Atorvastatin & Fluvastatin & Lovastatin & Pravastatin & Rosuvastatin & Simvastatin & Other\\
\hline
Acute Kidney Injury & 1353 & 42 & 7 & 154 & 689 & 823 & 355651\\
\hline
Anuria & 71 & 0 & 0 & 2 & 43 & 62 & 10403\\
\hline
Blood Calcium Decreased & 14 & 2 & 0 & 0 & 110 & 17 & 15918\\
\hline
Blood Creatine Phosphokinase Abnormal & 34 & 0 & 0 & 0 & 8 & 11 & 261\\
\hline
Blood Creatine Phosphokinase Increased & 1175 & 125 & 32 & 200 & 562 & 768 & 23805\\
\hline
Blood Creatinine Abnormal & 27 & 0 & 0 & 0 & 5 & 3 & 3385\\
\hline
Blood Creatinine Increased & 227 & 10 & 0 & 17 & 210 & 97 & 74742\\
\hline
Chromaturia & 340 & 10 & 6 & 33 & 174 & 114 & 19294\\
\hline
Chronic Kidney Disease & 152 & 16 & 2 & 19 & 177 & 37 & 339179\\
\hline
Compartment Syndrome & 53 & 0 & 0 & 1 & 12 & 12 & 2644\\
\hline
Creatinine Renal Clearance Decreased & 6 & 0 & 0 & 2 & 124 & 6 & 7768\\
\hline
Diaphragm Muscle Weakness & 14 & 0 & 0 & 0 & 8 & 1 & 94\\
\hline
Electromyogram Abnormal & 2 & 0 & 0 & 2 & 0 & 0 & 132\\
\hline
End Stage Renal Disease & 30 & 0 & 0 & 0 & 19 & 6 & 97553\\
\hline
Glomerular Filtration Rate Abnormal & 8 & 0 & 0 & 0 & 1 & 0 & 1069\\
\hline
Glomerular Filtration Rate Decreased & 59 & 1 & 0 & 8 & 39 & 29 & 13190\\
\hline
Hypercreatininaemia & 0 & 0 & 0 & 0 & 8 & 0 & 648\\
\hline
Hypocalcaemia & 36 & 0 & 0 & 16 & 8 & 18 & 23102\\
\hline
Muscle Disorder & 291 & 2 & 7 & 21 & 191 & 87 & 7329\\
\hline
Muscle Enzyme Increased & 48 & 1 & 0 & 0 & 13 & 9 & 410\\
\hline
Muscle Fatigue & 85 & 0 & 2 & 16 & 30 & 39 & 4257\\
\hline
Muscle Haemorrhage & 24 & 0 & 0 & 5 & 13 & 4 & 3806\\
\hline
Muscle Necrosis & 68 & 2 & 0 & 1 & 10 & 20 & 662\\
\hline
Muscle Rupture & 181 & 25 & 0 & 61 & 36 & 120 & 3219\\
\hline
Muscular Weakness & 1857 & 45 & 31 & 152 & 808 & 859 & 111003\\
\hline
Musculoskeletal Discomfort & 137 & 18 & 15 & 25 & 187 & 93 & 19931\\
\hline
Musculoskeletal Disorder & 56 & 3 & 0 & 9 & 65 & 73 & 25881\\
\hline
Musculoskeletal Pain & 420 & 3 & 2 & 38 & 324 & 228 & 82576\\
\hline
Myalgia & 5362 & 341 & 151 & 939 & 2757 & 3216 & 143819\\
\hline
Myasthenic Syndrome & 1 & 0 & 9 & 0 & 3 & 6 & 643\\
\hline
Myoglobin Blood Increased & 71 & 4 & 0 & 4 & 16 & 39 & 1003\\
\hline
Myoglobinuria & 26 & 4 & 0 & 1 & 1 & 9 & 296\\
\hline
Myopathy & 849 & 64 & 45 & 145 & 377 & 544 & 6695\\
\hline
Myopathy Toxic & 31 & 0 & 0 & 1 & 2 & 21 & 457\\
\hline
Myositis & 219 & 8 & 10 & 28 & 62 & 141 & 7482\\
\hline
Necrotising Myositis & 279 & 0 & 0 & 2 & 10 & 52 & 278\\
\hline
Oliguria & 52 & 0 & 0 & 4 & 24 & 37 & 7590\\
\hline
Renal Failure & 534 & 26 & 11 & 69 & 225 & 195 & 250710\\
\hline
Renal Impairment & 390 & 52 & 11 & 37 & 161 & 169 & 103343\\
\hline
Renal Tubular Necrosis & 40 & 0 & 0 & 12 & 10 & 24 & 12762\\
\hline
Rhabdomyolysis & 2041 & 52 & 44 & 163 & 936 & 1376 & 31707\\
\hline
Tendon Discomfort & 9 & 0 & 0 & 3 & 22 & 10 & 794\\
\hline
Other Pt & 180699 & 4886 & 2845 & 20296 & 113960 & 76068 & 61724222\\
\hline
\end{tabular}}
\caption{ Statin 42}
\end{table}

\end{appendix}

\end{document}